\documentclass[10pt,reqno]{amsart}
\usepackage{latexsym}
\usepackage{amsmath}
\usepackage{cases}
\usepackage{amssymb, amscd, amsthm, multirow}
\usepackage[all]{xy}
\usepackage{comment}
\usepackage{bbm}
\usepackage{mathrsfs}
\usepackage{booktabs}
\usepackage{setspace}

\usepackage{threeparttable}
\usepackage{float}
\usepackage{color}

\usepackage{graphicx}

\newtheorem{corollary}{{Corollary}}
\newtheorem{theorem}{{Theorem}}
\newtheorem{definition}{{Definition}}
\newtheorem{lemma}{{Lemma}}
\newtheorem{remark}{{Remark}}
\newtheorem{proposition}{{Proposition}}

\newcommand{\bC}{\mathbb{C}}
\newcommand{\F}{\mathbb{F}}
\newcommand{\Q}{\mathbb{Q}}
\newcommand{\Z}{\mathbb{Z}}

\newcommand{\J}{\mathcal{J}}

\newcommand{\N}{\mathcal{N}}
\DeclareMathOperator{\Ker}{Ker}

\DeclareMathOperator{\Tr}{Tr}
\let\Im\relax\DeclareMathOperator{\Im}{Im}
\DeclareMathOperator{\wt}{wt}
\newcommand{\wF}{\widehat{\F_{2^n}^*}}
\newcommand{\ceil}[1]{\biggl\lceil#1\biggr\rceil}

\usepackage[shortlabels, inline]{enumitem}
\SetEnumerateShortLabel{a}{\textup{(\alph*)}}
\SetEnumerateShortLabel{A}{\textup{(\Alph*)}}
\SetEnumerateShortLabel{1}{\textup{(\arabic*)}}
\SetEnumerateShortLabel{i}{\textup{(\roman*)}}
\SetEnumerateShortLabel{I}{\textup{(\Roman*)}}

\begin{document}

\title[Binomial vectorial functions]{On quadratic binomial vectorial functions with maximal bent components}

\author{Xianhong Xie$^{1}$}
\address{$^1$School of Artificial Intelligence, Anhui Agricultural University, Hefei 230036, China}
\email{xianhxie@ahau.edu.cn}

\author{Yi Ouyang$^{2,3}$}
\address{$^2$School of Mathematical Sciences, University of Science and Technology of China, Hefei 230026, China}
\address{$^3$Hefei National Laboratory, Hefei 230088, China}
\email{yiouyang@ustc.edu.cn}

\author{Shenxing Zhang$^{4,5}$}
\address{$^4$School of Mathematics, Hefei University of Technology, Hefei 230601, China}
\address{$^5$State Key Laboratory of Cyberspace Security Defense, Institute of Information Engineering, Chinese Academy of Sciences, Beijing 100085, China}
%\email{hghu2005@ustc.edu.cn}
\email{zhangshenxing@hfut.edu.cn}
\thanks{Partially supported by NSFC (Grant No. 62402004), QNMP (Grant No. 2021ZD0302902), and State Key Laboratory of Cyberspace Security Defense (Grant No. 2025-MS-04)}

\begin{abstract}
Assume $n=2m\geq 2$ and let $F(x)=x^{d_1}+x^{d_2}$ be a binomial vectorial function over $\F_{2^n}$ possessing the maximal number (i.e. $2^n-2^m$) of bent components. Suppose the $2$-adic Hamming weights $\wt_2(d_1)$ and $\wt_2(d_2)$ are both at most $2$, we prove that $F(x)$ is affine equivalent to either $x^{2^m+1}$ or $x^{2^i}(x+x^{2^m})$, provided that
\[
	\ell(n):=\min_{\gamma:~\F_2(\gamma)=\F_{2^n}} \dim_{\F_2}\F_2[\sigma]\gamma >m,
\]
where  $\sigma$ is the Frobenius $(x\mapsto x^2)$ on $\F_{2^n}$, and $\gcd(d_1,d_2,2^m-1)>1$.
Under this condition, we also establish two bounds on the nonlinearity and the differential uniformity of $F$ by means of the cardinality of its image set.
\end{abstract}

\subjclass[2020]{11T71, 94A60, 06E30}
\keywords{Vectorial functions, Bent components, Walsh transform, Stickelberger's Theorem, Hamming weight}
\maketitle

\section{Introduction}%

In this paper, we fix a positive even integer $n=2m$.
Let $\F_{2^i}$ be the finite field of $2^i$ elements, $\sigma$ be the  Frobenius $(x\mapsto x^2)$ on $\F_{2^i}$ and $\Tr_{2^i/2}$ be the trace function from $\F_{2^i}$ to $\F_2$.
For a finite set $X$, we denote by $\#X$ its cardinality.

Suppose that $F: \F_{2^n}\rightarrow \F_{2^n}$ is a vectorial function.
For $a\in \F_{2^n}$, let $F_a$ be the component function
\begin{align*}
	F_a: \F_{2^n}\longrightarrow \F_2,\quad
	v\longmapsto \Tr_{2^n/2}\bigl(a\cdot F(v)\bigr).
\end{align*}
Let
\[
	S_F=\{a\in\F_{2^n}: F_a\ \text{is not bent}\}.
\]
In 2018, Pott et al. \cite{PPMB2018} proved that the cardinality $\#S_F\geq 2^m$, i.e., the number of bent components is at most $2^n-2^m$. Further results about vectorial functions and their bent components were obtained by \cite{anb,hu,mzhang,zheng,xie} and others.

Two fundamental questions arise in this subject: first, to classify all functions $F(x)$ that attain the maximal number of bent components; and second, to clarify key cryptographic properties---such as nonlinearity and differential uniformity---of such functions.

Research on functions attaining the maximal number of bent components has seen notable progress.
In 2023, Hu et al.\ \cite{hu} established that the monomials $x^{2^i(2^m+1)}$ are the only monomials over $\F_{2^n}$ possessing $2^n-2^m$ bent components.
For binomial vectorial functions of the form $F(x)=x^{d_1}+x^{d_2}$, Pott et al.\ \cite{PPMB2018} demonstrated that $F(x)$ achieves this bound when $d_1=2^i+1$ and $d_2=2^m+2^i\ (0\leq i\leq m-1)$.
More recently, Xie et al.\ \cite{xie} refined this analysis for the specific family $F(x)=x^{2^i+1}+x^{2^m+1}\ (0\leq i\leq m-1) $, proving it meets the bound if and only if $i=0$.
Furthermore, for general exponents $(d_1,d_2)$, computational evidence using Magma presented in \cite{xie} suggests that any such binomial $F(x)$ is affine equivalent to either $x^{2^m+1}$ or $x^{2^i}(x+x^{2^m})$ in order to have the maximal number of bent components.
However, a proof of this assertion remains elusive.

Anbar et al.\ \cite{anb} proved that the nonlinearity of a plateaued function with maximal number of bent components is at most $2^{n-1}-2^{\lfloor\frac{3n}{4}\rfloor}$. Xie et al.\ \cite{xie} presented examples of plateaued and non-plateaued vectorial functions attaining the upper bound. Whether this bound holds for non-plateaued functions remains an open question.

The main goal of this paper is to investigate properties of binomial functions $F(x)=x^{d_1}+x^{d_2}$ with maximal bent components, using Stickelberger's Theorem as developed in \cite{Ver, LL, phi, hu}.
Assuming that
\[
\ell(n):=\min_{\gamma:\,\F_2(\gamma)=\F_{2^n}}
\dim_{\F_2}\F_2[\sigma]\gamma>m, \quad  \gcd(d_1,d_2,2^m-1)>1,
\]
our contributions can be summarized as follows.
\smallskip

(A).
We show that $S_F=\F_{2^m}$ and $d_2-d_1\equiv 0\pmod{2^m-1}$ whenever $F$ has $2^n-2^m$ bent components. Moreover, $F$ is equivalent to $x^{2^m+1}$ if $\wt_2(d_1)=1$, and to $x^{2^i+1}+x^{2^i+2^m}$ if $\wt_2(d_1)=\wt_2(d_2)=2$. This generalizes the results in \cite[Theorem~9]{xie} and \cite[Theorem~4.16]{zheng}.
\smallskip

(B).
Using the cardinality of the image set, we establish theoretical bounds on the nonlinearity and differential uniformity of general functions $F$ over $\F_{2^n}$ satisfying $\ell(n)>m$, $\#S_F=2^m$, and $\sigma\circ F=F\circ \sigma$. If in addition $F(x)=x^{d_1}+x^{d_2}$, we prove that
\[
\#\Im(F)=\frac{(2^m-1)c}{s}+1,
\]
and
\begin{align*}
	\N_F&\leq 2^{n-1}-\frac{1}{2}\sqrt{\frac{2^{3m}}{T}\biggl(\frac{2^{3m}s}{s+(2^m-1)c}-2^{m+1}+1\biggr)},\\
	\delta_F&\geq
	\ceil{
		\frac{2^n}{\#\Delta}\biggl(\frac{2^ns}{s+(2^^m-1)c}-1\biggr)
	},
\end{align*}
where $\alpha$ is a generator of $\F_{2^n}^*$,
\[
s=\gcd(d_1,d_2,2^m-1),\qquad
c=\#\bigl\{F(\alpha^i)^{(2^m-1)/s}: i=1,2,\dots,2^m\bigr\},
\]
and
\[
\Delta=\{x+y : (x,y)\in\F_{2^n}\times\F_{2^n},\,x\neq y,\,F(x)=F(y)\}.
\]
\smallskip

The paper is organized as follows.
In \S~2, we provide some necessary preliminaries.
In \S~3, we establish further properties of binomial vectorial functions with the maximal number of bent components.
In \S~4, we derive equivalent forms of $F(x)$ for the cases where $\wt_2(d_1)=1$ or $\wt_2(d_1)=\wt_2(d_2)=2$.
In \S~5, we determine the cardinality of the image set of $F$, and then use it to derive bounds on its nonlinearity and differential uniformity.
In \S~6, we conclude the paper.

\section{Preliminaries}
We always denote $N=2^n-1$.

\subsection{The components of vectorial functions}
We recall some facts about vectorial functions.

\begin{definition}\label{def:vectorial-function}
	Suppose that $F: \F_{2^n}\rightarrow \F_{2^n}$ is a vectorial function.
	\begin{enumerate}[i]
		\item The component function $F_a$ of $F$ at $a\in \F_{2^n}$ is the Boolean function
		\begin{align*}
			F_a: \F_{2^n}&\longrightarrow \F_2\\
			v&\longmapsto \Tr_{2^n/2}\bigl(a\cdot F(v)\bigr).
		\end{align*}
		\item The Walsh transform of $F$ is
		\begin{equation*}
			W_F(a, \omega):=W_{F_a}(\omega)=\sum_{v\in \F_{2^n}}(-1)^{F_a(v)+\Tr_{2^n/2}(\omega v)},
		\end{equation*}
		where $a\in \F_{2^n}^*$, $\omega \in \F_{2^n}$.
		\item If $W_{F_a}(\omega)=\pm2^{\frac{n}{2}}$ for all $\omega\in\F_{2^n}$, then $F_a$ is called a bent component of $F$.
		Denote
		\[
			S_F:=\{a\in \F_{2^n}: F_a \text{ is not bent}\}.
		\]
		\item If $W_{F_a}(\omega)\in\{0,\pm2^{\frac{n+k}{2}}\}$ for all $\omega\in\F_{2^n}$, where $k\in\Z$ and $k\equiv n\bmod2$, then $F_a$ is called a $k$-plateaued component of $F$.
		If $F_a$ ($a\in\F_{2^n}^*$) are all plateaued, then $F$ is called plateaued.
		\item The nonlinearity of $F$ is the minimal nonlinearity among its component functions, i.e.,
		\[
			\N_F:=2^{n-1}-\frac{1}{2}\max_{a\in\F_{2^n}^*, \omega\in\F_{2^n}}|W_{F_a}(\omega)|.
		\]
		\item The differential uniformity of $F$ is
		\[
			\delta_F:=\max_{a\in\F_{2^n}^*,b\in\F_{2^n}}\#\{x\in\F_{2^n}: F(x+a)+F(x)=b\}.
		\]
	\end{enumerate}
\end{definition}

Pott et al.\ \cite{PPMB2018} proved the following result.

\begin{theorem}\label{thm:SF-vector-space}
	$\#S_F\geq 2^m$.
	Moreover, $\#S_F=2^m$ if and only if $S_F$ is an $m$-dimensional $\F_2$-subspace of $\F_{2^n}$.
\end{theorem}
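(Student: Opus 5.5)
The plan is to count fourth powers of Walsh coefficients in two ways and compare. Write $\chi(x)=(-1)^{\Tr_{2^n/2}(x)}$, and set $W_F(0,\omega)=\sum_v\chi(\omega v)$ so that the trivial component is included.

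\textbf{Step 1: the identity.} Expanding $\sum_{a,\omega\in\F_{2^n}} W_F(a,\omega)^4$ gives
\[
\sum_{a,\omega}W_F(a,\omega)^4 \;=\; 2^{2n}\,\#\{(x,y,z,w): x+y+z+w=0,\ F(x)+F(y)+F(z)+F(w)=0\}.
\]
The right-hand side is at least $2^{2n}\cdot\#\{x+y+z+w=0\}=2^{2n}\cdot 2^{3n}$. The reason is the usual second-moment argument: the number of solutions equals $\sum_{(a,b)}\#\{(x,y): x+y=a,\ F(x)+F(y)=b\}^2$, and Cauchy--Schwarz over each fixed $a$ gives at least $2^{2n}/2^n$ per $a$. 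This yields
\[
\sum_{a,\omega} W_F(a,\omega)^4 \;\ge\; 2^{5n}.
\]

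\textbf{Step 2: the other side.} For $a=0$ we have $W_F(0,\omega)=2^n\delta_{\omega,0}$, contributing $2^{4n}$. Each bent $a$ contributes $2^n\cdot 2^{2n}=2^{3n}$. For any non-bent $a\ne0$, Parseval gives $\sum_\omega W^2=2^{2n}$ and $\max|W|\le 2^n$, so $\sum_\omega W^4\le 2^{2n}\max W^2$. The cruder bound I will use instead is the following. Let $B$ be the number of bent components and $K=\#S_F$, which includes $a=0$. The identity gives
\[
2^{5n}\le 2^{4n}+(2^n-K)2^{3n}+\sum_{a\in S_F\setminus\{0\}}\sum_\omega W^4 .
\]
Now $\sum_\omega W^4\le 2^{2n}\cdot 2^{2n}=2^{4n}$, but this is too weak unless it is refined. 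The clean route is Pott et al.'s original one: for a bent $F_a$ one also has $\sum_\omega W^4=2^{3n}$ exactly, and the four-solution count is at least $2^{3n}+$ the diagonal excess. I therefore expect the real argument to avoid fourth moments. Instead it uses the second-order derivative: $F_a$ is bent iff $D_uF_a$ is balanced for all $u\ne0$.

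\textbf{Step 3: the argument actually used.} Fix $u\ne0$ and consider the map $a\mapsto \sum_x(-1)^{\Tr(a(F(x)+F(x+u)))}$. Summing over all $a$ gives $2^n\cdot\#\{x: F(x)=F(x+u)\}$. The term $a=0$ contributes $2^n$, and every bent $a$ contributes $0$. Hence
\[
2^n N_u=2^n+\sum_{a\in S_F\setminus\{0\}}\sum_x(-1)^{D_uF_a(x)},\qquad N_u:=\#\{x:D_uF(x)=0\}.
\]
Summing over $u\ne0$ gives $\sum_{a\in S_F\setminus\{0\}}\bigl(\sum_u\sum_x (-1)^{D_uF_a(x)}-2^n\bigr)$. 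Using $\sum_{u,x}(-1)^{D_uF_a(x)}=2^{-n}\sum_\omega W_{F_a}(\omega)^4\ge 2^{-n}\bigl(\sum_\omega W^2\bigr)^2/2^n=2^{2n}/2^n\cdot 2^n$, I get
\[
2^n\sum_{u\ne0}N_u\ \ge\ 2^n(2^n-1)+(K-1)(2^{2n}-2^n)\ge 0,
\]
which again only gives a lower bound. The complementary upper bound comes from $\sum_{u\ne0}N_u=\#\{x\ne y:F(x)=F(y)\}$, and I need to bring in $K$ through an exact identity. Taking the sum over all $(u,x)$ including $u=0$, and using $\sum_\omega W_{F_a}^2=2^{2n}$, the quantity $\sum_a\sum_\omega W_{F_a}(\omega)^2\,\cdot$ is fixed. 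The cleanest version is
\[
\sum_{a}\Bigl(\sum_\omega W_{F_a}(\omega)^4-2^{3n}\Bigr)=2^{2n}\bigl(\#\{\text{4-tuples}\}-2^{3n}\bigr)\ \ge\ 0,
\]
combined with the fact that for each nonbent $a$, $\sum_\omega W^4-2^{3n}\le 2^{4n}-2^{3n}$. Turning this into $K\ge 2^{n/2}$ requires knowing the 4-tuple count is at least $2^{3n}+2^{3n}(2^{n/2}-1)$-ish.

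\textbf{Main obstacle and equality case.} The main obstacle is this quantitative step. I would first try the standard trick: restrict $F$ to a complement and use that the bent set $B=\F_{2^n}\setminus S_F$ cannot contain a coset structure forcing $\#B>2^n-2^{n/2}$, via the fact that if $a,b$ are bent then no $m+1$-dimensional space avoids $S_F$. For the equality case, I would show that $S_F$ is closed under addition. If $a,b\in S_F$ with $a+b\notin S_F$, the tight counting above forces a strict inequality, contradicting the equality; so $S_F$ is a subspace of size $2^m$. Conversely, if $S_F$ is an $m$-dimensional subspace, then $\#S_F=2^m$ trivially.
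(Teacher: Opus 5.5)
Your proposal does not contain a proof: the inequality $\#S_F\ge 2^m$ is never derived, and the equality case is only asserted (``the tight counting above forces a strict inequality'' refers to a counting you never established). Step~1 is also false as written. The tuples $(x,y,z,w)$ with $x+y+z+w=0$ and $F(x)+F(y)+F(z)+F(w)=0$ form a \emph{subset} of the $2^{3n}$ tuples with $x+y+z+w=0$, so their number cannot be bounded below by $2^{3n}$. Your Cauchy--Schwarz step gives $2^n$ for each fixed value of $x+y$, hence only $2^{2n}$ in total. That is, $\sum_{a,\omega}W_F(a,\omega)^4\ge 2^{4n}$, which the term $a=0$ supplies by itself; for an APN function the total is exactly $3\cdot 2^{4n}-2^{3n+1}$, nowhere near $2^{5n}$. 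More fundamentally, moment estimates of this type cannot reach the threshold $2^m$. Even inserting all $3\cdot 2^{2n}-2^{n+1}$ trivial solutions and the bound $\sum_\omega W_{F_a}(\omega)^4\le 2^{4n}$ for non-bent $a\neq0$, the comparison only gives $\#S_F\ge 2$.

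The missing idea is a $2$-adic divisibility argument applied to \emph{subspaces} of components (Nyberg's bound for vectorial bent functions), followed by a covering argument. This is the standard route to the result of Pott et al.\ \cite{PPMB2018}, which the paper quotes without proof. Let $U\subseteq\F_{2^n}$ be an $\F_2$-subspace of dimension $k$ with $F_a$ bent for all $a\in U\setminus\{0\}$, and let $U^{\perp}=\{y:\Tr_{2^n/2}(ay)=0 \text{ for all } a\in U\}$. Then for $\omega\neq0$
\[
\sum_{a\in U}W_{F_a}(\omega)=2^k\sum_{x:\,F(x)\in U^{\perp}}(-1)^{\Tr_{2^n/2}(\omega x)}.
\]
The left side is $W_{F_0}(\omega)=0$ plus $2^k-1$ terms $\pm2^m$, hence $2^m$ times an odd integer. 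So $k\le m$, i.e.\ every $(m+1)$-dimensional subspace meets $S_F\setminus\{0\}$. If $\#S_F<2^m$, this is contradicted greedily. Given $U_k$ with $U_k\cap S_F=\{0\}$ and $k\le m$, one has $\#(S_F+U_k)\le\#S_F\cdot2^k<2^{m+k}\le2^n$. So there is $v\notin S_F+U_k$, and $U_k+\F_2v$ is $(k+1)$-dimensional and still meets $S_F$ only in $0$. If $\#S_F=2^m$, then $S_F\setminus\{0\}$ is a set of $2^m-1$ points of $PG(n-1,2)$ meeting every subspace of vector dimension $m+1$. The Bose--Burton theorem says a point set of this size with this property is a subspace of vector dimension $m$, which gives the nontrivial direction of the equivalence. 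Your passing remark that ``no $(m+1)$-dimensional space avoids $S_F$'' is the right fact. However, you tie it to an irrelevant hypothesis (``if $a,b$ are bent'') and neither prove it nor carry out the covering step.
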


We say that $F$ has maximal number of bent components if $\#S_F= 2^m$, i.e. the number of bent components of
	$F$ is exactly
$2^n-2^m$. A natural problem is to determine all vectorial functions with maximal number of bent components.

\begin{theorem}\label{thm:hu}
	If $F(x)=x^{d_1}$ has maximal number of bent components, then
	\begin{enumerate}[1]
		\item $($Zheng et al.\ \cite{zheng}$)$ $S_F=\F_{2^m}$ and $(2^m+1)\mid d_1$;
		\item $($Hu et al.\ \cite{hu}$)$ $d_1=(2^m+1)s$, where $s\in\{1,2,2^2,\ldots,2^{m-1}\}$.
	\end{enumerate}
\end{theorem}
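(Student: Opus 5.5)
Theorem~\ref{thm:hu} is a cited result (part (1) from Zheng et al., part (2) from Hu et al.), so here I only outline how one would reprove it. The two parts need different tools: (1) is structural, using Theorem~\ref{thm:SF-vector-space}, and (2) is arithmetic, using Stickelberger's theorem.

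\emph{Part (1).} By Theorem~\ref{thm:SF-vector-space}, $S_F$ is an $m$-dimensional $\F_2$-subspace. For a monomial, $F_{a\lambda^{d_1}}(v)=F_a(\lambda v)$ for every $\lambda\in\F_{2^n}^*$. Precomposition with the linear bijection $v\mapsto\lambda v$ preserves bentness, so $S_F$ is stable under multiplication by the subgroup $G=\{\lambda^{d_1}\}$ of order $N/\gcd(d_1,N)$. Also $S_F^\sigma=S_F$, since $F_{a^2}(v^2)=F_a(v)$. An additive subgroup of order $2^m$ that is stable under $G$ is a vector space over the field $\F_2[G]$. Hence $\F_2[G]$ has at most $2^m$ elements, which forces $G\subseteq\F_{2^m}^*$, i.e.\ $(2^m+1)\mid d_1$. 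Now $F$ takes values in $\F_{2^m}$, so $F_a$ depends only on $\Tr_{2^n/2^m}(a)$. In particular $F_a\equiv 0$, which is not bent, for every $a$ in the kernel of that relative trace, namely $a\in\omega\F_{2^m}$ with $\omega^{2^m-1}=1$ and $\omega\ne 1$. Cardinality then gives $S_F=\omega\F_{2^m}$. After the harmless rescaling $a\mapsto\omega^{-1}a$, or with the normalisation of Zheng et al., this is the stated $S_F=\F_{2^m}$.

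\emph{Part (2).} Write $d_1=(2^m+1)t$. We may replace $t$ by $2^jt$ modulo $2^m-1$, since this changes $F$ only by a Frobenius twist. The components are $\Tr_{2^m/2}(b\,y^t)$ with $y=x^{2^m+1}$ ranging over $\F_{2^m}$. Each nonzero $y$ has exactly $2^m+1$ preimages, and $y=0$ has one. This gives
\[
W_{F_a}(0)=1+(2^m+1)\bigl(\textstyle\sum_{y\in\F_{2^m}^*}(-1)^{\Tr(by^t)}\bigr).
\]
Bentness forces $W_{F_a}(0)=\pm2^m$ for all $2^n-2^m$ such $a$. Using the full Walsh spectrum, one expresses $\sum_\omega W_{F_a}(\omega)^4$ through Gauss sums $g(\chi)$ of $\F_{2^n}$ attached to characters of order dividing $N/\gcd(d_1,N)$.

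The key step is the $2$-adic valuation. Bentness requires every $W_{F_a}(\omega)$ to be divisible by exactly $2^{m}$. Expanding $W_{F_a}$ as $\frac{1}{N}\sum_\chi g(\bar\chi)g(\chi^{d_1})\cdots$ and applying Stickelberger's theorem, $v_2(g(\chi^{-j}))=\wt_2(j)/n$ times the normalisation. This yields the condition $\min_j\bigl(\wt_2(j)+\wt_2(-jd_1 \bmod N)\bigr)=n/2+\text{const}$ over the relevant $j$. Evaluating this at $j=1$ or a suitable $j$ forces $\wt_2(t \bmod 2^m-1)=1$, i.e.\ $t$ is a power of $2$.

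The main obstacle is this last step. One needs a clean $j$ that shows any $t$ with $\wt_2(t)\ge2$ produces a Walsh value of too small $2$-adic valuation. I would first try $j$ equal to the $2$-adic inverse pattern of $t$ modulo $2^m-1$, mirroring Hu et al.'s minimisation argument.
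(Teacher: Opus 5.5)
The paper does not prove this theorem; it quotes part (1) from Zheng et al.\ and part (2) from Hu et al. Your outline therefore has to stand on its own.

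\textbf{Part (1).} This essentially works, after two repairs.
\begin{itemize}
\item The bound $\#\F_2[G]\le 2^m$ alone does not give $G\subseteq\F_{2^m}^*$. For $m=3$, $\F_4$ has $4\le 8$ elements but $\F_4\not\subseteq\F_8$. Use instead that $S_F$ is a vector space over $\F_2[G]=\F_{2^k}$ of size $2^m$, so $k\mid m$.
\item In characteristic $2$ the kernel of $a\mapsto a+a^{2^m}$ is exactly $\F_{2^m}$. So $\F_{2^m}\subseteq S_F$, and $S_F=\F_{2^m}$ follows by cardinality. No coset $\omega\F_{2^m}$ or rescaling is involved. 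A rescaling would not be harmless anyway, since $S_F$ is a fixed set.
\end{itemize}

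\textbf{Part (2).} This has a genuine gap, and the step you flag cannot be completed as proposed: there need not be any $j$ with $\wt_2(j)+\wt_2(-d_1j)<m$.

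Take $m=3$ and $d_1=27=9\cdot 3$. Then $\wt_2(-27j)=2\wt_2\bigl((-3j)\bmod 7\bigr)$, and the minimum $\nu_{27}$ is exactly $3=m$. It is attained at $j=2^i$ and at the eight multiples of $7$. Yet $x^{27}$ has no bent component: every component is $0$ or has degree $4>3$. The obstruction there is a Walsh value of too \emph{large} valuation. For $a\notin\F_8$, the leading polynomial from the monomial version of Theorem~\ref{thm:th3} is $g_a(x)=1+\Tr_{2^3/2}\bigl(\Tr_{2^6/2^3}(a)\,x^{36}\bigr)$. This vanishes at some $b$, so $2^4\mid W_{F_a}(b)$. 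In the other direction, $\nu_{d_1}<m$ alone yields no Walsh value of valuation $<m$ unless cancellation in $g_a$ is excluded.

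The missing idea is the degree-counting argument from the proof of Theorem~\ref{thm:min-wt-is-m}. Apply it to the coefficients $\sum a^{j}$ of $g_a$, using $S_F=\F_{2^m}$. Bentness of all $F_a$ with $a\notin\F_{2^m}$ then forces two things.
\begin{enumerate}
\item[(i)] $\nu_{d_1}=m$. Otherwise $g_a\equiv 0$, so each nonzero coefficient vanishes on $\F_{2^n}\setminus\F_{2^m}$. But its exponents satisfy $\wt_2(j)<m$, so its degree is $<2^n-2^m$.
\item[(ii)] $d_1j\equiv 0\pmod N$ for every $j\in\J_{d_1}$. Now $g_a\equiv 1$, so a nonzero coefficient of $x^e$ with $e\neq 0$ has degree $\le 2^n-2^m$ and vanishes on $\F_{2^n}\setminus\F_{2^m}$. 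It would then have to equal $\prod_{a\notin\F_{2^m}}(x-a)$, whose constant term is $1$, while all its exponents are positive.
\end{enumerate}

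With these two facts your approach closes:
\begin{itemize}
\item Your own formula for $W_{F_a}(0)$ gives $\sum_{y\in\F_{2^m}^*}(-1)^{\Tr(cy^t)}=-1$ for every $c\neq 0$. So $y\mapsto y^t$ permutes $\F_{2^m}$, and $t$ has an inverse $t'$ modulo $2^m-1$.
\item Your suggested $j:=(-t')\bmod(2^m-1)$ then satisfies $(-d_1j)_N=2^m+1$ and $\wt_2(j)+\wt_2(-d_1j)=m+2-\wt_2(t')$.
\item If $t$ is not a power of $2$ modulo $2^m-1$, this value is $\le m$ while $d_1j\not\equiv 0\pmod N$, contradicting (i) or (ii).
\end{itemize}
Without (i) and (ii), the argument for part (2) does not go through.
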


For the nonlinearity of plateaued functions with maximal number of bent components, we know
\begin{theorem}\label{platea}
	If $F$ is a plateaued function and $\#S_F=2^m$, then
	\[
		\N_F\leq2^{n-1}-2^{\lfloor\frac{3n}{4}\rfloor}.
	\]
\end{theorem}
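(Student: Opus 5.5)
We prove the bound of Theorem~\ref{platea} by combining the Parseval identity for the non-bent components with the plateaued structure; recall $n=2m$ and $\#S_F=2^m$. By Theorem~\ref{thm:SF-vector-space}, $S_F$ is an $m$-dimensional $\F_2$-subspace, so there are exactly $2^m-1$ nonzero $a\in S_F$. Each such $F_a$ is plateaued but not bent, hence $k$-plateaued with $k\ge 2$ (since $k\equiv n \bmod 2$ and $k\ne 0$). Its Walsh values therefore lie in $\{0,\pm 2^{(n+k)/2}\}$ with $(n+k)/2\ge m+1$, so $\max_\omega|W_{F_a}(\omega)|\ge 2^{m+1}$. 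This already gives $\N_F\le 2^{n-1}-2^m$. To reach $2^{\lfloor 3n/4\rfloor}$ we must show that some nonzero $a\in S_F$ has large amplitude, which requires a global counting argument.

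The key identity is the second moment taken over the whole subspace $S_F$. For an $\F_2$-subspace $V$ of dimension $m$,
\[
\sum_{a\in V}\sum_{\omega}W_{F_a}(\omega)^2\ \text{ and }\ \sum_{a\in V}W_{F_a}(\omega)^2
\]
can be evaluated via character orthogonality. Expanding, we get
\[
\sum_{a\in V}W_{F_a}(\omega)^2=\sum_{x,y}(-1)^{\Tr(\omega(x+y))}\sum_{a\in V}(-1)^{\Tr(a(F(x)+F(y)))}=2^m\sum_{F(x)+F(y)\in V^\perp}(-1)^{\Tr(\omega(x+y))}.
\]
Summing instead the fourth powers, $\sum_{a\in V}\sum_\omega W_{F_a}(\omega)^4=2^{n+m}\,\#\{(x,y,z,w):x+y+z+w=0,\ F(x)+F(y)+F(z)+F(w)\in V^\perp\}$. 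This count is at least the number of trivial solutions, roughly $3\cdot 2^{2n}$ restricted suitably. A cleaner route is to use the fact that for a $k$-plateaued $f$ one has $\sum_\omega W_f^4=2^{n+k}\sum_\omega W_f^2=2^{2n+n+k}$. Hence
\[
\sum_{a\in V\setminus\{0\}}2^{3n+k_a}=2^{n+m}\,\#\{\cdots\}-2^{4n}.
\]

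We then lower-bound the solution count. We may take $x=y$ (so that $z=w$), together with the permutations of such pairs: this gives $3\cdot 2^{2n}-2\cdot 2^n$ quadruples, all of which satisfy the condition automatically. We also use the quadruples where $F(x)+F(y)+F(x+u)+F(y+u)\in V^\perp$ for nontrivial $u$. The bent components $a\notin V$ contribute exactly $2^{4n}$ each to the full fourth-moment sum. Comparing with the full-space identity $\sum_{a\in\F_{2^n}}\sum_\omega W^4=2^{2n}\#\{x+y+z+w=0,\ F\text{-sum}=0\}$ should pin down $\sum_{a\in V^*}2^{k_a}\ge c\cdot 2^{n}$, roughly. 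Averaging over the $2^m-1$ elements then forces some $k_a\ge n-m=m$ up to rounding, i.e.\ $(n+k_a)/2\ge \lfloor 3n/4\rfloor$ after using $k_a\equiv 0 \pmod 2$. This yields $\max|W|\ge 2^{\lfloor 3n/4\rfloor+1}$ and the stated bound.

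The main obstacle is the lower bound for the solution count restricted to $V^\perp$. Alternatively, one can use the known fact (Pott et al.) that $\#S_F=2^m$ forces $D_uF_a$ to be balanced in a specific way, namely $F_a$ for $a\in V$ having linear structures exactly in $V^\perp$-related directions. I would first try to show that for each $a\in V^*$ the linear space of $F_a$ has dimension $k_a$ and that the union of these linear spaces over $V^*$ covers $\F_{2^n}^*$, since for each $u\neq 0$ the map $a\mapsto$ (constancy of $D_uF_a$) must hit $V$ in a subspace of dimension at least $m-\dots$. This would give $\sum_{a\in V^*}(2^{k_a}-1)\ge 2^n-1$, whence $\max_a 2^{k_a}\ge (2^n-1)/(2^m-1)+1>2^m$. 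With $k_a$ even, this gives $k_a\ge m$ when $m$ is even and $k_a\ge m+1$ when $m$ is odd, which matches $\lfloor 3n/4\rfloor$ in both parities.
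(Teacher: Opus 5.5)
There is a genuine gap. The moment identity you actually set up, over $V=S_F$ with the $V^\perp$ condition, is vacuous if you only count trivial quadruples. It gives $\sum_{a\in V\setminus\{0\}}2^{3n+k_a}\ge 2^{n+m}(3\cdot 2^{2n}-2\cdot 2^n)-2^{4n}$, which is negative once $m\ge 2$.

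The full-space comparison you only gesture at (``should pin down \dots\ roughly'') is exactly the missing step. It is the argument the paper reproduces in Theorem~\ref{thm:bd1} via Lemma~\ref{lemma:anb1}, and it must be done with the right contributions:
\begin{itemize}
\item a bent $F_a$ has $\sum_\omega W_{F_a}(\omega)^4=2^{3n}$, not $2^{4n}$ (with $2^{4n}$ the comparison collapses);
\item trivial quadruples give $\sum_{a\ne 0}\sum_\omega W_{F_a}(\omega)^4\ge 2^{3n+1}(2^n-1)$.
\end{itemize}
Subtracting the $2^n-2^m$ bent terms gives
\[
\sum_{a\in S_F\setminus\{0\}}2^{k_a}\ \ge\ 2^n+2^m-2=(2^m-1)(2^m+2),
\]
hence some $2^{k_a}\ge 2^m+2$.

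Your threshold is also off by one. The target needs $2^{(n+k)/2}\ge 2^{\lfloor 3n/4\rfloor+1}$, i.e.\ $k\ge m+1$. By evenness this means $k\ge 2\lfloor m/2\rfloor+2$, so $k\ge m+2$ when $m$ is even. Your ``$k_a\ge n-m=m$ up to rounding'', ``$(n+k_a)/2\ge\lfloor 3n/4\rfloor$'' and ``$k_a\ge m$ when $m$ is even'' only yield $\N_F\le 2^{n-1}-2^{\lfloor 3n/4\rfloor-1}$ for even $m$. You must use the strict inequality $2^{k_a}>2^m$, which you do write down.

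The linear-structure alternative does not prove the theorem as stated. A $k$-plateaued function need not be partially bent, so outside the quadratic case its space of linear structures can be far smaller than $2^{k}$, even trivial. Here $F$ is only assumed plateaued.

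For quadratic $F$ the route does work, and your covering claim is easy:
\begin{itemize}
\item for $u\neq 0$, the map $x\mapsto F(x+u)+F(x)+F(u)+F(0)$ is $\F_2$-linear with $u$ in its kernel, so its image lies in a hyperplane;
\item hence some $a\neq 0$ annihilates that image under the trace form;
\item then $u$ is a linear structure of $F_a$, so $F_a$ is not bent and $a\in S_F$.
\end{itemize}
This gives $\sum_{a\in S_F\setminus\{0\}}(2^{k_a}-1)\ge 2^n-1$, the same inequality as the fourth-moment route. But it covers only quadratic $F$ (such as the binomials studied in this paper), not arbitrary plateaued $F$.
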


\subsection{Gauss sums and Stickelberger's Theorem}

Let $\Q_2$ denote the field of $2$-adic numbers and $\overline{\Q}_2$ be a fixed algebraic closure of $\Q_2$.
We regard $\overline{\Q}_2$ as a subfield of $\bC$, the field of complex numbers.

Let $\xi$ be a primitive $N$-th root of unity in $\overline{\Q}_2$ (recall $N=2^n-1$).
The algebraic extension $\Q_2(\xi)/\Q_2$ is unramified of degree $n$.
We identify $\F_{2^n}$ with the residue field of $\Q_2(\xi)$ that is $\Z_2[\xi]/(2)$.
For any $a\in \F_{2^n}$, there exists a canonical lifting $\omega(a) \in \Z_2[\xi]$ such that $\omega(ab)=\omega(a)\omega(b)$ and
\begin{equation} \label{eq:tech-lift}
	\omega(a)\bmod2=a,\ a\in\F_{2^n}.
\end{equation}
The character $\omega$ is called the Teichm\"{u}ller character of $\F_{2^n}$.
The group $\wF$ of multiplicative characters of $\F_{2^n}^*$ is a cyclic group of order $N$ generated by $\omega$:
\[
	\wF=\{\chi:\ \F_{2^n} \rightarrow \bC \}
	=\{\omega^j:\ 0\leq j\leq N-1\}.
\]
As is customary, the definition domain of a character $\chi$ is extended to $\F_{2^n}$ by setting $\chi(0)=0$ if $\chi$ is nontrivial and $\chi(0)=1$ if $\chi=\omega^0$ is trivial.

\begin{definition}
	For $\chi\in\wF$, the Gauss sum $G(\chi)$ over $\F_{2^n}$ is defined by
	\[
		G(\chi)=\sum_{x\in\F^*_{2^n}}\psi(x)\chi(x),
	\]
	where $\psi(x)=(-1)^{\Tr_{2^n/2}(x)}$ is the canonical additive character of $\F_{2^n}$.
\end{definition}

Obviously, $G(\chi)=-1$ if $\chi$ is trivial, and $G(\chi) G(\chi^{-1})=\chi(-1) 2^n$ if $\chi$ is nontrivial.
For any $x\in\F_{2^n}^*$, applying the inverse Fourier transform, we obtain
\begin{equation}\label{eq:fourier}
	\psi(x)=\frac{1}{N}\sum_{\chi\in\wF}G(\chi) {\chi^{-1}}(x)
	=\frac{1}{N}\sum_{j=0}^{N-1} G(\omega^{-j})\omega^{j}(x).
\end{equation}

We identify $\Z_N$ with the set $\{0,1,\cdots, N-1\}$.
For any integer $j$, let $j_N=j\bmod{N}\in\Z_N$ be its minimal non-negative residue.
For any positive integer $j$, let $\wt_2(j)$ denote its $2$-adic Hamming weight, namely
\[
	\wt_2(j)=\wt_2(j_N)=\sum_{i=0}^{n-1} c_i,\ \text{where}\
	j_N=\sum_{i=0}^{n-1} c_i2^i
\]
is the binary representation of $j_N=j\bmod{N}$.

Let $\bar{b}$ denote the bit-complement of $b\in\{0,1\}$, i.e., $\bar{b}=1-b$. Then
\[
(-j)_N=\sum_{i=0}^{n-1}\bar{c_i}2^i\quad \text{and}\quad \wt_2(-j)=\sum_{i=0}^{n-1}\bar{c_i}=n- \wt_2(j).
\]

The following theorem of Stickelberger is a well-known result in algebraic number theory and is very useful for analyzing the exponential sums.

\begin{theorem}[\cite{Sti}]\label{stick}
	For any $0\leq i<N$,
	\[
		G(\omega^{-i})\equiv2^{\wt_2(i)}\mod{2^{\wt_2(i)+1}}.
	\]
\end{theorem}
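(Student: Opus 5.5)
The statement just given is Stickelberger's congruence for Gauss sums over $\F_{2^n}$ in the characteristic-$2$ case. I would prove it with the Gross--Koblitz formula, or more elementarily with Dwork's splitting function.

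\textbf{Step 1 (lift the additive character).} Let $\pi$ be a uniformizer of $\Q_2(\zeta_2)=\Q_2$ with $\pi^{p-1}=-p$. For $p=2$ this gives $\pi=-2$, and $\psi(x)=(-1)^{\Tr(x)}$ takes values in $\Z$. I would write $\psi(x)=\theta(\omega(x))\theta(\omega(x)^2)\cdots\theta(\omega(x)^{2^{n-1}})$, where $\theta(t)=\sum_k \lambda_k t^k$ is a Dwork-type splitting function with $\lambda_0=1$ and $\lambda_1=\pi$ up to a unit. Since $\Tr(x)=\sum_j x^{2^j}$ and $(-1)^y$ depends only on $y\bmod 2$, a naive choice for $p=2$ is $\theta(t)=1-2t$ on Teichm\"uller lifts, because $(-1)^b=1-2b$ for $b\in\{0,1\}$. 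However, $\omega(x)$ is not in $\{0,1\}$, so I would instead use the Artin--Hasse exponential to build a genuine splitting function with $\lambda_k\equiv \pi^k/k! \pmod{\text{higher order}}$ for $k<p$.

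\textbf{Step 2 (expand the Gauss sum).} Substitute $\psi(x)=\prod_j\theta(\omega(x)^{2^j})$ into $G(\omega^{-i})=\sum_x\psi(x)\omega^{-i}(x)$. Using orthogonality, $\sum_x\omega(x)^{k-i}$ equals $N$ if $k\equiv i\pmod N$ and $0$ otherwise. This gives
\[
G(\omega^{-i})=N\sum_{(k_0,\dots,k_{n-1})} \prod_j\lambda_{k_j},
\]
where the sum runs over all tuples with $\sum_j k_j2^j\equiv i \pmod N$.

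\textbf{Step 3 (valuation count).} Since $v_2(\lambda_k)\ge k\,v_2(\pi)-v_2(k!)$, the term of minimal valuation is the one with all $k_j\in\{0,1\}$, that is, $k_j=c_j$, the binary digits of $i$. Its valuation is exactly $\wt_2(i)$, with leading coefficient $\pi^{\wt_2(i)}$ times a unit. Every other tuple has strictly larger valuation. This follows from the carry argument: replacing a digit pattern by one with a carry increases $\sum_j k_j$ faster than $v_2(k_j!)$ compensates, because $v_2(k!)=k-\wt_2(k)$.

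\textbf{Step 4 (pin down the unit).} Since $N\equiv -1$ and $\pi=-2$, the leading term is $-(-2)^{\wt_2(i)}$ times a unit; I would keep track of that unit. To obtain exactly $2^{\wt_2(i)}\bmod 2^{\wt_2(i)+1}$, note that every unit of $\Z_2[\xi]$ is $\equiv 1$ modulo $2$ only if it is a $1$-unit. So I would in fact just use that $G(\omega^{-i})/2^{\wt_2(i)}$ is a unit whose reduction mod $2$ lies in $\F_{2^n}^*$. The precise Stickelberger statement identifies this reduction as $1$ via the Gross--Koblitz formula: $G(\omega^{-i})=-\pi^{\wt_2(i)}\prod_j\Gamma_2(\cdot)$, and the values $\Gamma_2(\cdot)$ are $\pm1$ modulo $2$.

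\textbf{Main obstacle.} The hardest step is Step 4, namely showing that the residue of the unit is exactly $1$ rather than an arbitrary element of $\F_{2^n}^*$. I would handle it by citing Gross--Koblitz. Alternatively, I would observe that $G(\omega^{-i})$ is fixed by the Galois action $\xi\mapsto\xi^2$ composed with $i\mapsto 2i$. Its unit part therefore reduces into $\F_2^*=\{1\}$ when the normalization is taken in $\Z_2$, and this is automatic because $\pi\in\Z_2$. In any case, the result is classical, and in the paper it would be cited as \cite{Sti}.
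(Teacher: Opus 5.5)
The paper does not prove this statement; it simply quotes Stickelberger's theorem. Your route (splitting function, orthogonality, valuation count) is a standard way to prove it, but Step~3 as written has a genuine gap. The bound you use, $v_2(\lambda_k)\ge k\,v_2(\pi)-v_2(k!)$, equals $\wt_2(k)$ because $v_2(k!)=k-\wt_2(k)$. It therefore only gives $v_2\bigl(\prod_j\lambda_{k_j}\bigr)\ge\sum_j\wt_2(k_j)$, and a carry gains nothing: a $2$ in position $j$ receives the same bound as a $1$ in position $j+1$. For instance, with $i=2$ the tuple $(k_0,k_1,\dots)=(2,0,\dots,0)$ has $\sum_jk_j2^j=i$, and your bound gives only $v_2(\lambda_2)\ge1=\wt_2(i)$, which is exactly the valuation of the digit term. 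So uniqueness of the leading term, which is the whole content of the theorem, does not follow. Your carry argument has the inequality the wrong way round. For Dwork's $\theta(t)=\exp(-2t+2t^2)$ one computes $\lambda_2=4$ and $\lambda_4=20/3$. Strictness does hold there, but only through cancellations you would have to prove, and the clean bound $v_2(\lambda_k)\ge k$ fails for that function.

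What is missing is a splitting function with $v_2(\lambda_k)\ge k$. Take $\theta(t)=E(\gamma t)$, where $E(x)=\exp\bigl(\sum_{k\ge0}x^{2^k}/2^k\bigr)\in\Z_{(2)}[[x]]$ is the Artin--Hasse exponential and $\gamma\in\Q_2$ is the root of $\sum_{k\ge0}x^{2^k}/2^k$ with $v_2(\gamma)=1$. This is a splitting function in Dwork's sense, with $\lambda_k=e_k\gamma^k$, $e_k\in\Z_{(2)}$ and $e_0=e_1=1$. If $\sum_jk_j2^j\equiv i\pmod N$, then $\wt_2(i)\le\sum_j\wt_2(k_j)\le\sum_jk_j$, and the second inequality is strict as soon as some $k_j\ge2$. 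A $\{0,1\}$-tuple has $\sum_jk_j2^j\in[0,N]$, so it must be the digit tuple of $i$ (the zero tuple if $i=0$). Therefore $G(\omega^{-i})\equiv N\gamma^{\wt_2(i)}\pmod{2^{\wt_2(i)+1}}$.

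Your Step~4 then disappears. Since $N$ is odd and $\gamma\in2\Z_2^\times$, we have $N\gamma^{w}\in2^{w}\Z_2^\times$, and every element of $2^{w}\Z_2^\times$ is $\equiv2^{w}\pmod{2^{w+1}}$. Your Galois remark, made precise, says the same thing: $\psi(x^2)=\psi(x)$ gives $\phi\bigl(G(\omega^{-i})\bigr)=G(\omega^{-2i})=G(\omega^{-i})$ for the Frobenius $\phi$. Hence $G(\omega^{-i})\in\Z_2$, and no unknown residue in $\F_{2^n}^*$ can occur. Citing Gross--Koblitz instead would also work, but that theorem already contains the exact valuation, which would make Steps~1--3 redundant.
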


\section{Binomial vectorial functions with maximal bent components}

From now on, let
\[
	F(x)=x^{d_1}+x^{d_2}:\F_{2^n}\rightarrow\F_{2^n}
\]
be a binomial vectorial function, where $d_1,d_2\in \Z_N\setminus\{0\}$.
Our goal is to characterize the functions $F(x)$ for which the number of bent components is maximal.

\subsection{Known results}

The following result is due to Pott et al.~\cite{PPMB2018}.

\begin{theorem} \label{thm:pott}
	The function $F(x)=x^{2^i+1}+x^{2^i+2^m}\ (0<i<m)$ has maximal number of bent components.
\end{theorem}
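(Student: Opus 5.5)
We need to show that $F(x)=x^{2^i+1}+x^{2^i+2^m}$ with $0<i<m$ has exactly $2^n-2^m$ bent components. By Theorem~\ref{thm:SF-vector-space} it suffices to show that $F_a$ is bent for every $a\notin\F_{2^m}$. Then $S_F\subseteq\F_{2^m}$, so $\#S_F\le 2^m$, and the lower bound forces equality.

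First rewrite $F$. We have $F(x)=x^{2^i}(x+x^{2^m})=x^{2^i}\Tr_{2^n/2^m}(x)$, so
\[
F_a(x)=\Tr_{2^n/2}\bigl(a\,x^{2^i}\,\Tr_{2^n/2^m}(x)\bigr).
\]
This is a quadratic form over $\F_2$, since each monomial has $2$-weight $2$. A quadratic Boolean function is bent if and only if the radical of its associated symplectic form is trivial. So the plan is to compute $B_a(x,y)=F_a(x+y)+F_a(x)+F_a(y)$, write it as $\Tr_{2^n/2}(y\,L_a(x))$ for an $\F_2$-linear map $L_a$, and show that $\ker L_a=\{0\}$ whenever $a\notin\F_{2^m}$.

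To compute $L_a$, write $T(x)=\Tr_{2^n/2^m}(x)$. Then
\[
B_a(x,y)=\Tr_{2^n/2}\bigl(a x^{2^i}T(y)+a y^{2^i}T(x)\bigr).
\]
For the first term, use the identity $\Tr_{2^n/2}(u\,T(y))=\Tr_{2^n/2}(y\,T(u))$, valid because $T$ is self-adjoint for the trace form. For the second term, move the Frobenius power off $y$ using $\Tr_{2^n/2}(c\,y^{2^i})=\Tr_{2^n/2}(c^{2^{-i}}y)$. This gives
\[
L_a(x)=T\bigl(a x^{2^i}\bigr)+\bigl(a\,T(x)\bigr)^{2^{-i}}.
\]

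Now suppose $L_a(x)=0$. The first summand lies in $\F_{2^m}$, so $(aT(x))^{2^{-i}}\in\F_{2^m}$, and hence $aT(x)\in\F_{2^m}$. Since $a\notin\F_{2^m}$ and $T(x)\in\F_{2^m}$, this forces $T(x)=0$, i.e.\ $x\in\F_{2^m}$. But then $x^{2^i}\in\F_{2^m}$ and $T(ax^{2^i})=x^{2^i}T(a)$. Moreover $T(a)=a+a^{2^m}\neq0$ because $a\notin\F_{2^m}$. So $x^{2^i}T(a)=0$ gives $x=0$. Hence $L_a$ is injective, and $F_a$ is bent for all $a\notin\F_{2^m}$, which completes the argument.

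The only step needing care is the adjoint computation producing $L_a$, in particular the trace identities and the handling of $2^{-i}$ powers. Notably, the condition $0<i<m$ never enters, so the argument appears to work for every $i$.
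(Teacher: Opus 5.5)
Your proof is correct. The paper gives no proof of this statement; it quotes it from Pott et al.~\cite{PPMB2018}. Your argument is the same technique the paper uses for the companion negative result, Proposition~\ref{pro:qua}. There, too, one computes the linearized polynomial $L_a$ attached to the polar form of the quadratic component $F_a$ via trace adjoints, and invokes the criterion (Hu--Feng) that such an $F_a$ is bent exactly when this form is non-degenerate. I checked your adjoint computation $L_a(x)=T(ax^{2^i})+(aT(x))^{2^{-i}}$ and your kernel argument, and both are right.

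If you want a slightly shorter route that skips the adjoint step: since $T(x)\in\F_{2^m}$ and $T$ is $\F_{2^m}$-linear, you have $F_a(x)=\Tr_{2^m/2}\bigl(T(x)\,T(ax^{2^i})\bigr)$. The $\F_2$-linear map $x\mapsto\bigl(T(x),T(ax^{2^i})\bigr)$ from $\F_{2^n}$ to $\F_{2^m}^2$ is bijective by exactly the second half of your argument. Indeed, $T(x)=0$ forces $x\in\F_{2^m}$, and then $x^{2^i}T(a)=0$ forces $x=0$. So $F_a$ is a linear change of variables of the Maiorana--McFarland bent function $(u,v)\mapsto\Tr_{2^m/2}(uv)$.

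Your closing remark that $0<i<m$ is never used is also right, with one small caveat. For $i\in\{0,m\}$, one of the two monomials is linear rather than of $2$-weight $2$, so $F_a$ is a quadratic form plus a linear term. This changes neither the polar form nor bentness, so your argument still applies.
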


Two vectorial functions $F,F':\F_{2^n}\rightarrow\F_{2^n}$ are called  EA-equivalent if there exist affine automorphisms $A,A':\F_{2^n}\rightarrow\F_{2^n}$ and an affine function $L:\F_{2^n}\rightarrow\F_{2^n}$ such that $F'=A'\circ F\circ A+L$. Note that the property of having maximal number of bent components is invariant under EA-equivalent.

 By Theorem~\ref{thm:pott},  if replacing $i$ by $m-i$, then $F(x)=x^{1+2^{m-i}}+x^{2^{m-i}+2^m}\ (0<i<m)$ has maximal number of bent components. Set
\[A(x)=x^{2^{m+i}},\ A'=x,\ \text{and}\ L=0.\]
Then we have
\begin{corollary}\label{cor:pott}
  The function $F(x)=x^{1+2^i}+x^{1+2^{m+i}}$ has maximal number of bent components.
\end{corollary}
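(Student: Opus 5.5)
The plan is to deduce Corollary~\ref{cor:pott} from Theorem~\ref{thm:pott}. The key observation is that having maximal number of bent components is invariant under EA-equivalence. In particular, it is invariant under composing on the right with the linear automorphism $A(x)=x^{2^{m+i}}$. I would first justify this invariance directly for the special case used here, namely $F'=F\circ A$ with $A$ an additive bijection of $\F_{2^n}$. For each $a\in\F_{2^n}^*$ the component satisfies $F'_a=F_a\circ A$. Its Walsh transform is therefore
\[
W_{F'_a}(\omega)=\sum_{v}(-1)^{F_a(A(v))+\Tr_{2^n/2}(\omega v)}=\sum_{u}(-1)^{F_a(u)+\Tr_{2^n/2}(\omega A^{-1}(u))}=W_{F_a}\bigl(A^{*}(\omega)\bigr),
\]
where $A^{*}$ is the adjoint of $A^{-1}$ with respect to the trace form. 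Since $A^{*}$ is a bijection, $F'_a$ is bent if and only if $F_a$ is bent. Hence $S_{F'}=S_F$, and $\#S_{F'}=\#S_F$.

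Next, I would apply Theorem~\ref{thm:pott} with $i$ replaced by $m-i$, where $0<i<m$, so that $0<m-i<m$. This shows that
\[
G(x)=x^{2^{m-i}+1}+x^{2^{m-i}+2^m}
\]
has maximal number of bent components. Then I would compute $G\circ A$ with $A(x)=x^{2^{m+i}}$, reducing exponents modulo $N=2^n-1$ (recall $x^{2^n}=x$). Since $2^{m+i}\cdot 2^{m-i}=2^{n}\equiv 1$, $2^{m+i}\cdot 1=2^{m+i}$, and $2^{m+i}\cdot 2^m=2^{n+i}\equiv 2^i$, we get
\[
G(A(x))=x^{1+2^{m+i}}+x^{1+2^{i}}.
\]
This is exactly $F$, so $F$ has maximal number of bent components. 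The case $i=0$, if intended, gives $F(x)=x^2+x^{1+2^m}$. This differs from $x^{2^m+1}$ by a linear function, and adding an affine $L$ only changes each Walsh value $W_{F_a}(\omega)$ by a shift of $\omega$. So that case also follows from Theorem~\ref{thm:hu}'s example $x^{2^m+1}$, which has $S_F=\F_{2^m}$ by the classical computation.

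There is no real obstacle here. The only points requiring care are the exponent arithmetic modulo $2^n-1$ and checking the invariance of bentness under a linear change of variable, both of which are routine.
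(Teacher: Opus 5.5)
Your proposal is correct and matches the paper's argument: apply Theorem~\ref{thm:pott} with $i$ replaced by $m-i$, precompose with the linear automorphism $A(x)=x^{2^{m+i}}$, and use that having the maximal number of bent components is invariant under EA-equivalence. Your explicit Walsh-transform check of that invariance and your remark on $i=0$ are extra detail the paper omits; for $i=0$, Theorem~\ref{thm:hu} only gives a necessary condition, so the case really rests on the classical direct computation for $x^{2^m+1}$, as you say.
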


Moreover, if replacing $2^i+2^m$ by $2^{m+i}+2^m$, then
\begin{proposition}\label{pro:qua}
	The number of bent components of $F(x)=x^{2^i+1}+x^{2^{m+i}+2^m}\ (0\leq i\leq m)$ is not maximal.
\end{proposition}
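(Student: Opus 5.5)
The plan is to observe that the second monomial is the $2^m$-th power of the first, so $F$ is a relative trace and takes values in $\F_{2^m}$. Its components can then be written through the relative trace of $a$, and it suffices to find one nonzero value of that trace giving a non-bent component. Throughout, write $y=x^{2^i+1}$ and $\Tr_{2^n/2^m}(z)=z+z^{2^m}$.

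\emph{Step 1: reduce the components.} Since $x^{2^{m+i}+2^m}=(x^{2^i+1})^{2^m}$, we have $F(x)=y+y^{2^m}=\Tr_{2^n/2^m}(x^{2^i+1})\in\F_{2^m}$. By transitivity of the trace, for $a\in\F_{2^n}$,
\[
F_a(v)=\Tr_{2^m/2}\bigl(\Tr_{2^n/2^m}(a)\,\Tr_{2^n/2^m}(v^{2^i+1})\bigr)=\Tr_{2^n/2}\bigl(b\,v^{2^i+1}\bigr),\qquad b:=\Tr_{2^n/2^m}(a)\in\F_{2^m}.
\]
So $F_a$ depends only on $b$. For $a\in\F_{2^m}$ we get $b=0$ and $F_a\equiv 0$, hence $\F_{2^m}\subseteq S_F$.

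\emph{Step 2: one nonzero $b$ gives a non-bent component.} I claim $b=1$ works, i.e. $f(v)=\Tr_{2^n/2}(v^{2^i+1})$ is not bent.
\begin{itemize}
\item If $i=0$, then $f(v)=\Tr_{2^n/2}(v^2)=\Tr_{2^n/2}(v)$ is affine, hence not bent.
\item If $i=m$, then $v^{2^m+1}\in\F_{2^m}$, so $f(v)=\Tr_{2^m/2}\bigl(2v^{2^m+1}\bigr)=0$, hence not bent.
\item If $0<i<m$, then $f$ is quadratic, and it is bent iff the radical of $B(u,v)=f(u+v)+f(u)+f(v)=\Tr_{2^n/2}(u^{2^i}v+uv^{2^i})$ is $\{0\}$. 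Applying $\sigma^{-i}$ to the second term shows the radical is $\{u:\ u^{2^i}+u^{2^{-i}}=0\}$. Raising to the $2^i$-th power, this is $\{u:\ u^{2^{2i}}=u\}=\F_{2^{2i}}\cap\F_{2^n}=\F_{2^{\gcd(2i,n)}}$. Since $n=2m$ is even, $\gcd(2i,n)\ge 2$, so the radical is nontrivial and $f$ is not bent.
\end{itemize}

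\emph{Step 3: count.} The map $\Tr_{2^n/2^m}:\F_{2^n}\to\F_{2^m}$ is surjective and $\F_{2^m}$-linear, so exactly $2^m$ elements $a$ satisfy $\Tr_{2^n/2^m}(a)=1$, and none of them lies in $\F_{2^m}$. By Steps 1 and 2 all of them lie in $S_F$, which gives $\#S_F\ge 2^m+2^m=2^{m+1}>2^m$. Hence $F$ does not have the maximal number of bent components.

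There is no real obstacle here. The only point needing care is the radical computation in Step 2, in particular using that $n$ is even to force $\gcd(2i,n)\geq 2$; the endpoint cases $i=0$ and $i=m$ are degenerate and are handled directly as above.
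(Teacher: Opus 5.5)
Your proof is correct and is essentially the paper's argument. The paper's linearized polynomial $L_a(x)$ equals $b\,x^{2^i}+(bx)^{2^{n-i}}$ with $b=a+a^{2^m}$, though it is not written that way. The paper shows $L_a\equiv 0$ for $a\in\F_{2^m}$ and $L_a(1)=0$ when $\Tr_{2^n/2^m}(a)=1$, which are exactly the two facts you derive. Your factorization $F=\Tr_{2^n/2^m}(x^{2^i+1})$ simply makes the dependence on $b$ explicit from the start, and it lets you handle the endpoint cases $i=0,m$ directly, which the paper dismisses as trivial.
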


\begin{proof}
	This is trivial if $i=0$ or $m$. Now we suppose $0<i<m$.
	
	For $a\in\F_{2^n}^*$, $F_a(x)$ is a quadratic form, whose associated bilinear form is
	\begin{align*}
		B_a(x,z)&=F_a(x+z)+F_a(x)+F_a(z)\\
		&=\Tr_{2^n/2}\bigl(aF(x+z)\bigr)+\Tr_{2^n/2}\bigl(aF(x)\bigr)+\Tr_{2^n/2}\bigl(aF(z)\bigr)\\
		&=\Tr_{2^n/2}\bigl(ax^{2^{i}}z+axz^{2^{i}}+
		ax^{2^m}z^{2^{m+i}}+ax^{2^{m+i}}z^{2^m}\bigr)\\
		&=\Tr_{2^n/2}\bigl(z(ax^{2^{i}}+(ax)^{2^{n-i}}+a^{2^{m-i}}x^{2^{n-i}}+a^{2^m} x^{2^{i}})\bigr).
	\end{align*}
	Let $L_a(x):=ax^{2^{i}}+(ax)^{2^{n-i}}+a^{2^{m-i}}x^{2^{n-i}}+a^{2^m}
	x^{2^{i}}$. By a result of Hu and Feng \cite{Hufeng}, $F_a(x)$ is bent if and only if $F_a(x)$ is non-degenerate. As $n$ is even, this is equivalent to that $B_a(x,z)$ is non-degenerate.
	As the trace map is non-degenerate, we get
	\[
		F_a\ \text{is not bent}
		\iff \exists\ x\neq 0,\ \text{such that}\ L_a(x)=0.
	\]
	We conclude that $F$ has more than $2^m$ non-bent components:
	\begin{enumerate}[i]
		\item If $a\in\F_{2^m}$, then $L_a(x)=0$ for any $x\in\F_{2^n}$, so $F_a$ is not bent.
		\item The trace map $\Tr_{2^n/2^m}$ is surjective, so there exists $a\in \F_{2^n}\setminus\F_{2^m}$ such that $\Tr_{2^n/2^m}(a)=a+a^{2^m}=1$. For such $a$, $L_a(1)=0$, hence $F_a$ is not bent.\qedhere
	\end{enumerate}
\end{proof}

\subsection{The study of $\J_{d_1,d_2}$}
For a positive integer $i$, we denote by $v_2(i)$ its $2$-adic valuation.

\begin{definition}
	Let $\J= \Z_{N}\times\Z_{N}\setminus\{(0,0)\}$.
	Denote
	\begin{align}
		V_{d_1,d_2} &:=\Bigl((j_1, j_2)\in \J\mapsto \wt_2(j_1)+\wt_2(j_2)+\wt_2(-d_1j_1-d_2j_2)\Bigr); \notag\\
		\nu_{d_1,d_2}&:=\min\{V_{d_1,d_2}(j_1,j_2) : (j_1,j_2)\in \J\};\notag\\
		\J_{d_1,d_2}&:=\{(j_1,j_2)\in \J :
		V_{d_1,d_2}(j_1,j_2)=\nu_{d_1,d_2}\}.\label{eq:def-J}
	\end{align}	
\end{definition}

Obviously, the set $\J_{d_1,d_2}$ is closed under multiplication by $2$.

\begin{theorem} \label{thm:th3}
	\begin{enumerate}[1]
		\item For any $(a,b)\in\F_{2^n}^*\times\F_{2^n}$, $v_2\bigl(W_{F_a}(b)\bigr)\geq \nu_{d_1,d_2}$.
		\item Denote the polynomial
		\[
			g_a(x)=\sum_{(j_1,j_2)\in\J_{d_1,d_2}}a^{j_1+j_2}x^{(-d_1j_1-d_2j_2)_N}.
		\]
		Then $g_a(b)\in \F_2$ and
		\begin{equation}
			v_2\bigl(W_{F_a}(b)\bigr)>\nu_{d_1,d_2}
			\iff g_a(b)=0.
		\end{equation}
	\end{enumerate}
\end{theorem}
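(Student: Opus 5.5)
The plan is to expand the Walsh transform with the Fourier formula \eqref{eq:fourier} for each additive character and then apply Stickelberger's Theorem~\ref{stick} termwise. Fix $a\in\F_{2^n}^*$ and $b\in\F_{2^n}$ and write
\[
W_{F_a}(b)=\sum_{v\in\F_{2^n}}\psi(av^{d_1})\psi(av^{d_2})\psi(bv).
\]
Substituting $\psi(x)=\frac1N\sum_j G(\omega^{-j})\omega^j(x)$, which we extend to $x=0$ using the conventions $\chi(0)=0$ for nontrivial $\chi$ and $\chi(0)=1$ for trivial $\chi$, we will obtain a triple sum over $(j_1,j_2,j_3)$. The factor $\omega^{j_1}(a)\omega^{j_2}(a)\omega^{j_3}(b)$ comes out, and the inner sum $\sum_{v\in\F_{2^n}^*}\omega^{d_1j_1+d_2j_2+j_3}(v)$ is nonzero (it equals $N$) only when $j_3\equiv -d_1j_1-d_2j_2 \pmod N$. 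The contribution of $v=0$ and of the trivial characters must be handled separately.

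The cleanest way to do this is to work with $W_{F_a}(b)-1=\sum_{v\neq0}(\cdots)$. Since $W_{F_a}(b)$ is a sum of $2^n$ terms $\pm1$, we first treat the possible boundary case in which everything is $0$ modulo low powers of $2$. The result will be
\[
W_{F_a}(b)=1+\frac{1}{N^2}\sum_{(j_1,j_2)}G(\omega^{-j_1})G(\omega^{-j_2})G\bigl(\omega^{d_1j_1+d_2j_2}\bigr)\,\omega^{j_1+j_2}(a)\,\omega^{(-d_1j_1-d_2j_2)_N}(b),
\]
where the sum runs over all of $\Z_N^2$. The term $(0,0)$ contributes $G(\omega^0)^3=-1$, up to the adjustment for $b=0$, where $\omega^0(0)=1$. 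The careful bookkeeping here is to show that this term cancels the $1$, so that only $(j_1,j_2)\in\J$ remains. I expect this bookkeeping with $0$ and trivial characters to be the main technical nuisance, though it is routine.

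Next, Stickelberger gives
\[
v_2\bigl(G(\omega^{-j_1})G(\omega^{-j_2})G(\omega^{-j_3})\bigr)=\wt_2(j_1)+\wt_2(j_2)+\wt_2(j_3)=V_{d_1,d_2}(j_1,j_2),
\]
with $j_3=(-d_1j_1-d_2j_2)_N$. Since $N$ is odd and the Teichmüller values are units or $0$, every term has valuation at least $\nu_{d_1,d_2}$, which proves (1). For (2), we will divide by $2^{\nu_{d_1,d_2}}$ and reduce modulo $2$. By Theorem~\ref{stick}, each $G/2^{\wt_2}\equiv1\pmod 2$, and $N^2\equiv1$. The terms outside $\J_{d_1,d_2}$ vanish modulo $2$, and by \eqref{eq:tech-lift} we get
\[
2^{-\nu}W_{F_a}(b)\bmod 2=\sum_{\J_{d_1,d_2}}a^{j_1+j_2}b^{(-d_1j_1-d_2j_2)_N}=g_a(b).
\]
The left side lies in $\Z_2/2=\F_2$, so $g_a(b)\in\F_2$. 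Moreover, $v_2(W_{F_a}(b))>\nu_{d_1,d_2}$ if and only if this residue is $0$. The membership $g_a(b)\in\F_2$ can also be seen directly, since $\J_{d_1,d_2}$ is stable under doubling and so $g_a(b)^2=g_a(b)$.
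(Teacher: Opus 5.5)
Your route is the same as the paper's: expand each additive character by \eqref{eq:fourier}, collapse $j_3$ by orthogonality, apply Stickelberger termwise, and reduce via the Teichm\"uller lift. However, the step you call routine bookkeeping is false as stated, and repairing it needs an ingredient you do not have.

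\textbf{The cancellation is not exact.} In your own formula the $(0,0)$ term is $N^{-2}G(\omega^0)^3=-N^{-2}$, not $-1$, so it does not cancel the $1$. For $b\neq0$ you actually get
\[
W_{F_a}(b)=\bigl(1-N^{-2}\bigr)+N^{-2}\sum_{(j_1,j_2)\in\J}(\cdots),\qquad 1-N^{-2}=\frac{2^n(2^n-2)}{N^2}.
\]
Moreover, \eqref{eq:fourier} does not extend to $x=0$ with the stated conventions: its right side is $G(\omega^0)/N=-1/N\neq\psi(0)$. For $b=0$ the correct expansion is
\[
W_{F_a}(0)=\frac{2^n}{N}+\frac1N\sum_{\substack{(j_1,j_2)\in\J\\ d_1j_1+d_2j_2\equiv0}}G(\omega^{-j_1})G(\omega^{-j_2})\,\omega^{j_1+j_2}(a),
\]
whose leftover $2^n/N$ has valuation exactly $n$. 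The paper avoids exact bookkeeping altogether. It only claims $W_{F_a}(b)\equiv\sum_{\J}(\cdots)\pmod{2^n}$, using $N\equiv-1\pmod{2^n}$.

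\textbf{What the repair requires.} Your final step (divide by $2^{\nu_{d_1,d_2}}$ and reduce mod $2$) is valid only if the leftover has valuation $>\nu_{d_1,d_2}$. So you need $\nu_{d_1,d_2}\le n-1$; the paper's passage from a congruence mod $2^n$ to one mod $2^{\nu_{d_1,d_2}+1}$ tacitly needs the same. This holds whenever $\max\{\wt_2(d_1),\wt_2(d_2)\}\ge2$, because
\[
V_{d_1,d_2}(1,0)=n+1-\wt_2(d_1),\qquad V_{d_1,d_2}(0,1)=n+1-\wt_2(d_2).
\]
That covers the weight-$2$ setting where the theorem is used. Part (1) is fine in general, since always $\nu_{d_1,d_2}\le n$. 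Part (2) also holds for $b\neq0$ even when $\nu_{d_1,d_2}=n$, because there the leftover has valuation $n+1$.

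\textbf{The restriction is not cosmetic.} Take $d_1=1$, $d_2=2$. Every pair $((-2j)_N,j)$ with $1\le j\le N-1$ has $V_{1,2}=n$, and one checks $\nu_{1,2}=n$. These are exactly the pairs of $\J_{1,2}$ with $j_1+2j_2\equiv0$, so $g_1(0)=N-1=0$ in $\F_2$. Yet $W_{F_1}(0)=\sum_x(-1)^{\Tr_{2^n/2}(x+x^2)}=2^n$, whose valuation is $\nu_{1,2}$. So part (2) fails at $b=0$. An argument claiming exact cancellation would ``prove'' this false case. You must replace the claimed identity by the congruence and add the hypothesis $\nu_{d_1,d_2}<n$ (or restrict to $b\neq0$ when $\nu_{d_1,d_2}=n$).
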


\begin{proof}
	For $(a,b)\in\F_{2^n}^*\times\F_{2^n}$, we have
	\begin{align*}
		W_{F_a}(b)&=1+\sum_{x\in\F_{2^n}^*}(-1)^{\Tr_{2^n/2}(ax^{d_1}+ax^{d_2}+bx)}\\
		&=1+\sum_{x\in\F_{2^n}^*}\psi(ax^{d_1})\psi(ax^{d_2})\psi(bx)=1+\frac{S}{N^3},
	\end{align*}
	where $S$ is given as follows by Eq.~\eqref{eq:fourier}:
	\begin{align*}
		S=& \sum_{x\in\F_{2^n}^*}\sum_{j_1,j_2,j_3=0}^{N-1} G(\omega^{-j_1})G(\omega^{-j_2})G(\omega^{-j_3})\omega^{j_1}(ax^{d_1})
		\omega^{j_2}(ax^{d_2})\omega^{j_3}(bx) \\
		=& \sum_{j_1,j_2,j_3=0}^{N-1} G(\omega^{-j_1})G(\omega^{-j_2})G(\omega^{-j_3}) \omega^{j_1+j_2}(a) \omega^{j_3}(b) \sum_{x\in\F_{2^n}^*} \omega^{d_1j_1+d_2 j_2+j_3}(x)\\
		=& N\, \sum_{j_1,j_2=0}^{N-1} G(\omega^{-j_1})G(\omega^{-j_2})G(\omega^{d_1j_1+d_2 j_2}) \omega^{j_1+j_2}(a) \omega^{-d_1j_1 -d_2 j_2}(b)\\
		=& N\,
		\sum_{(j_1,j_2)\in \J} G(\omega^{-j_1})G(\omega^{-j_2})G(\omega^{d_1j_1+d_2 j_2}) \omega(a^{j_1+j_2} b^{-d_1j_1 -d_2 j_2}) -N.
	\end{align*}
	Recall that $N=2^n-1$, one have
	\[
		W_{F_a}(b)\equiv \sum_{(j_1,j_2)\in \J} G(\omega^{-j_1})G(\omega^{-j_2})G(\omega^{d_1j_1+d_2 j_2}) \omega(a^{j_1+j_2} b^{-d_1j_1 -d_2 j_2})\mod{2^n}.
	\]
	By Theorem~\ref{stick} and Eq.~\eqref{eq:def-J}, we get
	\[
		W_{F_a}(b)
		= 2^{\nu_{d_1,d_2}}\sum_{(j_1,j_2)\in\J_{d_1,d_2}}\omega
		(a^{j_1+j_2}b^{-(d_1j_1+d_2j_2)})\mod{2^{\nu_{d_1,d_2}+1}}.
	\]
	Thus $v_2\bigl(W_{F_a}(b)\bigr)\geq \nu_{d_1,d_2}$.
	
	Since $\J_{d_1,d_2}$ is closed under multiplication by $2$, we have $g_a^2(b)=g_a(b)$ and $g_a(b)\in\F_2$. By Eq.\eqref{eq:tech-lift},
	\[
		g_a(b)\equiv \sum_{(j_1,j_2)\in\J_{d_1,d_2}}
		\omega(a^{j_1+j_2}b^{-(d_1j_1+d_2j_2)})\mod{2},
	\]
	and we have
	\[
		v_2\bigl(W_{F_a}(b)\bigr)> \nu_{d_1,d_2}\Longleftrightarrow g_a(b)=0.
	\]
	The proof is complete.
\end{proof}

\begin{remark}
	Analogously, let $F(x)=x^{d_1}+1$, i.e. $d_2=0$. The notions $\nu_{d_1,0}$ and $\J_{d_1}:=\J_{d_1,0}$ were introduced and studied in \cite{Ver, lea}. It was shown there that for
	$(a,b)\in\F_{2^n}^*\times\F_{2^n}$, $v_2\bigl(W_{F_a}(b)\bigr)\geq \nu_{d_1,0}$ and
	\[
		v_2\bigl(W_{F_a}(b)\bigr)> \nu_{d_1,0} \Longleftrightarrow\sum_{j_1\in\J_{d_1}}a^{j_1}b^{-d_1j_1}=0.
	\]
\end{remark}

By \cite[Theorem~13]{Mor}, we have
\begin{lemma}
	For any $(a,b)\in\F_{2^n}\times\F_{2^n}$,
	% $W_{F_a}(b)$ can be divisible by $2^{\lceil\frac{n}{\max\{\wt_2(d_1),\wt_2(d_2)\}}\rceil}$, i.e.,
	\begin{equation} \label{eq:bound}
		v_2\bigl(W_{F_a}(b)\bigr)\ge\nu_{d_1,d_2}
		\geq\ceil{\frac{n}{\max\{\wt_2(d_1),\wt_2(d_2)\}}}.
	\end{equation}
\end{lemma}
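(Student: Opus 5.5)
The first inequality, $v_2\bigl(W_{F_a}(b)\bigr)\ge\nu_{d_1,d_2}$, is Theorem~\ref{thm:th3}(1) for $a\neq 0$. For $a=0$ we have $W_{F_0}(b)\in\{0,2^n\}$, and $\nu_{d_1,d_2}\le n$: the pair $(j_1,j_2)=(0,1)$ gives $V=\wt_2(1)+\wt_2(-d_2)=1+n-\wt_2(d_2)\le n$. So the remaining task is the purely combinatorial bound
\[
	\wt_2(j_1)+\wt_2(j_2)+\wt_2(-d_1j_1-d_2j_2)\ \ge\ \ceil{\frac{n}{w}},\qquad w:=\max\{\wt_2(d_1),\wt_2(d_2)\},
\]
valid for all $(j_1,j_2)\in\J$.

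Write $d_1=\sum_{s\in A}2^s$ and $d_2=\sum_{t\in B}2^t$ modulo $N$, with $\#A=\wt_2(d_1)\le w$ and $\#B=\wt_2(d_2)\le w$. Set $k=-d_1j_1-d_2j_2 \bmod N$, so that
\[
	j_1d_1+j_2d_2+k\equiv 0\pmod N .
\]
We use two standard facts about $\wt_2$ on $\Z_N$: subadditivity, $\wt_2(x+y)\le\wt_2(x)+\wt_2(y)$, and invariance under multiplication by $2$ (cyclic rotation of bits). Expanding $j_1d_1=\sum_{s\in A}2^sj_1$ gives $\wt_2(j_1d_1)\le\#A\,\wt_2(j_1)\le w\,\wt_2(j_1)$, and likewise $\wt_2(j_2d_2)\le w\,\wt_2(j_2)$. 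Since $w\ge1$, we also have $\wt_2(k)\le w\,\wt_2(k)$.

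Now set $X=j_1d_1+j_2d_2+k\equiv0\pmod N$. Because $(j_1,j_2)\neq(0,0)$, the integer sum $X$ is a positive multiple of $N$; with representatives taken in $[0,N-1]$ it is nonzero unless every summand vanishes. So we should read $0\equiv N$, i.e.\ $\wt_2(0_N)$ interpreted as $n$ in the additive sense.

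The key lemma is the following: if nonnegative integers $x_1,\dots,x_r$ sum to a positive multiple of $N=2^n-1$, then $\sum\wt_2(x_i\bmod N)\ge n$. To prove it, reduce modulo $N$ by folding carries cyclically. Adding the binary digits column by column, a carry out of position $n-1$ wraps to position $0$, since $2^n\equiv1$. Each carry reduces the total weight by exactly $1$. The final result is $N$, i.e.\ all ones, of weight $n$, or else $0$; the value $0$ is impossible, because a positive sum that is reduced cyclically never reaches $0$. Hence the total weight is at least $n$.

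Apply the lemma to the multiset of the rotated copies $2^sj_1$ ($s\in A$), $2^tj_2$ ($t\in B$), together with $k$. Their sum is $X$. If $X>0$ as an integer, the lemma gives
\[
	\#A\,\wt_2(j_1)+\#B\,\wt_2(j_2)+\wt_2(k)\ \ge\ n .
\]
The left side is at most $w\bigl(\wt_2(j_1)+\wt_2(j_2)+\wt_2(k)\bigr)=w\,V_{d_1,d_2}(j_1,j_2)$. Therefore $V\ge n/w$, and since $V$ is an integer, $V\ge\ceil{n/w}$. If instead $X=0$, then all summands vanish, forcing $j_1=j_2=0$, which is excluded. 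Taking the minimum over $\J$ yields $\nu_{d_1,d_2}\ge\ceil{n/w}$.

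The step that needs care is the carry-folding lemma, in particular making sure the result $0$ is excluded. The cleanest route is to note that the cyclic reduction of a positive integer lands in $\{1,\dots,N\}$, never $0$, and that the only element of that range congruent to $0$ modulo $N$ is $N$ itself, whose weight is $n$. This is essentially the argument of \cite[Theorem~13]{Mor}.
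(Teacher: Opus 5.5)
Your proof is correct, and it differs from the paper's in that the paper gives no argument at all: it simply invokes \cite[Theorem~13]{Mor}. You instead prove the lemma directly.

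\begin{itemize}
\item \textbf{First inequality.} You get it from Theorem~\ref{thm:th3}(1) together with the observation $\nu_{d_1,d_2}\le V_{d_1,d_2}(0,1)=1+n-\wt_2(d_2)\le n$. This also handles $a=0$, where $W_{F_0}(b)\in\{0,2^n\}$; the case $a=0$ is included in the statement but not covered by Theorem~\ref{thm:th3}.
\item \textbf{Second inequality.} You derive it from a carry-counting fact: residues summing to a nonzero multiple of $N$ have total $2$-weight at least $n$. You apply this to the $\#A$ rotations of $j_1$, the $\#B$ rotations of $j_2$, and $k$.
\end{itemize}

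The citation is shorter, but it leaves the reader to match the cited result to the paper-specific quantity $\nu_{d_1,d_2}$. That bound is exactly what is used later, for example $\nu_{d_1,d_2}\ge m$ in the proof of Theorem~\ref{thm:min-wt-is-m}. Your argument establishes it in a self-contained way.

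One statement needs tightening. As written, your key lemma is false: ``nonnegative integers summing to a positive multiple of $N$ satisfy $\sum\wt_2(x_i\bmod N)\ge n$'' fails for $r=1$, $x_1=N$. The correct hypothesis is that the $x_i$ are \emph{not all divisible by $N$} and $\sum x_i\equiv 0\pmod N$. With that hypothesis the proof goes through:
\begin{itemize}
\item The residues $(x_i)_N$ then have positive integer sum.
\item Each merge $2^e+2^e\mapsto 2^{e+1 \bmod n}$ of their bits lowers the bit count by one and never reaches zero.
\item The terminal configuration has distinct exponents and a sum in $[1,N]$ congruent to $0$. So it equals $N$ and has exactly $n$ bits.
\end{itemize}

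In your application this hypothesis holds. Since $0\le j_1,j_2<N$ and $N$ is odd, $2^sj_1\equiv 0\pmod N$ only if $j_1=0$, and similarly for $j_2$. Also $A,B\neq\emptyset$ because $d_1,d_2\not\equiv 0$. So $(j_1,j_2)\neq(0,0)$ guarantees that some summand is nonzero modulo $N$. That is the only place where your case split ``$X>0$ versus $X=0$'' is actually needed.

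Finally, drop the remark about reading $\wt_2(0_N)$ as $n$. Under the paper's convention $\wt_2(0)=0$, and nothing in the argument requires reinterpreting it.
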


\subsection{Auxiliary results}

\begin{lemma}\label{lem:rz}
	Assume $s=\gcd(d_1,d_2,N)> 1 $. If $F$ has at least one bent component, then $s$ is a factor of one of $2^m\pm 1$ and coprime to the other.
	Specifically,
	\[
		s\mid (2^m\pm 1) \iff W_{F_a}(0)=\mp 2^m \ \text{for any}\ a\in\F_{2^n}\setminus S_F.
	\]
\end{lemma}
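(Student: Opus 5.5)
The plan is to exploit the invariance of $F$ under multiplication by $s$-th roots of unity and evaluate $W_{F_a}(0)$ modulo $s$.

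\emph{Step 1: the zero-frequency Walsh value modulo $s$.} Let $\mu_s\subset\F_{2^n}^*$ be the group of $s$-th roots of unity. This group is nontrivial, since $s\mid N$ and $s>1$. For $\zeta\in\mu_s$ we have $\zeta^{d_1}=\zeta^{d_2}=1$, hence $F(\zeta x)=F(x)$ for all $x$. So $\mu_s$ acts freely on $\F_{2^n}^*$ and $F_a$ is constant on each orbit. Each orbit has size $s$, so
\[
	W_{F_a}(0)=1+\sum_{x\in\F_{2^n}^*}(-1)^{F_a(x)}\equiv 1\pmod{s}
\]
for every $a\in\F_{2^n}^*$.

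\emph{Step 2: consequences for a bent component.} Take $a\notin S_F$, which exists by hypothesis. Then $W_{F_a}(0)=\epsilon 2^m$ for some $\epsilon\in\{\pm1\}$. By Step 1, $s\mid \epsilon2^m-1$.
\begin{itemize}
	\item If $\epsilon=1$, then $s\mid 2^m-1$.
	\item If $\epsilon=-1$, then $s\mid 2^m+1$.
\end{itemize}
Moreover, $s$ is odd because $s\mid N$. Since $\gcd(2^m-1,2^m+1)=1$ and $s>1$, $s$ cannot divide both numbers. It is in fact coprime to the other one, since it divides one and $\gcd(2^m-1,2^m+1)=1$. This gives the first assertion: $s$ divides exactly one of $2^m\pm1$ and is coprime to the other.

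\emph{Step 3: the equivalence.} The sign $\epsilon$ is determined by $s$ alone, independently of $a$. Indeed, if $s\mid 2^m-1$, then $s\nmid 2^m+1$ (as $s>1$ and the two numbers are coprime), so $\epsilon=-1$ is impossible and $W_{F_a}(0)=2^m$ for every $a\notin S_F$. Symmetrically, $s\mid 2^m+1$ forces $W_{F_a}(0)=-2^m$.

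Note that the sign convention works out as follows: $s\mid 2^m+1$ corresponds to $W_{F_a}(0)=-2^m$, and $s\mid 2^m-1$ corresponds to $W_{F_a}(0)=+2^m$. This matches the stated ``$s\mid(2^m\pm1)\iff W_{F_a}(0)=\mp2^m$''.

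For the converse direction, suppose $W_{F_a}(0)=\mp2^m$ for every bent $a$. Step 1 applied to any such $a$ gives $s\mid 2^m\pm1$.

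There is no real obstacle in this argument. The only point that needs care is the orbit-counting in Step 1, in particular that $0$ is handled separately and contributes the term $1$.
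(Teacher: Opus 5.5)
Your proof is correct and is essentially the paper's argument. Both observe that $F_a$ is constant on the cosets of the $s$-th roots of unity, which gives $W_{F_a}(0)\equiv 1\pmod{s}$, and then compare this with $W_{F_a}(0)=\pm 2^m$ for a bent component. You also spell out the sign correspondence and both directions of the equivalence, which the paper leaves implicit.
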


\begin{proof}
	Let $\alpha$ be a primitive element of $\F_{2^n}$ and $W=\{\gamma\in\F_{2^n} : \gamma^s=1\}$.
	Then
	\[
		W= \langle\alpha^{\frac{N}{s}}\rangle,\qquad \F_{2^n}^*=\bigcup_{i=0}^{\frac{N}{s}-1}\alpha^i W.
	\]
	Clearly, $F_a(x)$ is constant on each coset $\alpha^i W$, consequently
	\begin{align*}
		W_{F_a}(0)&=1+\sum_{x\in\F^*_{2^n}}(-1)^{\Tr_{2^n/2}(ax^{d_1}+ax^{d_2})}\\
		&=1+s\sum_{i=0}^{\frac{N}{s}-1}(-1)^{\Tr_{2^n/2}(a\alpha^{d_1i}+a\alpha^{d_2i})}
		\equiv1\mod{s}.
	\end{align*}
	By definition $W_{F_a}(0)=\pm2^m$ for $a\in\F_{2^n}\setminus S_F$.
	Thus if $\F_{2^n}\setminus S_F\neq \emptyset$, then $s\mid (2^m\pm 1)$. As $\gcd(2^m-1,2^m+1)=1$, $s$ is a factor of one of $2^m\pm 1$ and coprime to the other.
\end{proof}

\begin{lemma}[\cite{LL}, Lemma~2]\label{lem:LL}
	If $0<j<N$, then $\wt_2(j)+\wt_2(-j)=n$.
	Moreover, if $(2^m+1)\nmid j$, then $\wt_2((2^m-1)j)=m$.
\end{lemma}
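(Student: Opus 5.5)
The first claim, $\wt_2(j)+\wt_2(-j)=n$ for $0<j<N$, I would prove directly from the bit-complement description recalled in \S2. If $j=\sum_{i=0}^{n-1}c_i2^i$ with $0<j<N$, then
\[
N-j=\sum_{i=0}^{n-1}(1-c_i)2^i=\sum_{i=0}^{n-1}\bar{c_i}2^i .
\]
Since $0<j<N$, this number also lies strictly between $0$ and $N$, so it equals $(-j)_N$. Its weight is therefore $\sum_i \bar{c_i}=n-\wt_2(j)$. The hypothesis $0<j<N$ is needed only to rule out the degenerate case $j\equiv 0$, where $(-j)_N=0$ rather than $N$.

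For the second claim, write $k=((2^m-1)j)_N$; I would exploit the symmetry between $k$ and the multiplication-by-$2^m$ map. Because $(2^m-1)(2^m+1)=N$, we have $k(2^m+1)\equiv 0 \pmod N$. Multiplying by $2^m$ cyclically rotates the $n$-bit representation by $m$ positions, so
\[
2^mk\equiv -k \pmod N .
\]
Concretely, if $k=A+2^mB$ with $0\le A,B<2^m$, then the rotation gives $2^mk\equiv B+2^mA$. Meanwhile $-k$ has low half $2^m-1-A$ and high half $2^m-1-B$, provided $k\neq 0$.

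Next I would check that $k\neq 0$. This holds because $N\mid (2^m-1)j$ would force $(2^m+1)\mid j$, which is excluded by hypothesis. So $0<k<N$, the bit-complement formula applies to $-k$, and comparing the two representations gives
\[
B=2^m-1-A,\qquad A=2^m-1-B .
\]
That is, the high half of $k$ is the bitwise complement of the low half. Hence $\wt_2(k)=\wt_2(A)+(m-\wt_2(A))=m$.

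The only delicate point is the uniqueness of representatives in $\Z_N$, and it is handled by $0<k<N$. From $2^mk\equiv -k$ and the fact that both sides are reduced residues, the equality of $n$-bit strings is justified; the all-ones string would equal $N\equiv 0$ and so cannot occur. The rotation identity itself is immediate from $2^n\equiv 1\pmod N$. Alternatively, $\wt_2(2^mk)=\wt_2(k)$ together with the first claim gives $\wt_2(k)=n-\wt_2(k)$ directly, hence $\wt_2(k)=m$. I would present this shorter route as the main argument.
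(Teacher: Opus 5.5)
Your proof is correct and complete. The first claim is the bit-complement observation already recorded in \S2: subtracting $j$ from the all-ones string $N$ produces no borrows. For the second claim, the chain $\wt_2(k)=\wt_2(2^mk)=\wt_2(-k)=n-\wt_2(k)$, with $k=((2^m-1)j)_N\neq 0$ guaranteed by $(2^m+1)\nmid j$, is the standard argument and needs nothing more. The paper gives no proof of this lemma; it simply quotes it from the literature, so there is no in-paper argument to compare against.
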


\begin{lemma}\label{lem:deg}
	Suppose $(j_1,j_2)\in \J$.
	\begin{enumerate}[1]
		\item If $(j_1+j_2)_N\geq2^n-2^m+1$, then
		$V_{d_1,d_2}(j_1,j_2)\geq m+1$.
		\item If $j_1+j_2\equiv 0\mod{(2^m-1)}$, $j_1+j_2\neq N$ and $V_{d_1,d_2}(j_1,j_2)=m$, then
		\[
			\wt_2(j_1)+\wt_2(j_2)=\wt_2(j_1+j_2)=m,\qquad
			d_1j_1+d_2j_2\equiv 0\mod N.
		\]
	\end{enumerate}
\end{lemma}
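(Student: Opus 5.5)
The plan rests on one elementary fact: the $2$-adic weight modulo $N$ is subadditive,
\[
\wt_2(x+y)\le \wt_2(x)+\wt_2(y)\qquad (x,y\in\Z_N).
\]
It holds because adding binary expansions can only merge bits through carries. A carry out of position $n-1$ wraps around to position $0$, since $2^n\equiv 1 \pmod N$, so it also does not increase the weight. The one degenerate sum, $x+y=N$, is identified with $0$ and has weight $0$, so the inequality still holds. Applying this with $x=j_1$, $y=j_2$ gives, for every $(j_1,j_2)\in\J$,
\[
V_{d_1,d_2}(j_1,j_2)\ \ge\ \wt_2\bigl((j_1+j_2)_N\bigr)+\wt_2(-d_1j_1-d_2j_2).
\]
Both parts of the lemma then follow by bounding $\wt_2(k)$, where $k=(j_1+j_2)_N$.

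For part (1), suppose $2^n-2^m+1\le k\le N-1$. Write $k=(2^n-2^m)+r$ with $1\le r\le 2^m-2$. Since $2^n-2^m$ consists exactly of the top $m$ bits and $r<2^m$ occupies only the low bits, there are no carries, and
\[
\wt_2(k)=m+\wt_2(r)\ge m+1 .
\]
The displayed inequality then gives $V_{d_1,d_2}(j_1,j_2)\ge m+1$.

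For part (2), I first check that $k\neq 0$. If $k=0$, then $j_2\equiv -j_1 \pmod N$. If $j_1\ne 0$, Lemma~\ref{lem:LL} gives $\wt_2(j_1)+\wt_2(j_2)=n>m$, which contradicts $V_{d_1,d_2}(j_1,j_2)=m$. If $j_1=0$, then $j_2=0$, which is excluded by $(j_1,j_2)\in\J$. The hypothesis $j_1+j_2\ne N$ rules out the remaining degenerate representative.

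So $k=(2^m-1)t$ with $1\le t<2^m+1$, because $k<N=(2^m-1)(2^m+1)$. In particular $(2^m+1)\nmid t$. Since $\gcd(2^m-1,2^m+1)=1$, this gives $(2^m+1)\nmid k$, and the second statement of Lemma~\ref{lem:LL} (applied to $j=t$) yields $\wt_2(k)=m$. Combining with the inequality above,
\[
m=V_{d_1,d_2}(j_1,j_2)\ \ge\ \wt_2(j_1)+\wt_2(j_2)+\wt_2(-d_1j_1-d_2j_2)\ \ge\ m+\wt_2(-d_1j_1-d_2j_2).
\]
Therefore $\wt_2(-d_1j_1-d_2j_2)=0$, which means $d_1j_1+d_2j_2\equiv0\pmod N$. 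Equality must then hold throughout, giving $\wt_2(j_1)+\wt_2(j_2)=m=\wt_2(j_1+j_2)$.

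The only delicate point I expect is bookkeeping with representatives in $\Z_N$. This concerns the wrap-around carry in the subadditivity claim and the exclusion of $k=0$ and $k=N$, which are exactly the cases the hypotheses $(j_1,j_2)\in\J$ and $j_1+j_2\neq N$ are there to remove. Everything else is direct.
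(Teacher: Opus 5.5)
Your proof is correct and follows essentially the same route as the paper. In part (1) you use subadditivity of $\wt_2$ modulo $N$ together with $\wt_2\bigl((j_1+j_2)_N\bigr)\ge m+1$; the paper writes $(j_1+j_2)_N=N-u$ where you write $(2^n-2^m)+r$. In part (2) you use Lemma~\ref{lem:LL} to get $\wt_2(j_1+j_2)=m$ and then force equality throughout, as the paper does. Your extra check, that $(j_1+j_2)_N=0$ is already excluded by $V_{d_1,d_2}(j_1,j_2)=m$, is a nice observation: it shows the hypothesis $j_1+j_2\neq N$ is redundant, so your closing sentence appealing to that hypothesis is unnecessary.
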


\begin{proof}
	(1) Write $(j_1+j_2)_N=N-u$. Then $0< u\leq 2^m-2$. Note that $\wt_2(u)\leq m-1$ and $\wt_2(j_1+j_2)\leq\wt_2(j_1)+\wt_2(j_2)$, then
	\[
		\wt_2(j_1)+\wt_2(j_2)
		\geq\wt_2(j_1+j_2)
		=\wt_2(N-u)=n-\wt_2(u)\geq m+1.
	\]
	Hence, $V_{d_1,d_2}(j_1,j_2)\geq m+1+\wt_2(-(d_1j_1+d_2j_2))\geq m+1$.
	
	(2) If $j_1+j_2\equiv 0\mod{(2^m-1)}$ and $j_1+j_2\neq N$, then $\wt_2(j_1+j_2)= m$ by Lemma~\ref{lem:LL}.
	Thus
	\begin{align*}
		V_{d_1,d_2}(j_1,j_2)&=\wt_2(j_1)+\wt_2(j_2)+\wt_2(-(d_1j_1+d_2j_2))\\
		&\geq m+\wt_2(-(d_1j_1+d_2j_2))\geq m.
	\end{align*}
	The equality holds only if
	\[
		d_1j_1+d_2j_2\equiv 0\mod{N},\qquad
		\wt_2(j_1)+\wt_2(j_2)=\wt_2(j_1+j_2)=m.
		\qedhere
	\]
\end{proof}

\begin{lemma}\label{lem:L-gamma}
If $\gamma\in S_F$, then $\sigma(\gamma)=\gamma^2\in S_F$. Hence if $\# S_F=2^m$, then $\F_2[\sigma]\gamma=\{f(\sigma)\gamma\mid  f(x)\in \F_2[x]\}$ is a subspace of $S_F$. 
\end{lemma}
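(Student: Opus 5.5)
The key observation is that $F=x^{d_1}+x^{d_2}$ has coefficients in $\F_2$, so it commutes with the Frobenius $\sigma$. I will use this to show that each component $F_{\gamma^2}$ is a composition of $F_\gamma$ with a bijection, so the two have the same Walsh spectrum up to a permutation of the arguments.

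\textbf{First step: $F_{\gamma^2}$ versus $F_\gamma$.} For any $v\in\F_{2^n}$,
\[
F(v)^2=v^{2d_1}+v^{2d_2}=F(v^2).
\]
Trace is invariant under $\sigma$, so
\[
F_{\gamma^2}(v^2)=\Tr_{2^n/2}\bigl(\gamma^2F(v^2)\bigr)=\Tr_{2^n/2}\bigl((\gamma F(v))^2\bigr)=F_\gamma(v).
\]
Hence $F_{\gamma^2}=F_\gamma\circ\sigma^{-1}$.

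\textbf{Second step: Walsh transforms.} Substitute $v=u^2$; as $u$ runs over $\F_{2^n}$, so does $v$, since $\sigma$ is bijective. Then
\[
W_{F_{\gamma^2}}(\omega)=\sum_{u}(-1)^{F_\gamma(u)+\Tr_{2^n/2}(\omega u^2)}=\sum_u(-1)^{F_\gamma(u)+\Tr_{2^n/2}(\omega^{2^{n-1}}u)}=W_{F_\gamma}\bigl(\sigma^{-1}(\omega)\bigr).
\]
Here I used $\Tr_{2^n/2}(\omega u^2)=\Tr_{2^n/2}\bigl(\sigma^{-1}(\omega)u\bigr)$.

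\textbf{Third step: conclusion.} Because $\sigma^{-1}$ permutes $\F_{2^n}$, the multisets of Walsh values of $F_\gamma$ and $F_{\gamma^2}$ coincide. So $F_\gamma$ is bent if and only if $F_{\gamma^2}$ is bent. In particular, $\gamma\in S_F$ implies $\gamma^2\in S_F$; the case $\gamma=0$ is trivial.

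\textbf{Fourth step: the subspace.} Now assume $\#S_F=2^m$. By Theorem~\ref{thm:SF-vector-space}, $S_F$ is an $\F_2$-subspace. By the third step it is stable under $\sigma$, so it is stable under every $\sigma^k$ and under every $\F_2$-linear combination $f(\sigma)=\sum c_k\sigma^k$. Hence $f(\sigma)\gamma\in S_F$ for every $f\in\F_2[x]$.

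The set $\F_2[\sigma]\gamma$ is the image of $\F_2[x]$ under the $\F_2$-linear map $f\mapsto f(\sigma)\gamma$, so it is itself a subspace, and it lies in $S_F$. No step is a real obstacle. The only care needed is to record explicitly that $F\circ\sigma=\sigma\circ F$, which holds because $F$ has coefficients in $\F_2$.
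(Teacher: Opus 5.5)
Your proof is correct and follows the paper's argument: the substitution $x\mapsto x^2$ together with $F(x^2)=F(x)^2$ gives $W_{F_{\gamma^2}}(b)=W_{F_\gamma}(b^{2^{n-1}})$, and Theorem~\ref{thm:SF-vector-space} then makes $S_F$ a $\sigma$-stable subspace containing $\F_2[\sigma]\gamma$. You additionally record that $F_\gamma$ and $F_{\gamma^2}$ are bent simultaneously, and you isolate $\sigma\circ F=F\circ\sigma$ as the only property used; this is exactly the generality in which the lemma is later applied in Theorem~\ref{thm:f-x2-SF}.
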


\begin{proof}
	We have
	\begin{align*}
		W_{F_{\gamma^2}}(b)&=\sum_{x\in\F_{2^n}}(-1)^{\Tr_{2^n/2}(\gamma^2F(x)+bx)}\\
		&=\sum_{x\in\F_{2^n}}(-1)^{\Tr_{2^n/2}(\gamma^2F(x^2)+b^{2^n} x^2)}
		=W_{F_{\gamma}}(b^{2^{n-1}}).
	\end{align*}
Thus if $\gamma\in S_F$, then $\gamma^2\in S_F$. If $\# S_F=2^m$, then $S_F$ is a vector space by Theorem~\ref{thm:SF-vector-space}, hence $\F_2[\sigma]\gamma \subseteq S_F$.
\end{proof}

\subsection{The study of $S_F$}

We now give a general result, which also applies to the binomials $x^{d_1}+x^{d_2}$.

For any $\gamma\in \F_{2^n}$, define
\[
	\ell(\gamma)=\dim_{\F_2} \F_2[\sigma]\gamma.
\]
Clearly, $\ell(\gamma)\leq [\F_2(\gamma):\F_2]$, and hence $\ell(\gamma)\leq m$ if $\F_2(\gamma)\neq \F_{2^n}$.
Denote by
\[
	\ell(n):=\min\{\ell(\gamma) :\F_2(\gamma)=\F_{2^n}\}.
\]
By definition, $\ell(n)$ can be regarded as the linear complexity of the Frobenius orbit.  

\begin{theorem}\label{thm:f-x2-SF} 
	Suppose that $F(x)$ is a vectorial function over $\F_{2^n}$ satisfying $\sigma\circ F=F\circ\sigma$ and $\#S_F=2^m$.
	If $\ell(n)>m$, then $S_F=\F_{2^m}$.
\end{theorem}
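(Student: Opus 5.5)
Our plan is to show first that $S_F$ is stable under $\sigma$, then use the hypothesis $\ell(n)>m$ to force every element of $S_F$ into a proper subfield, and finally pin $S_F$ down by counting.

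\emph{Step 1: stability under $\sigma$.} Since $\sigma\circ F=F\circ\sigma$, the computation in Lemma~\ref{lem:L-gamma} applies to a general $F$. For $\gamma\in\F_{2^n}$ we substitute $x\mapsto x^2$ in the Walsh sum and use $F(x^2)=F(x)^2$. This gives $\Tr_{2^n/2}(\gamma^2F(x^2))=\Tr_{2^n/2}(\gamma F(x))$, so $W_{F_{\gamma^2}}(b^2)=W_{F_\gamma}(b)$. Hence $\gamma\in S_F$ if and only if $\gamma^2\in S_F$, i.e.\ $\sigma(S_F)=S_F$. By Theorem~\ref{thm:SF-vector-space}, $\#S_F=2^m$ makes $S_F$ an $m$-dimensional $\F_2$-subspace. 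Therefore $\F_2[\sigma]\gamma\subseteq S_F$ for every $\gamma\in S_F$, and so $\ell(\gamma)\le m$.

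\emph{Step 2: every element of $S_F$ lies in a proper subfield.} Suppose some $\gamma\in S_F$ satisfies $\F_2(\gamma)=\F_{2^n}$. Then $\ell(\gamma)\ge\ell(n)>m$, which contradicts Step 1. Hence $\F_2(\gamma)$ is a proper subfield of $\F_{2^n}$ for every $\gamma\in S_F$. Thus
\[
S_F\subseteq\bigcup_{p\mid n,\ p\ \text{prime}}\F_{2^{n/p}}.
\]

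\emph{Step 3: from the union of subfields to $\F_{2^m}$.} This is the main obstacle. An $\F_2$-subspace contained in a finite union of subspaces need not lie in one of them over $\F_2$, so we cannot conclude directly. We will instead exploit that $S_F$ is a $\sigma$-stable subspace, i.e.\ an $\F_2[\sigma]$-submodule of $\F_{2^n}$.

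First we show $S_F$ is a subset of some $\F_{2^{n/p}}$. Take any $\gamma_1,\gamma_2\in S_F$ with $\F_2(\gamma_1)=\F_{2^{k_1}}$ and $\F_2(\gamma_2)=\F_{2^{k_2}}$. We would like to find an element of $S_F$ generating $\F_{2^{\operatorname{lcm}(k_1,k_2)}}$. The natural candidate is $\gamma_1+\gamma_2$, or some $f(\sigma)\gamma_1+g(\sigma)\gamma_2$. We use the $\F_2[\sigma]$-module decomposition of $\F_{2^n}$: by the normal basis theorem, $\F_{2^n}\cong\F_2[x]/(x^n-1)$. The submodule $S_F$ is then cyclic, generated by some element $\delta\in S_F$, and $S_F=\F_2[\sigma]\delta$. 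Applying Step 2 to $\delta$ gives $\delta\in\F_{2^{n/p}}$ for some prime $p\mid n$. Since each subfield is $\sigma$-stable, $S_F=\F_2[\sigma]\delta\subseteq\F_{2^{n/p}}$.

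Now we count dimensions. We have $m=\dim S_F\le n/p\le n/2=m$. This forces $p=2$ and $S_F=\F_{2^m}$. The key point here is the cyclicity of submodules of $\F_2[x]/(x^n-1)$ (a quotient of a PID). This cyclicity is what makes the union argument work, and is the step we would verify most carefully.
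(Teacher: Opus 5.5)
Your proof is correct, but Step~3 takes a different route from the paper.

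The paper argues by counting. If $S_F\neq\F_{2^m}$, then $S_F\cap\F_{2^m}$ is a proper subspace of $S_F$, so $S_F\setminus\F_{2^m}$ has at least $2^{m-1}$ elements. Each such $\gamma$ does not generate $\F_{2^n}$ (your Step~2) and is not in $\F_{2^m}$. Hence it lies in $\F_{2^{n/p}}\setminus\F_{2^{m/p}}$ for some odd prime $p\mid n$. An explicit estimate then shows that the union of these sets has fewer than $2^{m-1}$ elements, a contradiction.

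You instead use the normal basis theorem to identify $\F_{2^n}$ with $\F_2[x]/(x^n-1)$ as an $\F_2[\sigma]$-module. The $\sigma$-stable subspace $S_F$ is then cyclic, $S_F=\F_2[\sigma]\delta$. Since $\ell(\delta)=m<\ell(n)$, $\F_2(\delta)$ is a proper subfield containing $S_F$. Then $m\le[\F_2(\delta):\F_2]\le m$ gives $S_F=\F_{2^m}$.

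Your version replaces the numerical estimate with a standard structural fact. The paper's estimate bounds $\sum_p(2^{n/p}-2^{m/p})$ by $2^{m-1}$ in terms of the number of odd prime divisors of $m$, and its intermediate inequalities are somewhat loose. Your argument is also sharper in what it uses. Only the single element $\delta$, for which $\ell(\delta)=m$ exactly, has to fail to generate $\F_{2^n}$. So it goes through under the hypothesis ``$\ell(\gamma)\neq m$ for every generator $\gamma$ of $\F_{2^n}$'', which is implied by $\ell(n)>m$. The paper's argument, in turn, needs no module theory or normal basis theorem.

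The exploratory remark about $\gamma_1+\gamma_2$ at the start of your Step~3 is superseded by the cyclicity argument and can be dropped.
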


\begin{proof}
Assume $S_F\neq\F_{2^m}$.
Let $S_1=S_F\setminus\F_{2^m}$ and $S_2=S_F\cap\F_{2^m}$.
Since $S_F$ is a vector space by Theorem~\ref{thm:SF-vector-space}, we have
	\[
		\#S_1\geq1,\quad
		\#S_2\leq2^{m-1},\quad% \text{and}\
		S_F=S_1\cup S_2.
	\]
For any $\gamma\in S_1$, $\F_2(\gamma)\nsubseteq \F_{2^m}$. Note that $ \F_2[\sigma]\gamma \subseteq S_F$ by Lemma~\ref{lem:L-gamma}. If $\gamma$ is a generator of $\F_{2^n}$, then
	\[
		m=\dim_{\F_2}S_F \ge\dim_{\F_2} \F_2[\sigma]\gamma \ge\ell(n)>m,
	\]
	which is impossible.
	Thus $\F_2(\gamma)\neq \F_{2^n}$, $\F_2(\gamma)\subseteq \F_{2^{n/p}}$ for some odd prime factor of $n$.
	This forces that $n$ is not a power of $2$.
	
	Suppose that $p_1<p_2<\cdots<p_l$ are the odd prime factors of $m$.
	Then $\F_2(\gamma)\subseteq\F_{2^{n/p_i}},\gamma\in \F_{2^{n/p_i}}\setminus\F_{2^{m/p_i}}$ for some $i$.
	Thus
	\begin{align*}
		\#S_1&
		\le \sum_{i=1}^l (2^{n/p_i}-2^{m/p_i})
		\le l (2^{n/3}-2^{m/3})\\&
		\le 2^{3^{l-1}-1}(2^{2m/3}-2^{m/3})
		<2^{m/3-1}\cdot 2^{2m/3}=2^{m-1}
	\end{align*}
	and $\#S_F<2^{m-1}+2^{m-1}=2^m$, which contradicts to $\#S_F=2^m$.
	Hence $S_F=\F_{2^m}$.
\end{proof}

\begin{remark}
	Unfortunately there exist examples such that $\ell(n)\leq m$.
	We present experimental results on $\ell(n)$ in Table~\ref{table:t1}.
	It would be very interesting to learn more about $\ell(n)$.
	\begin{table}[H]
		\centering
		\caption{Values of $\ell(n)$}\label{table:t1}
		\begin{tabular}{cccccccc}
			\toprule
			$n$ & $\ell(n)$ && $n$ & $\ell(n)$ && $n$ & $\ell(n)$\\
			\midrule
			4 & 3 && 12 & 5 && 20 & 7 \\
			6 & 4 && 14 & 5 && 22 & 12 \\
			8 & 5 && 16 & 9 && 24 & 7 \\
			10 & 6 && 18 & 8 && 26 & 14 \\
			\bottomrule
		\end{tabular}
	\end{table}
\end{remark}

In the following, we present two cases for which the inequality $\ell(n)>m$ is satisfied.

\begin{proposition}\label{prop:2p}
One has $\ell(n)>m$ if
	\begin{enumerate}[1]
		\item $n=2p$, where $p$ is an odd prime and $2$ is a primitive root modulo $p$, or
		\item $n=2^k$, where $k$ is a positive integer.
	\end{enumerate}
\end{proposition}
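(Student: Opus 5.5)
The approach is to turn $\ell(\gamma)$ into the degree of a polynomial divisor of $x^n-1$ over $\F_2$, and then list the possible divisors in each of the two cases.

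\textbf{Reduction to divisors of $x^n-1$.} Regard $\F_{2^n}$ as an $\F_2[x]$-module via $x\cdot v=\sigma(v)$. Since $\sigma^n=\mathrm{id}$, every $\gamma$ has an annihilator ideal $(f_\gamma)$ with $f_\gamma\mid x^n-1$. Then
\[
\F_2[\sigma]\gamma\cong\F_2[x]/(f_\gamma),
\]
so $\ell(\gamma)=\deg f_\gamma$. Now suppose $f_\gamma\mid x^d-1$ for a proper divisor $d$ of $n$. Then $\sigma^d(\gamma)=\gamma$, so $\gamma\in\F_{2^d}$ and $\F_2(\gamma)\neq\F_{2^n}$. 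Hence if $\F_2(\gamma)=\F_{2^n}$, then $f_\gamma$ divides $x^n-1$ but divides none of $x^{n/q}-1$ for primes $q\mid n$. It therefore suffices to show that every divisor $f$ of $x^n-1$ with this property has $\deg f>m$.

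\textbf{Case $n=2^k$.} Over $\F_2$ we have $x^n-1=(x+1)^{n}$, so $f=(x+1)^j$ for some $j\le n$. The only prime divisor of $n$ is $2$, and $f\nmid x^{n/2}-1=(x+1)^{m}$ forces $j>m$. Hence $\ell(n)>m$.

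\textbf{Case $n=2p$ with $2$ a primitive root mod $p$.} Here
\[
x^n-1=(x^p-1)^2=(x+1)^2\Phi_p(x)^2 .
\]
Because $2$ has order $p-1$ modulo $p$, the cyclotomic polynomial $\Phi_p$ is irreducible over $\F_2$ of degree $p-1$. So $f=(x+1)^a\Phi_p^b$ with $a,b\in\{0,1,2\}$. The two conditions on $f$ translate as follows.
\begin{itemize}
\item $f\nmid x^p-1=(x+1)\Phi_p$ means $a=2$ or $b=2$.
\item $f\nmid x^2-1=(x+1)^2$ means $b\ge1$.
\end{itemize}
The admissible pairs of smallest degree are then:
\begin{itemize}
\item $(a,b)=(2,1)$, of degree $p+1$;
\item $(a,b)=(0,2)$, of degree $2p-2$.
\end{itemize}
Every other admissible pair has larger degree. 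Since $m=p\ge3$, both $p+1$ and $2p-2$ exceed $p$, so $\ell(n)>m$.

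\textbf{Main obstacle.} The only delicate point is the module-theoretic identification $\ell(\gamma)=\deg f_\gamma$, together with the translation of ``$\gamma$ generates $\F_{2^n}$'' into non-divisibility of $x^{n/q}-1$. Once that is in place, both cases are finite checks on the factorization of $x^n-1$.
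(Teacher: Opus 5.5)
Your proof is correct and follows essentially the same route as the paper. Both identify $\ell(\gamma)$ with the degree of the minimal polynomial $f_\gamma \mid x^n-1$ of $\sigma$ on $\F_2[\sigma]\gamma$, translate ``$\gamma$ generates $\F_{2^n}$'' into $f_\gamma \nmid x^{n/q}-1$ for the primes $q \mid n$, and then read the degree bound off the factorizations $(x+1)^{2^k}$ and $(x+1)^2\Phi_p(x)^2$. Your explicit enumeration of the exponent pairs $(a,b)$ also makes visible the $\Phi_p^2$ case of degree $2p-2$, which the paper covers only implicitly in its bound $p+1$ via $p\ge 3$.
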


\begin{proof}
	(1) Note that
	\[
		x^{2p}-1=(x^p-1)^2
		=(x-1)^2\Phi_p(x)^2,\quad \Phi_p(x):=\dfrac{x^p-1}{x-1}.
	\]
	Denote $\zeta\in\overline{\F}_2$ satisfying $\zeta\ne1$ and $\zeta^p=1$.
	Note that $\zeta^{2^k-1}=1$ if and only if $p\mid (2^k-1)$.
	Since $2$ is a primitive root modulo $p$, we have $\F_2(\zeta)=\F_{2^{p-1}}$ and thus the minimal polynomial $\Phi_p$ of $\zeta$ is irreducible.

	Let $f_\alpha(x)$ be the minimal polynomial of $\sigma$ on $\F_2[\sigma]\alpha$, where $\alpha $ is a generator of $\F_{2^n}$. By linear algebra, $\deg f_{\alpha}= \dim_{\F_2} \F_2[\sigma]\alpha=\ell(\alpha)$. Since $\sigma^{2p}(\alpha)=\alpha$  and $\sigma^p(\alpha)\neq \alpha$, we have
	\[
		f_{\alpha}(x)\mid (x^{2p}-1),\quad f_{\alpha}(x)\nmid (x^p-1),
		\quad f_{\alpha}(x)\nmid (x^2-1).
	\]
Thus either $\Phi_p(x)^2\mid f_{\alpha}$ or $(x-1)^2\Phi_p(x)\mid f_{\alpha}$. This implies that
	\[
		\ell(\alpha)=\deg f_{\alpha}\ge 2+(p-1)=p+1.
	\]
	Hence $\ell(2p)>p$.
	
	(2) 	In this case, we have
	\[
		f_\alpha(x)\mid(x^{2^k}-1)=(x-1)^{2^k},\quad f_\alpha(x)\nmid (x^{2^{k-1}}-1)=(x-1)^{2^{k-1}}.
	\]
Hence  $(x-1)^{2^{k-1}+1}\mid f_\alpha$ and then $\ell(\alpha)\ge 2^{k-1}+1$. Thus $\ell(2^k)>2^{k-1}$.
\end{proof}

\section{The equivalence of quadratic binomial vectorial functions with maximal bent components}

In this section, we always assume
\[
	\ell(n)>m,\qquad
	\max\{\wt_2(d_1),\wt_2(d_2)\}=2,
\]
and denote $s=\gcd(d_1,d_2,N)$.
Note that if $d_1=2^l$ for some integer $l\geq0$, then $F(x)=x^{d_1}+x^{d_2}$ is EA-equivalent to $x^{d_2}$.
By Theorem~\ref{thm:hu}, this implies that $d_2=(2^m+1)2^i$ whenever $S_F=2^m$.
Hence, we proceed to show the equivalence for the case where
\[
	\wt_2(d_1)=\wt_2(d_2)=2.
\]

\begin{theorem}\label{thm:min-wt-is-m}
	Assume that $\#S_F=2^m$ and $\wt_2(d_1)=\wt_2(d_2)=2$. Then $\nu_{d_1,d_2}=m$ and there exists $(j_1,j_2)\in\J_{d_1,d_2}$ such that
	\[
		\wt_2(j_1)+\wt(j_2)=m\ \text{and}\ d_1j_1+d_2j_2\equiv 0\mod{N}.
	\]
	Furthermore, there exists $0\leq j\leq 2^m-1$, such that $(j, 2^m-1-j)\in\J_{d_1,d_2}$, equivalently
	\[
		(d_1-d_2)j + (2^m-1) d_2 \equiv 0\mod{N}.
	\]
\end{theorem}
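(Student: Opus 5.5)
The plan is to pin down $S_F$ first, then read off the $2$-adic valuation of the Walsh coefficients at bent components, and finally analyse the polynomial $g_a(0)$ from Theorem~\ref{thm:th3} as a function of $a$.

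\textbf{Step 1: $S_F=\F_{2^m}$ and $\nu_{d_1,d_2}=m$.} Since $F(x^2)=F(x)^2$, we have $\sigma\circ F=F\circ\sigma$. Together with $\ell(n)>m$ and $\#S_F=2^m$, Theorem~\ref{thm:f-x2-SF} gives $S_F=\F_{2^m}$. By \eqref{eq:bound} with $\max\{\wt_2(d_1),\wt_2(d_2)\}=2$, we get $\nu_{d_1,d_2}\ge \lceil n/2\rceil=m$. For $a\notin\F_{2^m}$ the component $F_a$ is bent, so $W_{F_a}(b)=\pm2^m$ and $v_2(W_{F_a}(b))=m$. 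Theorem~\ref{thm:th3}(1) then forces $\nu_{d_1,d_2}=m$, and Theorem~\ref{thm:th3}(2) gives $g_a(b)=1$ for all $a\in\F_{2^n}\setminus\F_{2^m}$ and all $b$.

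\textbf{Step 2: the function $h(a)=g_a(0)$.} Take $b=0$. With the convention $0^0=1$, only the pairs with $d_1j_1+d_2j_2\equiv0\pmod N$ survive, so
\[
h(a):=g_a(0)=\sum_{(j_1,j_2)\in\J_0}a^{(j_1+j_2)_N},\qquad \J_0=\{(j_1,j_2)\in\J_{d_1,d_2}: d_1j_1+d_2j_2\equiv0 \bmod N\}.
\]
For $(j_1,j_2)\in\J_0$ the third weight term in $V_{d_1,d_2}$ vanishes, so $\wt_2(j_1)+\wt_2(j_2)=m$. Since $h(a)=1$ off $\F_{2^m}$, $h$ is not identically zero and $\J_0\neq\emptyset$. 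This proves the first assertion.

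\textbf{Step 3: degree argument on $h$.} Regard $h$ as a polynomial whose exponents $e=(j_1+j_2)_N$ lie in $[0,N-1]$.

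By Lemma~\ref{lem:deg}(1), every exponent satisfies $e\le 2^n-2^m$.

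The exponent $e=0$ cannot occur. Indeed, $e=0$ would mean $j_1+j_2=N$, and then $\wt_2(j_1)+\wt_2(j_2)\ge\wt_2(N)=n>m$, a contradiction.

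The polynomial $h(x)+1$ vanishes at all $2^n-2^m$ elements of $\F_{2^n}\setminus\F_{2^m}$. These are exactly the roots of
\[
P(x)=\frac{x^N-1}{x^{2^m-1}-1}=\sum_{k=0}^{2^m}x^{k(2^m-1)},
\]
which has degree exactly $2^n-2^m$. Since $\deg(h+1)\le 2^n-2^m$, we get $h+1=\lambda P$ for some $\lambda\in\F_{2^n}$.

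If $\lambda=0$, then $h\equiv1$, so $e=0$ would occur, which we excluded. Hence $\lambda\neq0$. Comparing coefficients shows that $x^{2^m-1}$ appears in $h$, so some $(j_1,j_2)\in\J_0$ has $(j_1+j_2)_N=2^m-1$.

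\textbf{Step 4: from the exponent to the pair $(j,2^m-1-j)$.} Lemma~\ref{lem:deg}(2) gives $\wt_2(j_1)+\wt_2(j_2)=\wt_2(j_1+j_2)=m$. So $j_1$ and $j_2$ have disjoint binary supports and their integer sum involves no carries. The integer sum is therefore at most $N$ and congruent to $2^m-1$ modulo $N$, hence equals $2^m-1$. Thus $j_2=2^m-1-j$ with $j=j_1$.

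Conversely, $(j,2^m-1-j)$ with $0\le j\le 2^m-1$ has $\wt_2(j)+\wt_2(2^m-1-j)=m$. Hence it lies in $\J_{d_1,d_2}$ exactly when the third weight term is $0$, i.e. when $d_1j+d_2(2^m-1-j)=(d_1-d_2)j+(2^m-1)d_2\equiv0\pmod N$. This is the stated equivalence.

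\textbf{Main obstacle.} The delicate step is Step 3: the exponents of $h$ must be correctly reduced into $[0,N-1]$, and the possibility $h\equiv1$ (i.e. $\lambda=0$) must be ruled out. The weight argument excluding $e=0$ handles the latter.
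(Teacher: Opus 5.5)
Your proof is correct, but its key step takes a different route from the paper's.

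The paper never identifies $S_F$. It uses only that $\sum_{J^0_{d_1,d_2}}x^{(j_1+j_2)_N}-1$ (your $h+1$) vanishes at the $2^n-2^m$ points of $\F_{2^n}\setminus S_F$, so its degree is at least $2^n-2^m$. Lemma~\ref{lem:deg}(1) bounds the degree above by the same number, so the top exponent $2^n-2^m=2^m(2^m-1)$ must occur. A no-carry argument then gives a pair $(2^mj,\,2^m(2^m-1-j))$, and closure of $\J_{d_1,d_2}$ under doubling yields $(j,2^m-1-j)$.

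You instead invoke Theorem~\ref{thm:f-x2-SF}, and hence the standing hypothesis $\ell(n)>m$, to get $S_F=\F_{2^m}$. You then pin down $h+1=\lambda P$ exactly and read off the lowest exponent $2^m-1$ directly.

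What your route buys:
\begin{itemize}
\item It proves the expansion \eqref{eq:h-expansion}, i.e.\ Corollary~\ref{cor:V-structure}, along the way.
\item It gets the first assertion immediately from $J^0_{d_1,d_2}\neq\emptyset$.
\item It explains why $h+1\neq 0$, which the paper merely asserts: no pair in $J^0_{d_1,d_2}$ can have $(j_1+j_2)_N=0$, since that would force $\wt_2(j_1)+\wt_2(j_2)\ge n>m$.
\item It proves the ``equivalently'' clause, which the paper leaves implicit.
\end{itemize}
Your Step~4 is also sound as written. Both candidates $2^m-1$ and $2^m-1+N$ for the integer sum have ordinary binary weight $m$, so zero carries follows either way, and that rules out the second candidate.

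What the paper's route buys is generality. The statement holds for any $S_F$ of size $2^m$, with no appeal to $\ell(n)>m$. Your exclusion of the exponent $0$, combined with the paper's degree lower bound, is exactly what makes that general argument rigorous.
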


\begin{proof} Since $\wt_2(d_1)=\wt_2(d_2)=2$, we have $\nu_{d_1,d_2}\geq m$ by Eq.~\eqref{eq:bound}.
	For $a\in\F_{2^n}\setminus S_F$, $W_{F_a}(b)=\pm2^m$ for all $b\in\F_{2^n}$. Hence $\nu_{d_1,d_2}= m$ by Eq.~\eqref{eq:bound} and Theorem~\ref{thm:th3}.
	
	By Theorem~\ref{thm:th3} again, for $a\in\F_{2^n}\setminus S_F$ and $b\in \F_{2^n}$, we have
	\begin{equation}\label{eq:g-a-b-1}
		g_a(b)=\sum_{(j_1,j_2)\in\J_{d_1,d_2}}a^{j_1+j_2}b^{-(d_1j_1+d_2j_2)}=1\in \F_2.
	\end{equation}
	This means that $g_a(x)$ as a polynomial of $x$, whose degree $<N$, has at least $2^n>N$ roots, hence $g_a(x)$ must be the zero polynomial.
	For $t\in \Z_N$, denote
	\[
		J_{d_1,d_2}^t:=\{(j_1,j_2)\in\J_{d_1,d_2} : d_1j_1+d_2j_2\equiv -t\mod{N}\}.
	\]
	Then the coefficient of $x^t$ in $g_a(x)$ is $\sum_{(j_1,j_2)\in J^t_{d_1,d_2}}a^{j_1+j_2}$.
	Hence,
	\[
		\sum_{(j_1,j_2)\in J^t_{d_1,d_2}}a^{j_1+j_2}=0\ (1\leq t<N),\qquad% \text{ and}\
		\sum_{(j_1,j_2)\in J^0_{d_1,d_2}}a^{j_1+j_2}=1.
	\]
	
	Since the non-zero polynomial
	\[
		\sum_{(j_1,j_2)\in J^0_{d_1,d_2}} x^{(j_1+j_2)_N}-1
	\]
	has $ 2^n-2^m$ zeros in $\F_{2^n}\setminus S_F$, so its degree is at least $2^n-2^m$.
	By Lemma~\ref{lem:deg}(1), its degree must be $2^n-2^m$.
	Thus there must have some $(j_1,j_2)\in J_{d_1,d_2}^0$ such that $(j_1+j_2)_N=2^n-2^m$.
	By Lemma~\ref{lem:deg}(2), $d_1j_1+d_2j_2\equiv 0\mod{N}$.
	% \begin{comment}
	% 	hence $(j_1,j_2)\in J^0_{d_1,d_2}$, which is a contradiction. Thus $J^t_{d_1,d_2}=\emptyset\ (1\leq t<N)$ and $\J_{d_1,d_2}=J_{d_1,d_2}^0$. By the same argument for $t=0$, there exists $(j_1,j_2)\in \J_{d_1,d_2}$ such that $(j_1+j_2)_N=2^n-2^m$, thus $\wt_2(j_1)+\wt_2(j_2)\geq \wt_2(j_1+j_2)=m$ and $\nu(d_1,d_2)=V_{d_1,d_2}(j_1,j_2)\geq m$, which contradicts to $\nu_{d_1,d_2}<m$. Hence we must have $\nu_{d_1,d_2}=m$.
		
	% 	\bigskip
	% 	By Eq.~\eqref{q1}, we have
	% 	\begin{align}
	% 		g_a(b)=\sum_{(j_1,j_2)\in\J_{d_1,d_2}}a^{j_1+j_2}b^{-(d_1j_1+d_2j_2)}=1.\label{sti}
	% 	\end{align}
	% 	Again, the polynomial $g_a(x)$ given by \eqref{qw} is the constant polynomial $1$, thus
	% 	\[\sum_{(j_1,j_2)\in J^t_{d_1,d_2}}a^{j_1+j_2}=0\ (1\leq t<N)\ \text{ and}\ \sum_{(j_1,j_2)\in J^0_{d_1,d_2}}a^{j_1+j_2}=1. \]
	% 	By repeating the argument for the polynomials $\sum_{(j_1,j_2)\in J^t_{d_1,d_2}} x^{(j_1+j_2)_N}$, we get $J^t_{d_1,d_2}=\emptyset\ (1\leq t<N)$ and $\J_{d_1,d_2}=J_{d_1,d_2}^0$.
		
	% 	Finally, the polynomial $\sum_{(j_1,j_2)\in J^0_{d_1,d_2}}x^{(j_1+j_2)_N}-1$ has $2^n-2^m$ zeros, so its degree $\geq 2^n-2^m$ and hence must be $2^n-2^m$, i.e. there exists $(j_1,j_2)\in \J_{d_1,d_2}$ such that $(j_1+j_2)_N=2^n-2^m$.
	% \end{comment}
	Suppose
	\[
		j_1=\sum_{a\in S_1} 2^a,\quad j_2=\sum_{a\in S_2} 2^a,\quad j_1+j_2=\sum_{a\in T} 2^a.
	\]
	where $S_1,\ S_2\subseteq \{0,1,\cdots, n-1\}$ and $T\subseteq \{0,1,\cdots, n\}$.
	Then $|T|\leq |S_1|+|S_2|$ with equality if and only if $S_1\cap S_2=\emptyset$.
	By the identity
	\[
		\wt_2(j_1)+\wt_2(j_2)=|S_1|+|S_2|=\wt_2(j_1+j_2)=|T\setminus \{n\}| =m,
	\]
	one has $S_1\cap S_2=\emptyset$, $T=S_1\cup S_2$ and $j_1+j_2=(j_1+j_2)_N= 2^n-2^m$.
	This means that $j_1=2^m j$ and $j_2=2^m (2^m-1-j)$ for some $j$.
\end{proof}

%By Theorem~\ref{thm:min-wt-is-m}, if $\#S_F=2^m$, then for all $a\in\F_{2^n}\setminus S_F$, we have \[\sum_{(j_1,j_2)\in \J_{d_1,d_2}}a^{j_1+j_2}b^{-d_1j_1-d_2j_2}=\sum_{(j_1,j_2)\in J^0_{d_1,d_2}}a^{j_1+j_2}=1. \]
We define the polynomial
\begin{equation}\label{eq:h-J-0}
	h_{J^0_{d_1,d_2}}(x)=\sum_{(j_1,j_2)\in J^0_{d_1,d_2}}x^{j_1+j_2}\in\F_2[x].
\end{equation}
Then $\deg h_{J^0_{d_1,d_2}}(x)= 2^n-2^m$ from Lemma~\ref{lem:deg} and Theorem~\ref{thm:min-wt-is-m}.
If $\ell(n)>m$, then $S_F=\F_{2^m}$ and $h_{J^0_{d_1,d_2}}(a)-1=0$ for all $a\in\F_{2^n}\setminus\F_{2^m}$ by Theorem~\ref{thm:f-x2-SF}.
Thus
\[
	h_{J^0_{d_1,d_2}}(x)-1=\frac{x^{2^n}-x}{x^{2^m}-x}=x^{2^m(2^m-1)}+x^{(2^m-1)^2}+\cdots+x^{2^m-1}+1,
\]
i.e.,
\begin{equation} \label{eq:h-expansion}
	h_{J^0_{d_1,d_2}}(x)=x^{2^m(2^m-1)}+x^{(2^m-1)^2}+x^{(2^m-1)(2^m-3)}+\cdots+
	x^{2^m-1}.
\end{equation}

\begin{corollary}\label{cor:V-structure}
	Assume that $\ell(n)>m$, $\#S_F=2^m$ and $\wt_2(d_1)=\wt_2(d_2)=2$.
	Then
	\begin{align*}
		V&:=\{j_1+j_2 :(j_1,j_2)\in J^0_{d_1,d_2}\}\\
		&\supseteq\{2^m-1,2(2^m-1),3(2^m-1),\ldots,2^m(2^m-1)\}.
	\end{align*}
	%\[j_1+j_2\in V:=\{2^m-1,2(2^m-1),3(2^m-1),\ldots,2^m(2^m-1)\}.\]
\end{corollary}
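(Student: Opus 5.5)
The corollary should follow by comparing coefficients in the identity \eqref{eq:h-expansion}, which expresses $h_{J^0_{d_1,d_2}}(x)$ as an explicit polynomial over $\F_2$. The plan is to show that each exponent on the right-hand side of \eqref{eq:h-expansion} must actually occur as $j_1+j_2$ for some $(j_1,j_2)\in J^0_{d_1,d_2}$.

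First, I would recall the set-up. By Theorem~\ref{thm:f-x2-SF} we have $S_F=\F_{2^m}$, using $\ell(n)>m$ together with $\sigma\circ F=F\circ\sigma$ (which holds for any function of the form $x^{d_1}+x^{d_2}$). By Theorem~\ref{thm:min-wt-is-m}, $\nu_{d_1,d_2}=m$. The argument in that proof shows that $\sum_{(j_1,j_2)\in J^0_{d_1,d_2}}a^{j_1+j_2}=1$ for every $a\notin\F_{2^m}$. The exponents $j_1+j_2$ here are taken as integers, and by Lemma~\ref{lem:deg}(1) they are at most $2^n-2^m$. This gives \eqref{eq:h-expansion}, whose right-hand side is $\sum_{k=1}^{2^m}x^{k(2^m-1)}$ with all coefficients equal to $1$.

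The key step is to read off the monomials. Write $h_{J^0_{d_1,d_2}}(x)=\sum_{v}c_v x^v$, where $c_v=\#\{(j_1,j_2)\in J^0_{d_1,d_2}: j_1+j_2=v\}\bmod 2$. Equation \eqref{eq:h-expansion} forces $c_{k(2^m-1)}=1$ for each $1\le k\le 2^m$. A coefficient equal to $1$ means the defining set of pairs is nonempty, so for each such $k$ there is some $(j_1,j_2)\in J^0_{d_1,d_2}$ with $j_1+j_2=k(2^m-1)$. This is exactly the claimed inclusion.

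The main point of care is that \eqref{eq:h-expansion} is an identity of polynomials and not merely of functions on $\F_{2^n}$. To justify this, I would note that $h_{J^0_{d_1,d_2}}(x)-1$ and $(x^{2^n}-x)/(x^{2^m}-x)$ both have degree at most $2^n-2^m$, and both vanish on all $2^n-2^m$ elements of $\F_{2^n}\setminus\F_{2^m}$. Hence their difference, whose degree is less than $2^n-2^m$, has more roots than its degree and is identically zero. Exponent reduction modulo $N$ causes no ambiguity because every exponent is strictly less than $N$. Once the polynomial identity is established, the coefficient comparison above completes the proof.
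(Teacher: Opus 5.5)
Your route is the paper's: the corollary is read off from the coefficients of \eqref{eq:h-expansion}. Your remark that $\sigma\circ F=F\circ\sigma$ holds automatically for $x^{d_1}+x^{d_2}$ correctly supplies the hypothesis of Theorem~\ref{thm:f-x2-SF}. Two steps in your justification of \eqref{eq:h-expansion} as a polynomial identity need repair, though both repairs are easy.

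\textbf{Exponents must be reduced modulo $N$.} Lemma~\ref{lem:deg}(1) bounds the residue $(j_1+j_2)_N$, not the integer $j_1+j_2$. Nothing you cite rules out a pair in $J^0_{d_1,d_2}$ whose binary supports overlap. If the overlap is at bit $p$, the rotation $(2^{n-1-p}j_1,2^{n-1-p}j_2)$, which is still in $J^0_{d_1,d_2}$, has integer sum at least $2^n>N$. So your claim that every exponent is strictly less than $N$ is unsupported.

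The fix is to define $h_{J^0_{d_1,d_2}}$ with exponents $(j_1+j_2)_N$, as in the proof of Theorem~\ref{thm:min-wt-is-m}, and to read $V$ in $\Z_N$. If you want the elements of $V$ to be genuine integer sums, add a carry argument. Every pair in $J^0_{d_1,d_2}$ has $\wt_2(j_1)+\wt_2(j_2)=m$. A pair with $(j_1+j_2)_N=k(2^m-1)$, $1\le k\le 2^m$, also has $\wt_2\bigl((j_1+j_2)_N\bigr)=m$ by Lemma~\ref{lem:LL}. Hence no carries occur, and $j_1+j_2=k(2^m-1)<N$ as integers, exactly as in the last step of the proof of Theorem~\ref{thm:min-wt-is-m}.

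\textbf{The degree step needs an exact degree.} Two polynomials of degree at most $2^n-2^m$ can differ by a polynomial of degree exactly $2^n-2^m$. You need the coefficient of $x^{2^n-2^m}$ in $h_{J^0_{d_1,d_2}}$ to be $1$. This holds for the following reasons:
\begin{enumerate}[1]
\item No pair has $(j_1+j_2)_N=0$, since $j_1+j_2=N$ would force $\wt_2(j_1)+\wt_2(j_2)=n>m$. So $h_{J^0_{d_1,d_2}}-1$ has constant term $1$ and is nonzero.
\item It vanishes at all $2^n-2^m$ points of $\F_{2^n}\setminus\F_{2^m}$, so its degree is at least $2^n-2^m$.
\item Its degree is at most $2^n-2^m$ by Lemma~\ref{lem:deg}(1) applied to the reduced exponents.
\end{enumerate}
So its degree is exactly $2^n-2^m$, which is the statement $\deg h_{J^0_{d_1,d_2}}=2^n-2^m$ that the paper draws from Theorem~\ref{thm:min-wt-is-m}. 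Over $\F_2$ both sides are then monic of the same degree, and your root-counting argument goes through.
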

%By the definition of $V_{d_1,d_2}(j_1,j_2)$, we know that there exists a positive integer $l$ such that $l\mid n$ and
	%\begin{equation}\label{st}V_{d_1,d_2}(j_1,j_2)=V_{d_1,d_2}(2j_1,2j_2)=\cdots
	%=V_{d_1,d_2}(2^{l-1}j_1,2^{l-1}j_2).\end{equation}
% \begin{comment}
% \subsection{Properties on $d_1,d_2$}

% \rev{We next study
% 	the following modular equation
% 	\begin{equation}\label{eq:mod}
% 		d_1j_1+d_2j_2\equiv0\mod{N},
% 	\end{equation}
% 	which can be rewritten as
% 	$j_1(d_2-d_1)\equiv d_2(j_1+j_2)\mod{N}$ if $d_2>d_1$, and $j_2(d_1-d_2)\equiv d_1(j_1+j_2)\mod{N}$ if $d_2<d_1$. Thus, we next assume $d_2>d_1>0$ for simplicity, i.e.,
% 	\begin{equation} \label{eq:a}
% 		(d_2-d_1)j_1\equiv d_2(j_1+j_2)\mod{N}.
% 	\end{equation}
% }{}
% \end{comment}

% We next study the following modular equation
% \[
% 	d_1j_1+d_2j_2\equiv0\mod{N}.
% \]
% Without loss of generality, we assume that $d_2>d_1>0$.
% Then
% \begin{equation} \label{eq:a}
% 	(d_2-d_1)j_1\equiv d_2(j_1+j_2)\mod{N}.
% \end{equation}
% We let $(j_1,j_2)=\bigl(2^mj,2^m(2^m-1-j)\bigr)\in\J_{d_1,d_2}$ such that
% \begin{equation} \label{eq:a1}
% 	(d_2-d_1)j\equiv d_2(2^m-1)\mod{N}
% \end{equation}
% as given in Theorem~\ref{thm:min-wt-is-m}. Then
% \[
% 	j(d_2-d_1) \equiv0\mod{(2^m-1)}.
% \]
By Theorem~\ref{thm:min-wt-is-m}, there exists $0\le j\le 2^m-1$ such that $(j_1,j_2)=(j, 2^m-1-j)\in\J_{d_1,d_2}$ and
\begin{equation} \label{eq:a1}
	(d_2-d_1)j\equiv d_2(2^m-1)\mod{N}.
\end{equation}
Set $t:=\gcd(j,2^m-1)$.
Then we can write
\begin{equation}\label{eq:m}
	d_2-d_1=\frac{2^m-1}{t}k
\end{equation}
for some integer $k$ and Eq.~\eqref{eq:a1} becomes
\begin{equation}\label{eq:b}
	\frac{k j}{t} \equiv d_2\mod{(2^m+1)}.
\end{equation}
Set $r:=\gcd(k,2^m+1)$.
By Eqs.~\eqref{eq:m} and \eqref{eq:b}, $r$ is a common factor of $d_1-d_2$, $d_2$ and $N$.
Thus $r\mid s:=\gcd(d_1,d_2,N)$.
\begin{itemize}
	\item If $s\mid (2^m-1)$, then $r\mid \gcd(s, 2^m+1)=1$ and hence $r=1$.
	\item If $s\mid (2^m+1)$, then $s\mid k$ by Eq.~\eqref{eq:m}.
	Hence $s\mid r$ and $r=s$.
\end{itemize}
Let $u:=\gcd(t,k)$.
Then $\gcd(u,r)=1$ due to $t\mid (2^m-1)$ and $r\mid (2^m+1)$.
Hence
%Notice that $t$ is coprime to $2^m+1$, then
\begin{equation}\label{eq:gcd-d-N}
	\gcd(d_2-d_1, N)=\frac{2^m-1}{t}\gcd\bigl(k,t(2^m+1)\bigr)
	=\frac{2^m-1}{t}ur.
\end{equation}
% \[\gcd(d_1, \frac{2^m-1}{t}ur)=\gcd(d_1, d_1-d_2,N)=s. \]
% If $W:=\{a\in\F_{2^n}:a^{d_2-d_1}=1\}$, then
% \begin{equation}\label{eq:W-generator}
% 	W=\Bigl\langle\alpha^{\frac{t(2^m+1)}{ru}}\Bigr\rangle
% 	%\del{\ \text{and}\ \#W=\frac{2^m-1}{t}ur.}
% \end{equation}
% with cardinality $(2^m-1)ur/t$, where $\langle\alpha\rangle=\F_{2^n}^*$.

\begin{figure}[H]
	\includegraphics[width=.5\textwidth]{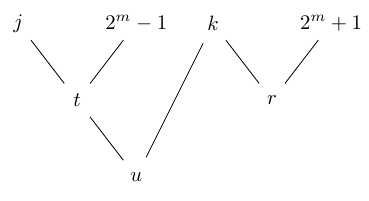}
		\caption{Division Relations of $t,k,u,r$}
\end{figure}

% \begin{comment}
% In this subsection, we will show $d_2-d_1=0\in\Z_{2^m-1}$ under the condition of Theorem~\ref{thm:f-x2-SF}.
% \rev{
% 	\begin{theorem}\label{thm:d2-d1-s-g-1}
% 		Suppose $\ell(a)>m$ for all generators $a$ of $\F_{2^n}$ and $\#S_F=2^m$. Set $s=\gcd(d_1,d_2, N)$ and $\max\{\wt_2(d_1),\wt_2(d_2)\}=2$. If $s>1$, then
% 		\[d_2-d_1\equiv0\mod{(2^m-1)}.\]
% \end{theorem}}
% {}
% \end{comment}

\begin{theorem}\label{thm:d2-d1-s-g-1}
	Assume that $\ell(n)>m$, $\#S_F=2^m$, $\wt_2(d_1)=\wt_2(d_2)=2$, and $\gcd(d_1,d_2, N)>1$.
	Then
	\[
		d_2-d_1\equiv0\mod{(2^m-1)}.
	\]
\end{theorem}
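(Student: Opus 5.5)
The plan is to get more out of Corollary~\ref{cor:V-structure} than the single pair $(j,2^m-1-j)$ of Theorem~\ref{thm:min-wt-is-m}, which on its own only yields \eqref{eq:m}. In the notation of \eqref{eq:m}, and since $t\mid 2^m-1$, the claim $d_2-d_1\equiv 0\pmod{2^m-1}$ is equivalent to $t\mid k$, i.e.\ $u=t$.

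\emph{Step 1: one congruence per pair.} By Corollary~\ref{cor:V-structure}, for every $c\in\{1,\dots,2^m\}$ there is $(j_1^{(c)},j_2^{(c)})\in J^0_{d_1,d_2}$ with $j_1^{(c)}+j_2^{(c)}=c(2^m-1)$. Rewriting $d_1j_1+d_2j_2\equiv 0$ gives
\[
(d_2-d_1)\,j_2^{(c)}\equiv -d_1\,c\,(2^m-1)\pmod N .
\]
\begin{itemize}
\item Reducing modulo $2^m-1$ and using \eqref{eq:m} gives $t\mid k\,j_2^{(c)}$, i.e.\ $\frac{t}{u}\mid j_2^{(c)}$ for every $c$.
\item Reducing modulo $2^m+1$, where $2^m-1\equiv -2$, gives $\frac{(2^m-1)k}{t}\,j_2^{(c)}\equiv 2d_1c\pmod{2^m+1}$.
\end{itemize}
The same holds with the roles of $j_1,j_2$ and $d_1,d_2$ exchanged, so also $\frac{t}{u}\mid j_1^{(c)}$. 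Hence $\frac{t}{u}$ divides $j_1^{(c)}+j_2^{(c)}=c(2^m-1)$, which is automatic. The content therefore has to come from combining the two components with the digit structure below.

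\emph{Step 2: digit structure.} For $(2^m+1)\nmid c$ (all $c<2^m+1$), Lemma~\ref{lem:LL} gives $\wt_2(c(2^m-1))=m$. Since each pair has $V=m$, Lemma~\ref{lem:deg}(2) then forces $\wt_2(j_1)+\wt_2(j_2)=m$. As in the proof of Theorem~\ref{thm:min-wt-is-m}, the binary digits of $j_1^{(c)}$ and $j_2^{(c)}$ are therefore disjoint and partition those of $c(2^m-1)$. I will choose special $c$ for which $c(2^m-1)$ has a rigid binary shape, for instance $c=1$, $c=2^i$, and $c=2^m$ (whose value $2^m(2^m-1)$ has exactly the top $m$ bits set).

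Suppose $p$ is a prime with $p\mid \frac{t}{u}$. Then $p\mid 2^m-1$, and $p$ must divide both $j_1^{(c)}$ and $j_2^{(c)}$ for every $c$. I would then vary $c$ over residues modulo $p$ and over different digit shapes, and also use that $J^0_{d_1,d_2}$ is closed under doubling, so that the pairs can be cyclically rotated. The aim is to produce a pair with $j_2^{(c)}\not\equiv 0\pmod p$, which would be a contradiction.

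\emph{Step 3: using $s>1$.} Apply Lemma~\ref{lem:rz} and split into the cases $s\mid 2^m+1$ (so $r=s$) and $s\mid 2^m-1$ (so $r=1$).
\begin{itemize}
\item If $r=1$, then $k$ is invertible modulo $2^m+1$. The second congruence of Step~1 then determines $j_2^{(c)}$ modulo $2^m+1$ as a multiple of $c$, so $j_2^{(c)}$ runs through all residue classes. Combined with $p\mid j_2^{(c)}$ and the rigid digit shapes, this should force a contradiction.
\item If $r=s>1$, both sides of that congruence are divisible by $s$. I would use that $s\mid d_1$ and $s\mid d_2$ make $F$ constant on cosets of the $s$-th roots of unity, as in Lemma~\ref{lem:rz}. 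This cuts down the admissible $c$ and again pins $j_2^{(c)}$ modulo $(2^m+1)/s$.
\end{itemize}

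The main obstacle I expect is Step~2. It requires showing that divisibility of all the $j_i^{(c)}$ by a common prime $p\mid 2^m-1$ is incompatible with the digit-partition property across all the $c$. My first attempt would be $c=1$: $j_1+j_2=2^m-1$ with disjoint digits, and $p\mid j_1$, $p\mid j_2$. Doubling this pair gives pairs summing to $2^i(2^m-1)$, and these must coincide with the pairs provided for $c=2^i$. From there I would try to derive a representation of $2^m-1$ by two disjoint-digit multiples of $p$ compatible with every rotation, and show it cannot exist unless $p=1$.
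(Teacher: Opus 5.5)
\textbf{There is a genuine gap.} Your Step~1 is correct, but it is the only part you actually prove. Steps~2 and~3 are stated as intentions (``this should force a contradiction''), and the decisive claim, that no prime $p$ can divide $t/u$, is never established.

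Moreover, the aim you state at the end cannot succeed as stated. The sum $2^m-1=0+(2^m-1)$ is a disjoint-digit representation by two multiples of every $p\mid 2^m-1$, and it is stable under all rotations. This trivial configuration really occurs in a case the theorem must exclude. Take $n=6$ and $(d_1,d_2)=(3,9)$: here $\ell(6)=4>3$, both weights are $2$, $s=3>1$, and $d_2-d_1=6\not\equiv0\pmod 7$. Yet the pairs $(0,7c)$, $c=1,\dots,8$, behave as follows.
\begin{itemize}
\item They lie in $J^0_{3,9}$, since $V_{3,9}(0,7c)=\wt_2(7c)=3=\nu_{3,9}$ and $9\cdot 7c\equiv0\pmod{63}$.
\item They form a family closed under doubling.
\item Both entries are divisible by $t/u=7$; the pair $(0,7)$ gives $j=0$, $t=7$, $k=6$, $u=1$.
\item They satisfy your congruence modulo $2^m+1$ automatically, since it is only a reduction of $d_1j_1+d_2j_2\equiv0$.
\end{itemize}
The same happens with the pairs $(0,c(2^m-1))$ for any $d_1$ as soon as $(2^m+1)\mid d_2$. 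So the ingredients you propose to combine (one pair per $c$ from Corollary~\ref{cor:V-structure}, the digit partition, doubling, and Step~1) are jointly consistent with the failure of the conclusion, and cannot produce the contradiction. Separately, your claim that $j_2^{(c)}$ runs through all residues modulo $2^m+1$ needs $\gcd(d_1,2^m+1)=1$, which does not follow from $r=1$.

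\textbf{The missing idea.} The paper counts the zeros of $F$ via Walsh sums at $b=0$; beyond \eqref{eq:m} no digit combinatorics is needed. By \eqref{eq:gcd-d-N},
\[
\sum_{a\in\F_{2^n}}W_{F_a}(0)=2^n\,\#F^{-1}(0)=2^n\bigl(1+\gcd(d_2-d_1,N)\bigr)=2^n\Bigl(1+\frac{(2^m-1)ur}{t}\Bigr).
\]
By Theorem~\ref{thm:f-x2-SF}, $S_F=\F_{2^m}$. By Lemma~\ref{lem:rz}, $W_{F_a}(0)=\mp2^m$ for every $a\notin\F_{2^m}$, according as $s\mid 2^m\pm1$. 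This sign is what the constancy of $F$ on cosets of $s$-th roots of unity really buys, and it is where $s>1$ is used, rather than merely to split cases. From these facts you can solve for $\Sigma:=\sum_{a\in\F_{2^m}}W_{F_a}(0)$.

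If $s\mid 2^m+1$, then $\Sigma=2^{n+m}+2^n(2^m-1)ur/t>2^{n+m}$. This is impossible, since each of the $2^m$ terms is at most $2^n$.

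Hence $s\mid 2^m-1$ and $r=1$, so $\Sigma=2^n+2^n(2^m-1)(u/t-1)$. On the other hand, $\Sigma=2^m\,\#\{x:\Tr_{2^n/2^m}(F(x))=0\}\ge 2^n$, because $F(\F_{2^m})\subseteq\F_{2^m}$. Thus $u\ge t$, and since $u\mid t$ we get $u=t$, i.e.\ $(2^m-1)\mid d_2-d_1$.
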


\begin{proof}
	% By Theorem~\ref{thm:f-x2-SF}, $S_F=\F_{2^m}$.
	% If $s>1$, then $s\mid (2^m+1)$ or $s\mid (2^m-1)$ by Lemma~\ref{lem:rz}.
	% Assume that $s\mid(2^m+1)$.
	% Then $W_{F_a}(0)=-2^m$ for any $a\in\F_{2^n}\setminus \F_{2^m}$  by Lemma~\ref{lem:rz}.
	% Note that
	% \begin{equation}\label{eq:sum-n-WFa0-fake1}
	% 	\begin{aligned}
	% 		\sum_{a\in\F_{2^n}}W_{F_a}(0)
	% 		&=\sum_{a\in\F_{2^n}\setminus\F_{2^m}}W_{F_a}(0)+\sum_{a\in\F_{2^m}}W_{F_a}(0)\\
	% 		&=-2^m(2^n-2^m)+\sum_{a\in\F_{2^m}}W_{F_a}(0).
	% 	\end{aligned}
	% \end{equation}
	% By Eqs.~\eqref{eq:m}, we also get
	% \begin{align*}
	% 	\sum_{a\in\F_{2^n}}W_{F_a}(0)
	% 	&=\sum_{a\in\F_{2^n}}\sum_{x\in\F_{2^n}}(-1)^{\Tr_{2^n/2}(ax^{d_1}+ax^{d_2})}\\
	% 	&=2^n+2^n\#\{x\in\F_{2^n}^*:x^{d_2-d_1}=1\}
	% \end{align*}
	% Since $x^{d_2-d_1}=1\iff x^{(2^m-1)ur/t}=1$ by Eq.~\eqref{eq:gcd-d-N}, we have
	% \begin{equation}\label{eq:sum-n-WFa0-fake2}
	% 	\sum_{a\in\F_{2^n}}W_{F_a}(0)
	% 	=2^n+2^n\cdot \frac{(2^m-1)ur}t.
	% \end{equation}
	% By Eqs.~\eqref{eq:sum-n-WFa0-fake1} and \eqref{eq:sum-n-WFa0-fake2}, we have
	% \begin{align*}
	% 	\sum_{a\in\F_{2^m}}W_{F_a}(0)
	% 	&=2^n+2^n\cdot \frac{(2^m-1)ur}t+2^m(2^n-2^m)\\
	% 	&=2^n\Bigl(\frac{2^m-1}{t}\Bigr)u\Bigl(\frac{t}{u}+r\Bigr)+2^n
	% 	\ge 2^{n+m}+2^n,
	% \end{align*}
	% which is impossible since $\sum_{a\in\F_{2^m}}W_{F_a}(0)\leq 2^{n+m}$.
	% Hence $s\nmid (2^m+1)$, this means that
	% \[
	% 	s\mid (2^m-1),\ r=1,\ \text{and}\ W_{F_a}(0)=2^m\ \text{for any}\ a\in\F_{2^n}\setminus S_F.
	% \]

	By Theorem~\ref{thm:f-x2-SF}, $S_F=\F_{2^m}$.
	If $s:=\gcd(d_1,d_2,N)>1$, then $s\mid (2^m+1)$ or $s\mid (2^m-1)$ by Lemma~\ref{lem:rz}.
	Assume that $s\mid(2^m+1)$.
	Then $W_{F_a}(0)=-2^m$ for any $a\in\F_{2^n}\setminus \F_{2^m}$  by Lemma~\ref{lem:rz}.
	Note that
	\begin{equation}\label{eq:sum-n-WFa0-fake1}
		\begin{aligned}
			\sum_{a\in\F_{2^n}}W_{F_a}(0)
			&=\sum_{a\in\F_{2^n}\setminus\F_{2^m}}W_{F_a}(0)+\sum_{a\in\F_{2^m}}W_{F_a}(0)\\
			&=-2^m(2^n-2^m)+\sum_{a\in\F_{2^m}}W_{F_a}(0).
		\end{aligned}
	\end{equation}
	By Eqs.~\eqref{eq:m}, we also get
	\begin{align*}
		\sum_{a\in\F_{2^n}}W_{F_a}(0)
		&=\sum_{a\in\F_{2^n}}\sum_{x\in\F_{2^n}}(-1)^{\Tr_{2^n/2}(ax^{d_1}+ax^{d_2})}\\
		&=2^n+2^n\#\{x\in\F_{2^n}^*:x^{d_2-d_1}=1\}
	\end{align*}
	Since $x^{d_2-d_1}=1\iff x^{(2^m-1)ur/t}=1$ by Eq.~\eqref{eq:gcd-d-N}, we have
	\begin{equation}\label{eq:sum-n-WFa0-fake2}
		\sum_{a\in\F_{2^n}}W_{F_a}(0)
		=2^n+2^n\cdot \frac{(2^m-1)ur}t.
	\end{equation}
	By Eqs.~\eqref{eq:sum-n-WFa0-fake1} and \eqref{eq:sum-n-WFa0-fake2}, we have
	\begin{equation}\label{eq:sum-m-fake}
		\sum_{a\in\F_{2^m}}W_{F_a}(0)
		=2^n+2^n\cdot \frac{(2^m-1)ur}t+2^m(2^n-2^m)
		>2^{n+m},
	\end{equation}
	which is impossible since $\sum_{a\in\F_{2^m}}W_{F_a}(0)\leq 2^{n+m}$.
	Hence $s\nmid (2^m+1)$, this means that
	\[
		s\mid (2^m-1),\ r=1,\ \text{and}\ W_{F_a}(0)=2^m\ \text{for any}\ a\in\F_{2^n}\setminus S_F.
	\]
	
	% Now Eq.~\eqref{eq:sum-n-WFa0-fake1} becomes
	% \[
	% 	\sum_{a\in\F_{2^n}}W_{F_a}(0)
	% 	=2^m(2^n-2^m)+\sum_{a\in\F_{2^m}}W_{F_a}(0).
	% \]
	% By Eq.~\eqref{eq:sum-n-WFa0-fake2}, we have
	Now Eq.~\eqref{eq:sum-m-fake} becomes
	\begin{align*}
		\sum_{a\in\F_{2^m}}W_{F_a}(0)&
		=2^n+2^n\cdot\frac{(2^m-1)t}u-2^m(2^n-2^m)\\&
		=2^n+2^n(2^m-1)\Bigl(\frac{u}{t}-1\Bigr)\le2^n.
	\end{align*}
	On the other hand,
	\begin{align*}
		\sum_{a\in\F_{2^m}}W_{F_a}(0)&=
		\sum_{x\in\F_{2^n}}\sum_{a\in\F_{2^m}}(-1)^{\Tr_{2^m/2}(a\Tr_{2^n/2^m}(x^{d_1}+x^{d_2}))}\\
		&=2^m\#\{x\in\F_{2^n} :\Tr_{2^n/2^m}(x^{d_1}+x^{d_2})=0\}\geq 2^n.
	\end{align*}
	This forces $t=u$.
	Hence $d_2-d_1=\frac ku(2^m-1)$ is divisible by $2^m-1$.
\end{proof}

%\rev{By Theorem~\ref{thm:d2-d1-s-g-1} and \ref{thm:d2-d1-s-1}, we can assume}{}
Under the assumptions in Theorem~\ref{thm:d2-d1-s-g-1}, we may assume that
\[
	d_2-d_1=(2^m-1)k,\qquad \gcd(k,2^m+1)=1.
\]
Then we have
\[
	s=\gcd(d_1,d_2,N)=\gcd(d_1,d_2-d_1,N)=\gcd(d_1,(2^m-1)k,N)=\gcd(d_1,2^m-1).
\]

\begin{remark}
  In particular, if $d_1=2^i+1$ and $d_2=2^m+1$, $0\leq i\leq m-1$, Xie et al.\cite{xie} showed that $F$ has maximal number of bent components if and only if $i=0$. However, by Theorem~\ref{thm:d2-d1-s-g-1}, we can directly obtain that $i=0$ if $F$ has maximal number of bent components and $s=\gcd(2^i+1,2^m-1)>1$.
\end{remark}

\begin{theorem}\label{2}
	Assume that $\ell(n)>m$, $\#S_F=2^m$, $\wt_2(d_1)=\wt_2(d_2)=2$, and $\gcd(d_1,d_2, N)>1$.
	Then
	% Let $F(x)=x^{d_1}+x^{d_2}:\F_{2^n}\rightarrow\F_{2^n}$ and $\#S_F=2^m$.
	% Assume that $\ell(n)>m$ and
	% If $\wt_2(d_1)=\wt_2(d_2)=2$, then
	$F(x)$ is affine equivalent to $x^{1+2^l}+x^{2^l+2^m}$,
	% $d_1=1+2^l$ and $d_2=2^l+2^m$,
	where $0<l<m$.% and $2\mid \fracm{\gcd(m,l)}$.
\end{theorem}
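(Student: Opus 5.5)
By Theorem~\ref{thm:d2-d1-s-g-1} we already know $d_2\equiv d_1\pmod{2^m-1}$. My plan is to turn this congruence into a statement about binary exponents, normalize $d_1$, and then check the few cases that remain. Write $d_1=2^{a}+2^{b}$ and $d_2=2^{c}+2^{e}$ with $a\neq b$ and $c\neq e$ in $\{0,\dots,n-1\}$. Replacing $F$ by $F\circ x^{2^{n-a}}$ is an affine equivalence. It multiplies $d_1,d_2$ by a power of $2$ modulo $N$, which preserves the hypotheses ($\#S_F$, the weights, and $\gcd(d_1,d_2,N)$). So I may assume $d_1=1+2^{l}$ with $0<l<n$. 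If $l>m$, a further twist by $x^{2^{n-l}}$ turns $d_1$ into $1+2^{n-l}$. Hence $0<l\le m$.

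\textbf{Key step: reading off exponents modulo $m$.} I will show that $\{a,b\}\equiv\{c,e\}\pmod m$ as multisets. Modulo $2^m-1$ we have $2^x\equiv 2^{x\bmod m}$. If $a\not\equiv b\pmod m$, then $2^{a\bmod m}+2^{b\bmod m}$ is the unique reduced representative, and it has binary weight $2$. If $c\equiv e\pmod m$, then $d_2\equiv 2^{(c+1)\bmod m}$, which has weight $1$. Uniqueness of binary expansions in $\{0,\dots,2^m-2\}$ then gives the claim. This needs $m\ge 2$; the case $n=2$ is checked directly. The case $l=m$ is special: $d_1=1+2^m$, and the congruence forces $\{c,e\}\subseteq\{0,m\}$. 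Then $d_2=d_1$, or $d_2=2^{2m}+\dots$ reduces to weight $1$, so that case is degenerate or already the monomial $x^{2^m+1}$. For $0<l<m$ the multiset $\{0,l\}$ has distinct residues. So $\{c,e\}$ is one of $\{m,l\}$, $\{0,l+m\}$, $\{m,l+m\}$, the choice $\{0,l\}$ being excluded because $d_2\neq d_1$.

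\textbf{Case analysis.} I would run through the three possibilities as follows.
\begin{enumerate}[i]
\item $d_2=2^l+2^m$. This is exactly the family in Theorem~\ref{thm:pott}, which is the desired form.
\item $d_2=1+2^{l+m}$. Corollary~\ref{cor:pott} and the affine twist $x\mapsto x^{2^{m+l}}$ preceding it map this to $x^{1+2^{m-l}}+x^{2^{m-l}+2^m}$, after reducing exponents modulo $N$. This is again of the desired form, with $l$ replaced by $m-l\in(0,m)$.
\item $d_2=2^m+2^{l+m}$. This is the function of Proposition~\ref{pro:qua}, whose number of bent components is not maximal. It is therefore excluded by the hypothesis $\#S_F=2^m$.
\end{enumerate}

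\textbf{Expected obstacle.} I expect the main care to be in the residue argument. One has to rule out the degenerate collisions ($c\equiv e\pmod m$, $l\equiv 0\pmod m$, small $m$) and confirm that the normalizing twists preserve every hypothesis. I would check these first. The remaining steps are just citations of Theorem~\ref{thm:pott}, Corollary~\ref{cor:pott} and Proposition~\ref{pro:qua}.
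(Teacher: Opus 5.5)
Your proof is correct, and it takes a genuinely different route from the paper.

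\textbf{Comparison with the paper.} The paper works with the integer $d_2-d_1$ itself. It invokes Lemma~\ref{lem:LL} (using $\gcd(k,2^m+1)=1$) to get $\wt_2(d_2-d_1)=m$, and normalizes to $d_1=1+2^l<d_2=2^{k_1}+2^{k_2}$. It then writes out the binary expansion of $d_2-d_1$, with borrows, in the cases $l<k_1$ and $k_1\le l\le k_2$. A size estimate on $(2^{k_1}-1)(2^{l-k_1}-1)$ finally forces $k_1\in\{0,l\}$. You use only $d_2\equiv d_1\pmod{2^m-1}$, together with the fact that $2^x \bmod (2^m-1)$ depends only on $x \bmod m$. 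This gives the same three candidates $2^l+2^m$, $1+2^{l+m}$, $2^m+2^{l+m}$ in one step. It needs neither the weight identity nor an ordering of $d_1,d_2$, and it treats the boundary case $l=m$ explicitly, which the paper's normalization $0<l<m$ passes over.

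\textbf{Small repairs.}
\begin{enumerate}[1]
\item The uniqueness of the reduced representative $2^{a'}+2^{b'}$ requires $2^{m-1}+2^{m-2}\le 2^m-2$, i.e.\ $m\ge 3$, not $m\ge 2$. For $m=2$ one has $1+2\equiv 0\pmod 3$. So the case $n=4$, rather than the vacuous $n=2$, must be checked by hand. Since $2^x\equiv(-1)^x\pmod 3$, a weight-$2$ exponent is $\equiv 0\pmod 3$ exactly when its two bit positions have different parity. Hence for $d_1=3$ you again get $d_2\in\{6,9,12\}$, the same three cases, and for $d_1=5$ only $d_2=5$.
\item Excluding $d_2=d_1$ needs a justification. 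This arises for the pair $\{0,l\}$, and in the case $l=m$, where the congruence forces $d_2=d_1$ and nothing else. Note that then $F\equiv 0$, so $S_F=\F_{2^n}$, which contradicts $\#S_F=2^m$.
\item Drop the remark about ``$d_2=2^{2m}+\cdots$'' and ``the monomial $x^{2^m+1}$''. All bit positions lie in $\{0,\dots,n-1\}$, so that alternative cannot occur. If it did, it would contradict the very statement you are proving.
\end{enumerate}
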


\begin{proof}
	We can assume $d_1=1+2^{l}$ and $d_2=2^{k_1}+2^{k_2}$ under the affine equivalent, where $0<l<m,0\leq k_1<k_2<n$ and $d_1<d_2$. By Lemma~\ref{lem:LL}, we get $\wt_2(d_2-d_1)=\wt_2((2^m-1)k)=m$.
	
	Assume that $l<k_1$.
	Then
	\[
		d_2-d_1=2^{k_1}-1+2^{k_2}-2^l
		=1+2+\cdots+2^{l-1}+2^{l+1}+\cdots+2^{k_1-1}+2^{k_2}
	\]
	and $m=\wt_2(d_2-d_1)=k_1$.
	Since $d_2-d_1=(2^m-1)+(2^{k_2}-2^l)$ is divisible by $2^m-1$, this forces $k_2=m+l$, i.e. $d_1=1+2^l,d_2=2^m+2^{m+l}$.
	This contradicts to $\#S_F=2^m$ by Lemma~\ref{pro:qua}.
	Hence $k_1\le l\le k_2$.

	Now
	\[
		d_2-d_1=1+2+\cdots+2^{k_1-1}+2^l+2^{l+1}+\cdots+2^{k_2-1}
	\]
	and $\wt_2(d_2-d_1)=k_1+k_2-l=m$, i.e., $k_2=m+l-k_1$.
	Since
	\[
		d_2-d_1=2^{m+l-k_1}+2^{k_1}-2^l-1
		=2^{l-k_1}(2^m-1)+(2^{k_1}-1)(1-2^{l-k_1}),
	\]
	we obtain that $(2^{k_1}-1)(2^{l-k_1}-1)$ is divisible by $2^m-1$.
	Since
	\[
		(2^{k_1}-1)(2^{l-k_1}-1)
		\le (2^{k_1}-1)(2^{m-1-k_1}-1)
		=2^{m-1}+1-2^{k-1}-2^{m-1-k_1}<2^m-1,
	\]
	we have $k_1=0$ or $k_1=l$.
	If $k_1=0$, then $F(x)=x^{1+2^l}+x^{1+2^{l+m}}$ is affine equivalent to $F(x^{2^{m-l}})=x^{2^{m-l}+2^m}+x^{1+2^{m-l}}$.
	If $k_1=l$, then $F(x)=x^{1+2^l}+x^{2^l+2^m}$.
\end{proof}

\section{Bounds on the nonlinearity and differential uniformity}

For a vectorial function $F:\F_{2^n}\rightarrow\F_{2^n}$, the \emph{sum-of-square-indicator} of the component function $F_u$, $u\in\F_{2^n}^*$, is defined as
\[\nu(F_u)=2^{-n}\sum_{v\in\F_{2^n}}W_{F_u}^4(v).\]
The lower bound on $\nu(F_u)$ is given by the following lemma.
\begin{lemma}[\cite{anb}] \label{lemma:anb1}
	\[
		\sum_{u\in\F_{2^n}^*}\nu(F_u)\geq (2^n-1)2^{2n+1},
	\]
and the	equality holds if and only if $F$ is APN.
\end{lemma}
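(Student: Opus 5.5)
The plan is to express the total sum-of-square-indicator as a count of solutions of a linear-plus-nonlinear system, and then compare that count with the number of "trivial" solutions, which are exactly the ones present for an APN function.

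\emph{Step 1 (fourth moment for one component).} Expand $W_{F_u}(v)^4$ as a fourfold sum over $x,y,z,w\in\F_{2^n}$ and sum over $v$ first. Orthogonality of additive characters gives
\[
\sum_{v\in\F_{2^n}}W_{F_u}^4(v)=2^n\sum_{\substack{x,y,z,w\\ x+y+z+w=0}}(-1)^{\Tr_{2^n/2}(u(F(x)+F(y)+F(z)+F(w)))}.
\]

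\emph{Step 2 (sum over all $u$).} Sum the identity of Step 1 over all $u\in\F_{2^n}$, including $u=0$. Orthogonality in $u$ gives
\[
\sum_{u\in\F_{2^n}}\nu(F_u)=2^{-n}\cdot 2^{2n}M=2^nM,
\]
where
\[
M=\#\{(x,y,z,w): x+y+z+w=0,\ F(x)+F(y)+F(z)+F(w)=0\}.
\]
The term $u=0$ equals $2^{-n}\cdot 2^n\cdot(2^n)^4=2^{3n}$, since $W_{F_0}(v)=2^n\delta_{v,0}$. Hence
\[
\sum_{u\neq0}\nu(F_u)=2^nM-2^{3n}.
\]

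\emph{Step 3 (lower bound on $M$).} Call a quadruple trivial if it splits into two equal pairs: $x=y,\ z=w$, or $x=z,\ y=w$, or $x=w,\ y=z$. Every trivial quadruple is a solution. By inclusion--exclusion there are $3\cdot2^{2n}-2\cdot 2^n$ of them: each of the three types contributes $2^{2n}$, and the all-equal quadruples ($2^n$ of them) are counted three times. Therefore
\[
\sum_{u\neq0}\nu(F_u)\ge 3\cdot2^{3n}-2^{2n+1}-2^{3n}=(2^n-1)2^{2n+1}.
\]

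\emph{Step 4 (equality case).} Equality holds iff every solution is trivial. Suppose there is a nontrivial solution. Then $x,y,z$ are pairwise distinct and $w=x+y+z$. Put $a=x+y\neq0$. Then $x$, $z$, $x+a$, $z+a$ are four distinct points, and $F(x)+F(x+a)=F(z)+F(z+a)$ (using $z+a=w$). So the equation $F(X+a)+F(X)=b$ has at least four solutions, i.e. $F$ is not APN.

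Conversely, suppose $F$ is not APN. Then some equation $F(X+a)+F(X)=b$ with $a\neq0$ has two solutions $x,z$ with $z\notin\{x,x+a\}$. The quadruple $(x,x+a,z,z+a)$ is then a nontrivial solution.

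The argument is mostly bookkeeping. The only place needing care is the inclusion--exclusion count of trivial quadruples together with the subtraction of the $u=0$ term, since an error there would shift the constant in the bound.
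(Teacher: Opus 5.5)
The paper does not prove this lemma; it imports it from \cite{anb}, so there is no in-paper argument to compare against. Your proof is correct and self-contained, and it is the standard argument. The identity $\sum_{u\ne0}\nu(F_u)=2^nM-2^{3n}$ holds. The inclusion--exclusion count $3\cdot2^{2n}-2^{n+1}$ of trivial quadruples is right. The correspondence between nontrivial quadruples and differential equations with four solutions also holds: in a nontrivial solution, $x,y,z,w$ are pairwise distinct, because any coincidence forces one of the three trivial patterns.

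One harmless slip: in Step~2 the $u=0$ term is $\nu(F_0)=2^{-n}(2^n)^4=2^{3n}$. The product $2^{-n}\cdot2^n\cdot(2^n)^4$ you wrote equals $2^{4n}$, but the value $2^{3n}$ you actually use is correct.

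If you want a shorter equality argument, substitute $a=x+y$. Then $M=\sum_{a}\sum_b\delta_{a,b}^2$ with $\delta_{a,b}=\#\{x:F(x+a)+F(x)=b\}$, and hence $\sum_{u\ne0}\nu(F_u)=2^n\sum_{a\ne0}\sum_b\delta_{a,b}^2$. For $a\ne0$, every $\delta_{a,b}$ is even and $\sum_b\delta_{a,b}=2^n$. So $\sum_b\delta_{a,b}^2\ge2\cdot2^n$, with equality iff all $\delta_{a,b}\in\{0,2\}$. This gives both the bound and the APN characterization at once, without the inclusion--exclusion or the case analysis of Step~4.
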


We again  assume $\ell(n)>m$, $\sigma\circ F=F\circ \sigma$ and $\#~S_F=2^m$. 
By Theorem~\ref{thm:f-x2-SF} and Lemma~\ref{lemma:anb1}, we have
 \[ \begin{split}	\sum_{u\in\F_{2^m}^*}\nu(F_u)=& \sum_{u\in\F_{2^n}^*}\nu(F_u)-\sum_{u\in\F_{2^n}\setminus\F_{2^m}}\nu(F_u)\\	
 	\geq & (2^n-1)2^{2n+1}-2^{2n}(2^n-2^m)=2^{2n}(2^n+2^m-2). 	
 \end{split} \]
By the definition of $\nu(F_u)$, we know
\begin{align*}
	\sum_{u\in\F_{2^m}^*}\nu(F_u)
	&\le 2^{-n}\max_{u\in\F_{2^m}^*,v\in\F_{2^n}}
	W^2_{F_u}(v)\sum_{u\in\F_{2^m}^*}\sum_{v\in\F_{2^n}}W_{F_u}^2(v)\\
	&=2^{n}(2^m-1)\max_{u\in\F_{2^m}^*,v\in\F_{2^n}}
	W^2_{F_u}(v).
\end{align*}
Therefore, %by Eq.~\eqref{eq:nu},
we have
\begin{equation}\label{eq:lower-bound-max}
	\max_{u\in\F_{2^m}^*,v\in\F_{2^n}}
	W^2_{F_u}(v)\geq\frac{2^{2n}(2^n+2^m-2)}{2^{n}(2^m-1)}= 2^n(2^m+2).
\end{equation}

\begin{theorem}\label{thm:bd1}
	Assume that $\ell(n)>m$,  $\sigma\circ F=F\circ \sigma$ and $\#S_F=2^m$. Then the nonlinearity of $F(x)$ is
	\[
		\N_F\leq 2^{n-1}-\frac{1}{2}\sqrt{2^n(2^m+2)}.
	\]
	Furthermore, if $F$ is plateaued, then $\N_F\leq 2^{n-1}-2^{\lfloor\frac{3n}{4}\rfloor}$.
\end{theorem}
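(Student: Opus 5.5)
The first bound follows directly from inequality \eqref{eq:lower-bound-max}, which was obtained just before the statement under exactly its hypotheses. Since $\ell(n)>m$, $\sigma\circ F=F\circ\sigma$ and $\#S_F=2^m$, Theorem~\ref{thm:f-x2-SF} gives $S_F=\F_{2^m}$. For $u\in\F_{2^n}\setminus\F_{2^m}$ the component $F_u$ is bent, so $\nu(F_u)=2^{-n}\cdot 2^n\cdot 2^{2n}=2^{2n}$. Subtracting these terms from the lower bound of Lemma~\ref{lemma:anb1} leaves $\sum_{u\in\F_{2^m}^*}\nu(F_u)\ge 2^{2n}(2^n+2^m-2)$. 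Bounding each $W^4$ by $\max W^2\cdot W^2$ and applying Parseval ($\sum_v W_{F_u}^2(v)=2^{2n}$) then yields \eqref{eq:lower-bound-max}, namely
\[
\max_{u\in\F_{2^m}^*,v}W_{F_u}^2(v)\ge 2^n(2^m+2).
\]
Hence $\max_{a\ne0,\omega}|W_{F_a}(\omega)|\ge\sqrt{2^n(2^m+2)}$. Substituting into the definition $\N_F=2^{n-1}-\frac12\max|W_{F_a}(\omega)|$ gives the first inequality.

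For the plateaued case I would simply invoke Theorem~\ref{platea}: its hypotheses are that $F$ is plateaued and $\#S_F=2^m$, both of which hold, so it gives $\N_F\le 2^{n-1}-2^{\lfloor 3n/4\rfloor}$ at once.

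As a self-contained check, one could also argue as follows. A non-bent plateaued component $F_u$ with $u\in\F_{2^m}^*$ has $|W_{F_u}|\in\{0,2^{(n+k)/2}\}$ with $k\ge2$ even. Its sum-of-square indicator is then $\nu(F_u)=2^{2n+k}$. Combining this with the lower bound $\sum_{u\in\F_{2^m}^*}\nu(F_u)\ge 2^{2n}(2^n+2^m-2)$ forces some $k$ with $2^k(2^m-1)\ge 2^n+2^m-2$, so $k\ge m+1$, and then rounding to the correct parity gives $(n+k)/2\ge\lfloor 3n/4\rfloor+1$.

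There is no real obstacle here. The only care needed is in the bookkeeping of the subtraction of the bent terms and the normalization of $\nu$, both of which were already carried out in deriving \eqref{eq:lower-bound-max}.
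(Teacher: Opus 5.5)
Your proof is correct. The first inequality is handled exactly as in the paper: \eqref{eq:lower-bound-max} gives $\max_{u\in\F_{2^m}^*,v}W_{F_u}^2(v)\ge 2^n(2^m+2)$, and a maximum over the smaller set $\F_{2^m}^*$ bounds from below the maximum in the definition of $\N_F$.

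For the plateaued clause, your primary route differs from the paper's. You simply invoke Theorem~\ref{platea}, which is legitimate because it needs only that $F$ be plateaued with $\#S_F=2^m$. The paper instead re-derives the bound from \eqref{eq:lower-bound-max}. It writes the maximum of $W_{F_u}^2(v)$ over $u\in\F_{2^m}^*$ as $2^{n+k}$ with $k$ even and deduces $2^k\ge 2^m+2$. Hence $k\ge 2\lfloor m/2\rfloor+2$, and so $\frac12\, 2^{(n+k)/2}\ge 2^{\lfloor 3n/4\rfloor}$.

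Your ``self-contained check'' is essentially this same argument. You route it through $\nu(F_u)=2^{2n+k_u}$ and average over the $2^m-1$ components, rather than passing through the max inequality; both arrive at $2^k\ge 2^m+2$ and use the same parity rounding.

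What each buys: citing Theorem~\ref{platea} makes explicit that the plateaued clause uses neither $\ell(n)>m$ nor $\sigma\circ F=F\circ\sigma$. The paper's derivation shows that its new general bound, specialized to plateaued $F$, recovers the known bound of Anbar et al., giving an independent proof of it in this setting.
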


\begin{proof}
	The bound of $\N_F$ is directly obtained by Eq.~\eqref{eq:lower-bound-max}.
	If $F$ is plateaued, we can take
	\[
		\max_{u\in\F_{2^m}^*,v\in\F_{2^n}}
		W^2_{F_u}(v)=2^{n+k}
	\]
	for some even $k>m$ by Definition~\ref{def:vectorial-function}(iv).
	By Eq.~\eqref{eq:lower-bound-max}, we have $2^k\geq 2^m+2$, which implies that $k\geq 2\lfloor\frac m2\rfloor+2=2\lfloor\frac n4\rfloor+2$.
	% if $2\nmid m$; $k\geq m+2$ if $2\mid m$.
	Hence
	\[
		\N_F
		\le 2^{n-1}-\frac12\cdot 2^{m+\lfloor\frac n4\rfloor+1}
		\leq 2^{n-1}-2^{\lfloor\frac{3n}{4}\rfloor}.\qedhere
	\]
\end{proof}

\begin{remark}
	The bound in Theorem~\ref{thm:bd1} is still true
	 for non-plateaued functions, but it may be weaker than the bound given in \cite[Conjecture~1]{xie}.
\end{remark}

\begin{lemma}\label{lem:bound-same-image}
Let $F:\F_{2^n}\rightarrow \F_{2^n}$. Then 
	\[
		\#\{(x,y)\in\F_{2^n}\times\F_{2^n}:F(x)=F(y)\}
		\ge\frac{2^{2n}}{\#\Im(F)}.
	\]
\end{lemma}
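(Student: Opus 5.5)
This is the standard Cauchy–Schwarz bound on collision counts, applied to the fibres of $F$. For each $z\in\Im(F)$, let $n_z:=\#F^{-1}(z)=\#\{x\in\F_{2^n}: F(x)=z\}$.

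First, partition the pairs according to their common value. A pair $(x,y)$ satisfies $F(x)=F(y)$ exactly when both coordinates lie in the same fibre $F^{-1}(z)$. Hence
\[
	\#\{(x,y)\in\F_{2^n}\times\F_{2^n}:F(x)=F(y)\}=\sum_{z\in\Im(F)} n_z^2 .
\]

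Second, the fibres over $\Im(F)$ partition the domain, so
\[
	\sum_{z\in\Im(F)} n_z = 2^n .
\]

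Third, apply the Cauchy--Schwarz inequality, or equivalently the convexity of $t\mapsto t^2$, to the $\#\Im(F)$ numbers $n_z$:
\[
	\Bigl(\sum_{z\in\Im(F)} n_z\Bigr)^2\le \#\Im(F)\sum_{z\in\Im(F)} n_z^2 .
\]
Substituting the first two identities gives $2^{2n}\le \#\Im(F)\cdot\#\{(x,y):F(x)=F(y)\}$. Dividing by $\#\Im(F)$, which is at least $1$, yields the claimed inequality.

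No step is a real obstacle. The only points to note are that the sums run over $\Im(F)$ rather than all of $\F_{2^n}$, so that the count of terms is exactly $\#\Im(F)$, and that equality holds precisely when all fibres have the same size $2^n/\#\Im(F)$.
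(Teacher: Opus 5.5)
Your proof is correct and is the same as the paper's: write the collision count as $\sum_{b\in\Im(F)}\bigl(\#F^{-1}(b)\bigr)^2$ and apply Cauchy--Schwarz using $\sum_{b\in\Im(F)}\#F^{-1}(b)=2^n$. Your final expression $2^{2n}/\#\Im(F)$ is the right one; the paper's last display writes $2^{n}/\#\Im(F)$, which is a typo.
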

\begin{proof}
For any $b\in\F_{2^n}$, denote by $\#F^{-1}(b)$ the size of the pre-image of $b$ by $F$.
By the Cauchy-Schwarz inequality,  we have
	\[
		\text{LHS}
		=\sum_{b\in\Im(F)}\bigl(\#F^{-1}(b)\bigr)^2
		\ge \frac{\biggl(\sum\limits_{b\in\Im(F)}\#F^{-1}(b)\biggr)^2}{\#\Im(F)}
		=\frac{2^{n}}{\#\Im(F)}.
		\qedhere
	\]
\end{proof}

We now give a bound which depends on the cardinality of the image set of $F$.

\begin{theorem}\label{bd2}
	Assume that $\ell(n)>m$,  $\sigma\circ F=F\circ \sigma$ and $\#S_F=2^m$.
	Set
	\[
		T:=\#\{u\in\F_{2^m}^*:W_{F_u}(0)\neq0\}.
	\]
	If $T\neq0$, then
	
	\[
		\N_F
		\le 2^{n-1}-\frac{1}{2}\sqrt{\frac{2^{3m}}T\biggl(\frac{2^{3m}}{\#\Im(F)}-2^{m+1}+1\biggr)}.
	\]
\end{theorem}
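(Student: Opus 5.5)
The plan is to bound $\max_{u\in\F_{2^m}^*}W_{F_u}(0)^2$ from below, using a second-moment identity at the single point $v=0$ together with Lemma~\ref{lem:bound-same-image}. The nonlinearity bound then follows, because $\N_F\le 2^{n-1}-\frac12\max_{u\neq0,v}|W_{F_u}(v)|$ and restricting to $v=0$, $u\in\F_{2^m}^*$ only decreases the maximum.

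\emph{Step 1 (second moment over all $u$).} Orthogonality of additive characters gives
\[
\sum_{u\in\F_{2^n}}W_{F_u}(0)^2=\sum_{x,y\in\F_{2^n}}\sum_{u\in\F_{2^n}}(-1)^{\Tr_{2^n/2}(u(F(x)+F(y)))}=2^n\,\#\{(x,y):F(x)=F(y)\}.
\]
By the Cauchy--Schwarz argument of Lemma~\ref{lem:bound-same-image}, the number of such pairs is at least $(2^n)^2/\#\Im(F)$. Note that the displayed conclusion in that proof has a typo; the correct value of the square of the sum of the fibre sizes is $2^{2n}$. Hence the left-hand side is at least $2^{3n}/\#\Im(F)=2^{6m}/\#\Im(F)$.

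\emph{Step 2 (remove the known terms).} By Theorem~\ref{thm:f-x2-SF}, the hypotheses $\ell(n)>m$, $\sigma\circ F=F\circ\sigma$ and $\#S_F=2^m$ give $S_F=\F_{2^m}$. The terms split as follows.
\begin{itemize}
\item For the $2^n-2^m$ elements $u\notin\F_{2^m}$, $F_u$ is bent, so each contributes $W_{F_u}(0)^2=2^n$.
\item The term $u=0$ contributes $2^{2n}$.
\end{itemize}
Subtracting these contributions, and using $2^{2n+1}-2^{n+m}=2^{3m}(2^{m+1}-1)$, we get
\[
\sum_{u\in\F_{2^m}^*}W_{F_u}(0)^2\ \ge\ \frac{2^{6m}}{\#\Im(F)}-2^{2n}-2^n(2^n-2^m)=2^{3m}\Bigl(\frac{2^{3m}}{\#\Im(F)}-2^{m+1}+1\Bigr).
\]

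\emph{Step 3 (pigeonhole).} Only $T$ of the summands in Step 2 are nonzero, and $T\neq0$ by hypothesis. Therefore
\[
\max_{u\in\F_{2^m}^*}W_{F_u}(0)^2\ge\frac{2^{3m}}{T}\Bigl(\frac{2^{3m}}{\#\Im(F)}-2^{m+1}+1\Bigr).
\]
Taking square roots and inserting this into the definition of $\N_F$ gives the claimed inequality. If the right-hand side is negative, the bound is vacuous, which is harmless.

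The only delicate point is Step 2: the subtraction relies on knowing exactly which components are bent, and this is precisely where $S_F=\F_{2^m}$ from Theorem~\ref{thm:f-x2-SF} enters. The remaining steps are routine character-sum bookkeeping.
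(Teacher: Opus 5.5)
Your proof is correct and takes the same route as the paper: the identity $\sum_{u\in\F_{2^n}}W_{F_u}(0)^2=2^n\#\{(x,y):F(x)=F(y)\}$ combined with Lemma~\ref{lem:bound-same-image}, subtraction of the $2^n-2^m$ bent terms (using $S_F=\F_{2^m}$) and of the $u=0$ term, then averaging over the $T$ nonzero summands. Your bookkeeping is cleaner than the paper's: you use the correct $2^{2n}$ in the Cauchy--Schwarz step, and you avoid the misdirected last inequality in the paper's final displayed chain.
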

\begin{proof}
	By Lemma~\ref{lem:bound-same-image}, we have
	\[
		\sum_{u\in\F_{2^n}}W_{F_u}^2(0)
		=2^n\#\{(x,y)\in\F_{2^n}\times\F_{2^n}:F(x)=F(y)\}
		\ge\frac{2^{3n}}{\#\Im(F)}.
	\]
	Meanwhile,

	\[
		\sum_{u\in\F_{2^n}}	W_{F_u}^2(0)
		=\sum_{u\in\F_{2^n}\setminus\F_{2^m}}W_{F_u}^2(0)+\sum_{u\in\F_{2^m}}W_{F_u}^2(0)
		=2^n(2^n-2^m)+\sum_{u\in\F_{2^m}}W_{F_u}^2(0).
	\]
	Thus
	\[
		\sum_{u\in\F_{2^m}^*}W_{F_u}^2(0)\geq \frac{2^{3n}}{\#\Im(F)}-2^{2n+1}+2^{n+m}
	\]
	and
	\begin{align*}
		\max_{u\in\F_{2^m}^*,v\in\F_{2^n}}W^2_{F_u}(v)&
		\ge\max_{u\in\F_{2^m}^*}W^2_{F_u}(0)
		\ge\frac1T\sum_{u\in\F_{2^m}^*}W_{F_u}^2(0)\\&
		\ge\frac1T\biggl(2^n(2^n-2^m)+\sum_{u\in\F_{2^m}}W_{F_u}^2(0)\biggr).
	\end{align*}
We thus get the desired bound for $\N_F$.
\end{proof}

\begin{remark}
Clearly $T\leq 2^m-1$, hence Theorem~\ref{bd2} implies that
\[
		\N_F\leq 2^{n-1}-\frac{1}{2}\sqrt{\frac{2^{3m}}{2^m-1}\biggl(\frac{2^{3m}}{\#\Im(F)}-2^{m+1}+1\biggr)},
	\]
	which is smaller than Carlet's bound $\displaystyle 2^{n-1}-\sqrt{\frac1{2^n-1}\biggl(\frac{2^{3n-2}}{\#\Im(F)}-2^{2n-2}\biggr)}$ in \cite[Proposition~2]{car}.
	Moreover, this bound is also smaller than $2^{n-1}-\frac{1}{2}\sqrt{2^n(2^m+2)}$ given by Theorem~\ref{thm:bd1} when $\#\Im(F)\leq \frac{2^{3n}}{3\cdot2^{2n}-2^{n+1}}$.
\end{remark}

The following bound can be found in \cite{car}, we will provide a whole proof for completeness.

\begin{theorem}\label{thm:bound-delta}
	For any non-injective vectorial function $F:\F_{2^n}\rightarrow\F_{2^n}$, we have
	\[
		\delta_F\geq \left\lceil\frac{2^n}{\#\Delta}\biggl(\frac{2^n}{\#\Im(F)}-1\biggr)\right\rceil,
	\]
	where
	\[
		\Delta=\{x+y:(x,y)\in\F_{2^n}\times\F_{2^n},x\neq y,F(x)=F(y)\}.
	\]
\end{theorem}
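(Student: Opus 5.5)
We prove the bound by double counting the set of collisions
\[
	C:=\{(x,y)\in\F_{2^n}\times\F_{2^n}: x\neq y,\ F(x)=F(y)\}.
\]
First we bound $\#C$ from below. By Lemma~\ref{lem:bound-same-image}, the number of all pairs $(x,y)$ with $F(x)=F(y)$ is at least $2^{2n}/\#\Im(F)$. Note that the last displayed step in that proof should read $2^{2n}/\#\Im(F)$, since $\sum_b\#F^{-1}(b)=2^n$. Removing the $2^n$ diagonal pairs gives
\[
	\#C\ \ge\ \frac{2^{2n}}{\#\Im(F)}-2^n=2^n\biggl(\frac{2^n}{\#\Im(F)}-1\biggr).
\]
Because $F$ is non-injective, $C\neq\emptyset$, so $\Delta$ is nonempty and the quotient below makes sense.

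Next we bound $\#C$ from above. We partition $C$ according to the difference $a=x+y$, which lies in $\Delta\subseteq\F_{2^n}^*$. For a fixed $a\in\Delta$, a pair $(x,y)\in C$ with $x+y=a$ is determined by $x$, namely $y=x+a$. The condition $F(x)=F(x+a)$ says exactly that $x$ is a solution of $F(x+a)+F(x)=0$. By the definition of $\delta_F$ (taking $b=0$ and $a\ne0$), the number of such $x$ is at most $\delta_F$. Hence
\[
	\#C=\sum_{a\in\Delta}\#\{x: F(x+a)+F(x)=0\}\le \#\Delta\cdot\delta_F.
\]

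Combining the two estimates gives $\delta_F\ge \frac{2^n}{\#\Delta}\bigl(\frac{2^n}{\#\Im(F)}-1\bigr)$. Since $\delta_F$ is an integer, we may take the ceiling.

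The only points requiring care are that $0\notin\Delta$, so that each $a\in\Delta$ is admissible in the definition of $\delta_F$, and the corrected normalization in Lemma~\ref{lem:bound-same-image}. With those checked, the argument is a direct count and has no real obstacle.
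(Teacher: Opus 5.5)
Your proof is correct and is essentially the paper's argument. The paper likewise uses $\sum_{a\neq 0}\delta_{a,0}=\#\{(x,y):F(x)=F(y)\}-2^n$, notes that $\delta_{a,0}=0$ for nonzero $a\notin\Delta$, and bounds $\delta_F\ge\max_a\delta_{a,0}$ below by the average over $\Delta$; this is exactly your partition of the off-diagonal collisions by $x+y$. You are also right that the final expression in the proof of Lemma~\ref{lem:bound-same-image} should read $2^{2n}/\#\Im(F)$, not $2^{n}/\#\Im(F)$.
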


\begin{proof}
	Denote by
	\[
		\delta_{a,b}:=\#\{x\in\F_{2^n} : F(x+a)+F(x)=b\}.
	\]
	Then for any nonzero $a\notin\Delta$, $\delta_{a,0}=0$.
	By Definition~\ref{def:vectorial-function} (vi), we have
	\[
		\delta_F
		\geq\max_{a\in\F_{2^n}^*}\delta_{a,0}
		\geq\frac{\sum_{a\in\F_{2^n}^*}\delta_{a,0}}{\#\Delta}=\frac{\{(x,y)\in\F_{2^n}\times\F_{2^n}:F(x)=F(y)\}-2^n}{\#\Delta}.
	\]
	The bound then can be obtained by Lemma~\ref{lem:bound-same-image}.
\end{proof}
\begin{remark}
	From the definition of $\Delta$, if $\ell(n)>m$, $\sigma\circ F=F\circ \sigma$ and $\#S_F=2^m$, then
	\begin{align*}
		\#\Delta & \leq \frac{1}{2}\#\{(x,y)\in\F_{2^n}\times\F_{2^n} : F(x)=F(y)\}-2^{n-1}\\
		&=\frac{1}{2^{n+1}}\sum_{x,y,a\in\F_{2^n}}(-1)^{\Tr_{2^n/2}(a(F(x)+F(y)))}
		-2^{n-1}\\
		&=\frac{1}{2^{n+1}}\sum_{a\in\F^*_{2^n}}W_{F_a}^2(0)
		=\frac{1}{2^{n+1}}\sum_{a\in\F^*_{2^m}}W_{F_a}^2(0)+2^{n-1}-2^{m-1}.
	\end{align*}
	In \cite[Theorem~7]{PPMB2018}, we know that $W_{F_a}(0)=0$ for all $a\in\F_{2^m}^*$ if $s=\gcd(d_1,2^m-1)=1$.
	Then $\#\Delta\leq 2^{n-1}-2^{m-1}$ and
	% \begin{equation}\label{bde}
	\[
		\delta_F\geq\left\lceil\frac{2^n(2^n-\#\Im(F))}{(2^{n-1}-2^{m-1})\#\Im(F)}\right\rceil\geq \left\lceil\frac{2^{n+1}}{\#\Im(F)}\right\rceil-2.
	\]
% \end{equation}
\end{remark}

We next give the cardinality of the image set of binomial vectorial function $F(x)=x^{d_1}+x^{d_2}\in\F_{2^n}[x]$.

\begin{theorem}\label{thm:size-image-F}
Suppose $\ell(n)>m$, $\sigma\circ F=F\circ \sigma$, $\#S_F=2^m$ and $\wt_2(d_1)=\wt_2(d_2)=2$. Set $s=\gcd(d_1,d_2, N)>1$ and $\F_{2^n}^*=\langle\alpha\rangle$.
	Then
	\[
		\#\Im(F)=\frac{(2^m-1)c}{s}+1,
	\]
	where $c=\#\{F(\alpha^i)^{(2^m-1)/s}:\ i=1,2,\ldots,2^m\}$.
\end{theorem}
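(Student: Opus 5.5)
The plan is to exploit the factorization of $F$ provided by Theorem~\ref{thm:d2-d1-s-g-1}: cut $\F_{2^n}^*$ into cosets of $\F_{2^m}^*$, and check that on each coset $F$ is a fixed nonzero element times a fixed subgroup of order $(2^m-1)/s$. Counting distinct cosets of that subgroup then gives the formula. The hypotheses $\ell(n)>m$, $\#S_F=2^m$, $\wt_2(d_1)=\wt_2(d_2)=2$ and $s>1$ are exactly those of Theorem~\ref{thm:d2-d1-s-g-1}. So, as in the discussion after that theorem, I will write
\[
d_2-d_1=(2^m-1)k,\qquad \gcd(k,2^m+1)=1,\qquad s=\gcd(d_1,2^m-1),
\]
and hence $F(x)=x^{d_1}\bigl(1+x^{(2^m-1)k}\bigr)$.

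\emph{Step 1 (behaviour on a coset).} Every $x\in\F_{2^n}^*$ can be written as $x=y\alpha^i$ with $y\in\F_{2^m}^*$ and $0\le i\le 2^m$, since $[\F_{2^n}^*:\F_{2^m}^*]=2^m+1$. Because $y^{2^m-1}=1$, we get $x^{(2^m-1)k}=\alpha^{i(2^m-1)k}$, and therefore
\[
F(y\alpha^i)=y^{d_1}F(\alpha^i).
\]
The map $y\mapsto y^{d_1}$ on the cyclic group $\F_{2^m}^*$ has image the unique subgroup $H$ of order $(2^m-1)/\gcd(d_1,2^m-1)=(2^m-1)/s$. Hence
\[
F(\alpha^i\F_{2^m}^*)=F(\alpha^i)H.
\]

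\emph{Step 2 (zeros of $F$).} We have $F(x)=0$ with $x\neq0$ iff $x^{(2^m-1)k}=1$. The element $x^{2^m-1}$ lies in the group of $(2^m+1)$-th roots of unity, and $\gcd(k,2^m+1)=1$. So the condition is equivalent to $x^{2^m-1}=1$, i.e.\ $x\in\F_{2^m}^*$, which is the coset $i=0$. Consequently $F(\alpha^i)\neq0$ for $1\le i\le 2^m$, and
\[
\Im(F)=\{0\}\cup\bigcup_{i=1}^{2^m}F(\alpha^i)H .
\]

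\emph{Step 3 (counting).} Since $H$ is a subgroup of the cyclic group $\F_{2^n}^*$, the sets $F(\alpha^i)H$ are cosets of $H$, so any two of them are equal or disjoint. Moreover $H$ is precisely the set of $w$ with $w^{(2^m-1)/s}=1$. Thus $aH=bH$ iff $a^{(2^m-1)/s}=b^{(2^m-1)/s}$, and the number of distinct cosets among the $F(\alpha^i)H$ is exactly $c$. Each coset contributes $(2^m-1)/s$ elements and $0$ adds one more, giving
\[
\#\Im(F)=\frac{(2^m-1)c}{s}+1 .
\]

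The only delicate points are Step 2 (the role of $\gcd(k,2^m+1)=1$ in locating the zeros) and identifying $H$ with the kernel of $w\mapsto w^{(2^m-1)/s}$. Both are routine facts about cyclic groups, so I expect no real obstacle once Theorem~\ref{thm:d2-d1-s-g-1} is invoked. The hypothesis $\sigma\circ F=F\circ\sigma$ is not needed beyond what that theorem already uses.
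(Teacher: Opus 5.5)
Your proposal is correct and follows essentially the same route as the paper. Both arguments decompose $\F_{2^n}^*$ into the cosets $\alpha^i\F_{2^m}^*$, write $F(\alpha^i\F_{2^m}^*)=F(\alpha^i)D$ with $D$ the subgroup of order $(2^m-1)/s$, and count distinct cosets via the $(2^m-1)/s$-th power map; your Step~2 (using $\gcd(k,2^m+1)=1$ to show that $F$ vanishes exactly on $\F_{2^m}$) spells out a point the paper only asserts.
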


\begin{proof}
	By Theorems~\ref{thm:f-x2-SF},\ref{thm:min-wt-is-m} and \ref{thm:d2-d1-s-g-1}, we have $d_2-d_1=(2^m-1)k$, $\gcd(k,2^m+1)=1$, and then $\Ker(F)=\F_{2^m}$.
	If $x\in\alpha^i \F_{2^m}^*$, then
	\[
		F(x)=x^{d_1}(1+x^{d_2-d_1})
		=x^{d_1}(1+x^{(2^m-1)k})
		=x^{d_1}f(\alpha^{(2^m-1)i}),
	\]
	where $f(x)=1+x^k$.
	Let
	\[
		D=\langle\alpha^{(2^m+1)s}\rangle\subseteq\F_{2^m}^*
	\]
	be the image of $x^{d_1}:\F_{2^m}^*\mapsto D$.
	Then $x^{d_1}$ maps each coset $\alpha^i\F_{2^m}^*$ onto $\alpha^{d_1i}D$, and the image of the map
	\[
		F:\alpha^i\F_{2^m}^*\mapsto \alpha^{d_1i}f(\alpha^{(2^m-1)i})D=F(\alpha^i)D
	\]
	is the set $\{0\}$ or a coset of $D$.
	Note that $F(\alpha^i)D=F(\alpha^j)D$ if and only if $F(\alpha^i)^{(2^m-1)/s}=F(\alpha^j)^{(2^m-1)/s}$. Hence,
	\[
		\#\Im(F)=1+c\#D=1+\frac{(2^m-1)c}{s}.
	\]
	Therefore, the proof is done.
\end{proof}

In particular, for $d_1=1,d_2=2^{m-1}(2^m+1)$ and $d_1=2^l+1,d_2=2^m+2^l$, we can determine $\#\Im(F)$ explicitly.

\begin{theorem}
	The image of $F(x)=x^{2^l+1}+x^{2^l+2^m}$ has size
	\[
		\#\Im(F)=1+\frac{2^n-2^m}{\gcd(2^l+1,2^m-1)}
		=\begin{cases}
			2^n-2^m+1,&\text{if}\ v_2(m)\le v_2(l);\\
			(2^n-2^m)/3+1,&\text{if}\ v_2(m)>v_2(l).
		\end{cases}
	\]
	In particular, the image of $F(x)=x+x^{2^{m-1}(2^m+1)}$ has size $\#\Im(F)=2^n-2^m+1$.
\end{theorem}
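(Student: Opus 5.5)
The plan is a direct fibre count rather than an appeal to Theorem~\ref{thm:size-image-F}. The reason is that here $d_2-d_1=2^m-1$ and the hypotheses $\ell(n)>m$, $s>1$ need not hold. The starting point is the factorization
\[
	F(x)=x^{2^l}\bigl(x+x^{2^m}\bigr)=x^{2^l}\Tr_{2^n/2^m}(x).
\]
From it, $F(x)=0$ if and only if $x\in\F_{2^m}$. So the value $0$ accounts for $1$ in the count, and it remains to count $F(\F_{2^n}\setminus\F_{2^m})$, all of whose values are nonzero.

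Fix $w_0\in\F_{2^n}$ with $\Tr_{2^n/2^m}(w_0)=1$ and let $W=w_0+\F_{2^m}$ be the set of elements of relative trace $1$. Every $x\notin\F_{2^m}$ can be written uniquely as $x=tw$ with $t=\Tr_{2^n/2^m}(x)\in\F_{2^m}^*$ and $w\in W$. Conversely every such product has trace $t\neq0$, so this gives a bijection $\F_{2^m}^*\times W\to\F_{2^n}\setminus\F_{2^m}$. In these coordinates $F(tw)=t^{2^l+1}w^{2^l}$, using $t^{2^m}=t$ and $\Tr(w)=1$.

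Next I recover the data from $y=F(tw)$. Since $w^{2^l}$ also has relative trace $1$ and $t^{2^l+1}\in\F_{2^m}$, applying $\Tr_{2^n/2^m}$ gives $y+y^{2^m}=t^{2^l+1}$. Hence $y$ determines $t^{2^l+1}$, then $w^{2^l}=y/t^{2^l+1}$, and hence $w$. So two pairs $(t,w),(t',w')$ have the same image exactly when $w=w'$ and $t^{2^l+1}=t'^{2^l+1}$. Each fibre therefore has exactly $g:=\gcd(2^l+1,2^m-1)$ elements, the number of $(2^l+1)$-th roots of unity in $\F_{2^m}^*$. This yields
\[
	\#\Im(F)=1+\frac{2^n-2^m}{g}.
\]

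To evaluate $g$ I use the standard identity $\gcd(2^l+1,2^m-1)=(2^{\gcd(2l,m)}-1)/(2^{\gcd(l,m)}-1)$, which follows from $2^l+1=(2^{2l}-1)/(2^l-1)$ and $\gcd(2^a-1,2^b-1)=2^{\gcd(a,b)}-1$. Now $\gcd(2l,m)=\gcd(l,m)$ when $v_2(m)\le v_2(l)$, and $\gcd(2l,m)=2\gcd(l,m)$ otherwise. So $g=1$ in the first case and $g=2^{\gcd(l,m)}+1$ in the second. For the theorem's value $3$ one needs $\gcd(l,m)=1$ there, and I will check this detail carefully; the case $l=0$ gives $g=\gcd(2,2^m-1)=1$ directly. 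I expect this gcd bookkeeping to be the only delicate point: the statement's second case may need $\gcd(l,m)=1$, or else the answer should read $2^{\gcd(l,m)}+1$.

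For the final claim I use that squaring is a bijection of $\F_{2^n}$. With $F(x)=x+x^{2^{m-1}(2^m+1)}$ we get $F(x)^2=x^2+x^{2^m(2^m+1)}=x^2+x^{1+2^m}$, which is the $l=0$ case $x(x+x^{2^m})$. Therefore $\#\Im(F)=2^n-2^m+1$.
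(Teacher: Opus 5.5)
Your fibre count is correct and is essentially the paper's argument. The paper also writes $F(x)=x^{2^l}\Tr_{2^n/2^m}(x)$, sends $\F_{2^m}$ to $0$, and counts fibres over the rest. The difference is that it recovers $\Tr_{2^n/2^m}(x)$ up to an element of $H$ from $\Tr_{2^n/2^m}(y^{2^{n-l}})=\Tr_{2^n/2^m}(x)^{1+2^{n-l}}$, whereas you read off $t^{2^l+1}=\Tr_{2^n/2^m}(y)$ directly. Your bijection $\F_{2^m}^*\times W\to\F_{2^n}\setminus\F_{2^m}$ gives the fibre size exactly, in both directions. The paper only lists $\#H$ candidate preimages and leaves implicit that each one really is a preimage.

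On the gcd, do not hedge: your evaluation is right and the statement is wrong. When $v_2(m)>v_2(l)$ one has $\gcd(2^l+1,2^m-1)=2^{\gcd(l,m)}+1$, and this equals $3$ only when $\gcd(l,m)=1$. For example, $m=4$, $l=2$ (so $n=8$, within the range $0<l<m$) gives $\gcd(5,15)=5$. Then $\#\Im(F)=1+240/5=49$, not the paper's $1+240/3=81$.

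So your argument proves the following, which is as much as can be proved:
\begin{itemize}
\item the first equality $\#\Im(F)=1+(2^n-2^m)/\gcd(2^l+1,2^m-1)$;
\item the case $v_2(m)\le v_2(l)$;
\item the final claim for $x+x^{2^{m-1}(2^m+1)}$.
\end{itemize}
The second displayed case holds only under $\gcd(l,m)=1$. In general it should read $1+(2^n-2^m)/(2^{\gcd(l,m)}+1)$.
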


\begin{proof}
	Denote by $\Tr=\Tr_{2^n/2^m}$.
	Then $F(x)=x^{2^l}\Tr(x)$.
	If $x\in\F_{2^m}$, then $\Tr(x)=0$ and $F(x)=0$.

	If $F(x)=y\notin\F_{2^m}$, then $x\ne0,\Tr(x)=x^{-2^l}y$ and
	\[
		\Tr(y^{2^{n-l}})
		=\Tr\bigl(x\Tr(x)^{2^{n-l}}\bigr)
		=\Tr(x)^{1+2^{n-l}}
	\]
	belongs
	\[
		D:=\{u^{1+2^{n-l}}:u\in\F_{2^m}^*\}\subseteq \F_{2^m}^*.
	\]
	If $\Tr(y^{2^{n-l}})=u^{1+2^{n-l}}$, then $\Tr(x)=u\zeta$ for some $\zeta$ in
	\[
		H:=\{\zeta\in\F_{2^m}^*:\zeta^{1+2^{n-l}}=1\}\subseteq \F_{2^m}^*,
	\]
	which has size
	\[
		\#H=\gcd(2^{n-l}+1,2^m-1)
		=\gcd(2^l+1,2^m-1)
		=\begin{cases}
			1,&\text{if}\ v_2(m)\le v_2(l);\\
			3,&\text{if}\ v_2(m)>v_2(l).
		\end{cases}
	\]
	Thus
	\[
		x=\biggl(\frac{y}{\Tr(x)}\biggr)^{2^{n-l}}
		=\biggl(\frac{y}{u\zeta}\biggr)^{2^{n-l}}
		=\frac{y^{2^{n-l}}}{u^{2^{n-l}}\zeta}
	\]
	lies in the pre-image of $y$, which has $\#H$ possible values.
	Hence
	\[
		\#\Im(F)=1+\frac{2^n-2^m}{\#H}.
	\]
	If $l=0$, then $F$ is affine equivalent to $x+x^{2^{m-1}(2^m+1)}$, whose image has size $2^n-2^m+1$.
\end{proof}

By Theorems~\ref{bd2}-\ref{thm:size-image-F}, we have

\begin{theorem}\label{thm:imgbd}
Adopt  the same notations as Theorem~\ref{thm:size-image-F}.
	Then the nonlinearity and differential uniformity of $F$ satisfy
	\begin{align*}
		\N_F&\leq 2^{n-1}-\frac{1}{2}\sqrt{\frac{2^{3m}}{T}\cdot\biggl(\frac{2^{3m}s}{s+(2^m-1)c}-2^{m+1}+1\biggr)},\\
		\delta_F&\geq\ceil{\frac{2^n}{\#\Delta}\biggl(\frac{2^ns}{s+(2^m-1)c}-1\biggr)}.
	\end{align*}
\end{theorem}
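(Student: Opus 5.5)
This is a direct substitution: insert the image-size formula of Theorem~\ref{thm:size-image-F} into the general bounds of Theorem~\ref{bd2} and Theorem~\ref{thm:bound-delta}. The hypotheses of Theorem~\ref{thm:size-image-F} include $\ell(n)>m$, $\sigma\circ F=F\circ\sigma$ and $\#S_F=2^m$, which are exactly what Theorem~\ref{bd2} needs. So the proof reduces to an algebraic rewriting plus two side checks.

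First I take $\#\Im(F)=\frac{(2^m-1)c}{s}+1=\frac{s+(2^m-1)c}{s}$ from Theorem~\ref{thm:size-image-F}. Then
\[
\frac{2^{3m}}{\#\Im(F)}=\frac{2^{3m}s}{s+(2^m-1)c},\qquad \frac{2^n}{\#\Im(F)}=\frac{2^n s}{s+(2^m-1)c}.
\]

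For the nonlinearity, I plug the first expression into Theorem~\ref{bd2}. That theorem requires $T\neq 0$, which I will check or assume as implicit in the statement, since otherwise the formula is meaningless. If $T=0$, then all $W_{F_u}(0)=0$ for $u\in\F_{2^m}^*$, and the proof of Theorem~\ref{bd2} would yield $2^n(2^n-2^m)\ge 2^{3n}/\#\Im(F)$, an inequality about $\#\Im(F)$ rather than a nonlinearity bound. So I will state that the bound is understood under $T\neq0$.

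I should also double-check the numerical constants in Theorem~\ref{bd2}. Its proof passes from $\frac{2^{3n}}{\#\Im(F)}-2^{2n+1}+2^{n+m}$, factored as $2^{n}\bigl(\frac{2^{2n}}{\#\Im(F)}-2^{n+1}+2^m\bigr)$, to the displayed form. I will simply cite Theorem~\ref{bd2} as stated rather than reconcile the exponents.

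For the differential uniformity, Theorem~\ref{thm:bound-delta} needs $F$ to be non-injective. Here $F$ vanishes on $\F_{2^m}$, because $\Ker(F)=\F_{2^m}$ as shown in the proof of Theorem~\ref{thm:size-image-F}. Hence $F$ is not injective and $\Delta\neq\emptyset$ (indeed $\F_{2^m}^*\subseteq\Delta$). Substituting the second expression then gives the stated ceiling bound.

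The only potential obstacle is bookkeeping: confirming the hypotheses ($T\ne0$, non-injectivity) and that the constants in Theorem~\ref{bd2} transfer verbatim. No new mathematics is needed.
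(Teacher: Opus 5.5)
Your proposal is correct and follows the paper's argument: the paper likewise obtains the theorem by substituting $\#\Im(F)=\frac{s+(2^m-1)c}{s}$ from Theorem~\ref{thm:size-image-F} into Theorems~\ref{bd2} and~\ref{thm:bound-delta}. Your side checks go beyond what the paper writes: non-injectivity follows from $F|_{\F_{2^m}}=0$, and the constants do reconcile, since $2^n\bigl(\frac{2^{2n}}{\#\Im(F)}-2^{n+1}+2^m\bigr)=2^{3m}\bigl(\frac{2^{3m}}{\#\Im(F)}-2^{m+1}+1\bigr)$ for $n=2m$. Moreover, $T\neq0$ holds automatically rather than having to be assumed: $T=0$ would give $\#\{(x,y):F(x)=F(y)\}=2^{n+1}-2^m$, forcing $F$ to be injective off $\F_{2^m}$, which fails because $F(\alpha^i y)=F(\alpha^i)y^{d_1}$ and $y\mapsto y^{d_1}$ is $s$-to-$1$ on $\F_{2^m}^*$ with $s>1$.
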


\begin{remark}
	% From Theorem~\ref{thm:imgbd}, we get
	Since $s=\gcd(d_1,2^m-1)\geq3$ and
	\[
		s+(2^m-1)c\leq s+2^m(2^m-1)<(2^m+1)(2^m-1)=2^n-1,
	\]
	we have
	\begin{align*}
		\frac{2^{3m}}{T}&\cdot\biggl(\frac{2^{3m}s}{s+(2^m-1)c}-2^{m+1}+1\biggr)\\
		>{}&\frac{2^{3m}}{2^m-1}\biggl(\frac{2^{3m}s}{2^n-1}-2^{m+1}+1\biggr)
		>\frac{2^{2n}(s-2)+2^{n+m}}{2^m-1}\\
		\ge{}&\frac{2^{2n}+2^{n+m}}{2^m-1}>2^n(2^m+2).
	\end{align*}
	% \[
	% 	\frac{\frac{2^{3n}s}{(2^m-1)c+s}-2^{2n+1}+2^{n+m}}{T}>
	% 	\frac{\frac{2^{3n}s}{(2^m+1)(2^m-1)}-2^{2n+1}+2^{n+m}}{2^m-1}>
	% 	\frac{2^{2n}(s-2)+2^{n+m}}{2^m-1}
	% \]
	% and $s=\gcd(d_1,2^m-1)\geq3$.
	Thus the bound of $\N_F$ in Theorem~\ref{thm:imgbd} is smaller than $2^{n-1}-\frac{1}{2}\sqrt{2^n(2^m+2)}$.
	%From Theorem~\ref{thm:pott},  $s=\gcd(2^i+1,2^m-1)=1$ if $2\nmid \fracm{\gcd(m,i)}$, or $=2^{\gcd(m,i)}+1$ if $2\mid \fracm{\gcd(m,i)}$.
	Moreover, we get
	\[
		\delta_F
		\geq\ceil{\frac{2^ns}{s+(2^m-1)c}-1}
		\geq\ceil{\frac{2^ns}{2^n-1}-1}
		\ge s-1.
	\]
	% \geq\left\lceil\frac{2^{2n}s-c2^n(2^m-1)}{c(2^m-1)(2^n-1)}\right\rceil\geq (s-1).\]
\end{remark}

\section{Conclusion}
In this paper, we use Stickelberger's Theorem to study the Walsh transform of binomial vectorial function $F(x)=x^{d_1}+x^{d_2}$, and show that  $F(x)$ is equivalent to $x^{2^m+1}$ when $\wt_2(d_1)=1$, and to $x^{2^i+1}+x^{2^m+2^i}$ when $\wt_2(d_1)=\wt_2(d_2)=2$ under a technical condition, where $0<i\leq m-1$. Moreover, we give the size of the image set of $F$, and then give the bounds on the nonlinearity and differential uniform of $F(x)$ by means of its image set size.

\end{document}